\definecolor{darkgreen}{rgb}{0,0.5,0}
\definecolor{darkblue}{rgb}{0,0,0.8}
\newcommand{\calS}{\ensuremath{\mathcal{S}}}
\newcommand{\calG}{\ensuremath{\mathcal{G}}}
\newcommand{\calE}{\ensuremath{\mathcal{E}}}
\newcommand{\calA}{\ensuremath{\mathcal{A}}}
\newcommand{\calC}{\ensuremath{\mathcal{C}}}
\newcommand{\calD}{\ensuremath{\mathcal{D}}}
\newcommand{\calM}{\ensuremath{\mathcal{M}}}
\newcommand{\ignore}[1]{}
\algnewcommand\algorithmicswitch{\textbf{switch}}
\algnewcommand\algorithmiccase{\textbf{case}}
\newcommand{\CONGEST}{\ensuremath{\mathsf{CONGEST}}\xspace}
\newcommand{\LOCAL}{\ensuremath{\mathsf{LOCAL}}\xspace}
\newcommand{\eps}{\varepsilon}
\newcommand{\poly}{\operatorname{\text{{\rm poly}}}}
\newcommand{\set}[1]{\left\{#1\right\}}
\newcommand{\hide}[1]{}
\newcommand{\FullOrShort}{short}
  \newcommand{\fullOnly}[1]{#1}
  \newcommand{\shortOnly}[1]{}
  \newcommand{\shortOnly}[1]{#1}
  \newcommand{\fullOnly}[1]{}
\newcommand{\para}[1]{\smallskip \noindent\textbf{#1}}
\newcommand{\proc}[1]{\text{\texttt{#1}}}
\renewcommand{\phi}{\varphi}
\definecolor{light-gray}{gray}{0.8}
\newtheorem{theorem}{Theorem}[section]
\newtheorem{lemma}[theorem]{Lemma}
\newtheorem{claim}[theorem]{Claim}
\newtheorem{definition}{Definition}[section]
\begin{document}

\title{Distributed Maximum Matching Verification in CONGEST}

\author{
Mohamad Ahmadi\\
\small University of Freiburg\\
\small mahmadi@cs.uni-freiburg.de
\and
Fabian Kuhn\\
\small University of Freiburg\\
\small kuhn@cs.uni-freiburg.de
}

\date{}

\maketitle

\begin{abstract}
  We study the maximum cardinality matching problem in a standard
  distributed setting, where the nodes $V$ of a given $n$-node network
  graph $G=(V,E)$ communicate over the edges $E$ in synchronous
  rounds. More specifically, we consider the distributed CONGEST
  model, where in each round, each node of $G$ can send an
  $O(\log n)$-bit message to each of its neighbors. We show that for
  every graph $G$ and a matching $M$ of $G$, there is a randomized
  CONGEST algorithm to \emph{verify} $M$ being a maximum
  matching of $G$ in time $O(|M|)$ and disprove it in time $O(D + \ell)$, where $D$ is the
  diameter of $G$ and $\ell$ is the length of a shortest augmenting path. We hope that our algorithm constitutes a
  significant step towards developing a CONGEST algorithm to
  \emph{compute} a maximum matching in time $\tilde{O}(s^*)$, where
  $s^*$ is the size of a maximum matching.
\end{abstract}

\setcounter{page}{0}
\thispagestyle{empty}

\section{Introduction and Related Work}
\label{sec:introduction}

For a graph $G=(V,E)$, a matching $M\subseteq E$ is a set of pairwise
disjoint edges and the maximum matching problem asks for a matching
$M$ of maximum possible cardinality (or of maximum possible weight if
the edges are weighted). Matchings have been at the center of the
attention in graph theory for more than a century (see, e.g.,~\cite{lovasz86}). 
Algorithmic problems dealing with the
computation of matchings are among the most extensively
studied problems in algorithmic graph theory. The problem of finding a
maximum matching is on the one hand simple enough so that it can be
solved efficiently \cite{edmonds65a,edmonds65b}, on the other hand the
problem has a rich mathematical structure and led to many important
insights in graph theory and theoretical computer science. Apart from
work in the standard sequential setting, the problem has been studied
in a variety of other settings and computational models. Exact or
approximate algorithms have been developed in areas such
as online algorithms (e.g., \cite{KarpVV90,EmekKW16}), streaming
algorithms (e.g., \cite{mcgregor05}), sublinear-time algorithms (e.g.,
\cite{ymh09,mansour13}), classic parallel algorithms (e.g.,
\cite{karp85,fischer93}), as well as also the recently popular massively
parallel computation model (e.g.,
\cite{czumaj2017round,assadi2019coresets,ghaffari2019sparsifying}). In
this paper, we consider the problem of verifying whether a given
matching is a maximum matching in a standard distributed setting,
which we discuss in more detail next.

\vspace{.2cm}
\para{Distributed maximum matching:} In the distributed context,
the maximum matching problem is mostly studied for networks in the
following synchronous message passing model. The network is modeled as
an undirected $n$-node graph $G=(V,E)$, where each node hosts a
distributed process and the processes communicate with each other over
the edges of $G$. As it is common practice, we 
identify the nodes with their processes and think of the nodes
themselves as the distributed agents. Time is divided into synchronous
rounds and in each round, each node $v\in V$ can perform some
arbitrary internal computation, send a message to each of its
neighbors in $G$, and receive the messages of the neighbors (round $r$ is assumed to start at time $r-1$ and end at time $r$). If the
messages can be of arbitrary size, this model is known as the \LOCAL
model \cite{linial92,peleg00}. In the more realistic \CONGEST model~\cite{peleg00},  in each round, each node can
send an arbitrary $O(\log n)$-bit message to each of its neighbors. 

\para{Our contribution:} As the main result of our paper, we give
a distributed maximum matching verification algorithm.
\begin{theorem}\label{thm:main}
  Given an undirected graph $G=(V,E)$ and a matching $M$
  of $G$, there is a randomized distributed \CONGEST model algorithm
  to test whether $M$ is a maximum matching.  If $M$ is a
  maximum matching, the algorithm verifies this in time
  $O(|M|)$, otherwise, the algorithm disproves it in
  time $O(D+\ell)$, where $D$ is the diameter of $G$ and
  $\ell$ is the length of a shortest augmenting path.
\end{theorem}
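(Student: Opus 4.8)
\para{Proof strategy.}
By Berge's theorem, $M$ is a maximum matching of $G$ if and only if $G$ contains no $M$-augmenting path, so the algorithm only has to decide whether an $M$-augmenting path exists and, when one does, to determine the length of a shortest one. The plan is therefore to run a synchronous breadth-first alternating search rooted simultaneously at all $M$-unmatched (``free'') vertices. Over time each vertex acquires a parity label: it is \emph{even} if it is reachable from a free vertex along an even-length alternating path (free vertices are even at level $0$) and \emph{odd} otherwise. In round $t$, the even vertices discovered at level $t$ extend the search along their non-matching edges; a newly reached matched vertex becomes odd and its $M$-partner becomes even at the next levels. If a non-matching edge is ever found between two even vertices whose alternating paths to free vertices are vertex-disjoint, these paths plus the edge form an $M$-augmenting path (of length the sum of the two levels plus one); if the two even endpoints lie in the same search tree, the edge closes an odd alternating cycle, i.e.\ a \emph{blossom}, which must be contracted, after which every blossom vertex becomes even at the level of the blossom's base and the search resumes.

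\para{CONGEST implementation.}
In the \LOCAL model the above is routine; the work is to realize it with $O(\log n)$-bit messages. I would propagate along the search only small tokens (the \ID of the originating free vertex together with a level counter) so that the parity labels and the forest of tree-parent pointers are maintained locally. When a blossom is detected on an edge $\{u,v\}$, its base is the nearest common ancestor of $u$ and $v$ in the search forest, which is found by walking up the parent pointers from $u$ and from $v$; the contraction is then carried out implicitly by propagating the new even-labels (and the reversal of parent pointers needed for later path reconstruction) around the blossom cycle, whose length is at most the current search level. Where the primitive operations need to compare large objects with small messages — e.g.\ testing whether two alternating paths are disjoint, or choosing which of several concurrently detected blossoms is processed first — I would use random fingerprints / hashing and randomized symmetry breaking; this is where the algorithm is randomized.

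\para{Correctness and round complexity.}
If the search stabilizes without ever exposing an augmenting-path edge, then by the standard analysis of the blossom algorithm no $M$-augmenting path exists and $M$ is maximum; this gives the positive case. For the running time there, observe that an alternating path from a free vertex uses each matching edge at most once, so the search forest has depth $O(|M|)$, the total number of blossom contractions is $O(|M|)$, and the per-contraction work (bounded by the current level) amortizes to $O(|M|)$ rounds overall, for a total of $O(|M|)$ rounds. For the negative case, run the search only to level $O(\ell)$: a shortest $M$-augmenting path is exposed at the level determined by $\ell$. Since $\ell$ is not known a priori and since every node must learn the (global) outcome, the search runs in parallel with a BFS tree of $G$ of depth $O(D)$; each node reports up this tree the smallest augmenting-path length it has witnessed so far, the root takes the running minimum, and the answer is broadcast back. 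Pipelining these one-value-per-round reports along the tree keeps the overhead $O(D)$, for a total of $O(D+\ell)$ rounds (and $\Omega(D)$ is clearly unavoidable, since a short augmenting path may be confined to a far corner of $G$).

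\para{Main obstacle.}
The crux is the \CONGEST simulation of blossom handling: many blossoms — possibly nested, possibly large, possibly forming in the same round — must be detected, their bases located, and all affected parity labels updated, using only $O(\log n)$-bit messages and without exceeding the $O(|M|)$ (resp.\ $O(D+\ell)$) round budget. Controlling the congestion of the search tokens and the contraction messages along shared edges, and arguing that the contraction work truly amortizes, is the technically demanding part; everything else (Berge's theorem, the BFS-tree aggregation, the depth bound via counting matching edges) is comparatively standard.
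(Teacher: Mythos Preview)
\para{Assessment.}
Your outer loop---doubling a radius parameter, running a bounded-depth alternating BFS from all free nodes, and aggregating the outcome over a BFS tree of depth $O(D)$---matches the paper. The gap is in the inner engine. You propose the classical blossom paradigm (detect odd cycle, find the base as a nearest common ancestor by walking up parent pointers, contract, relabel), and you assert that the contraction work amortizes to $O(|M|)$. In the \CONGEST model this is precisely the step that does not go through without a new idea. Many blossoms can form in the same round; their NCA walks share edges; blossoms nest to depth $\Theta(|M|)$; and the relabeling after a contraction can touch a region whose size is itself $\Theta(|M|)$. Neither your amortization claim nor the ``random fingerprints to test path disjointness'' suggestion comes with a mechanism that bounds the number of distinct messages crossing a single edge in a single round by $O(1)$. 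The paper's \Cref{sec:challenges} is essentially a discussion of why this direct approach stalls: a node that receives the BFS token over its matched edge cannot locally tell whether it was reached along a genuine alternating path or merely along an alternating \emph{walk} that looped through an odd cycle (cf.\ \Cref{fig:distinguish}).

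\para{What the paper actually does.}
The paper never contracts blossoms. Instead it builds the free node clustering and, within each cluster, runs a \emph{flow circulation} scheme: whenever two same-cluster nodes exchange tokens across an edge where neither is the other's predecessor, a unit of flow is injected; each node splits received flow among its predecessors (with a carefully chosen delay when both incident edges are unmatched); a node that ever forwards a non-integral total learns that it lies strictly inside a reachable odd cycle and deduces its ``other-parity'' distance, whereas a node that assembles the full unit is the stem and discards it. The congestion problem is handled by replacing real-valued flows with elements of $\mathbb{Z}/\gamma\mathbb{Z}$ for $\gamma=n^{O(1)}$ and randomizing the split, so that only a single $O(\log n)$-bit aggregate crosses each edge per round and ``integral'' becomes ``$\equiv 0$'' with high probability (\Cref{sec:congest-adapt}). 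This replaces your NCA/contraction machinery entirely and is where the randomness is actually used; it is the idea your proposal is missing.

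\para{A side remark.}
Note also that the paper only \emph{detects} the existence and length of a shortest augmenting path; it does not reconstruct the path. Your parent-pointer bookkeeping ``for later path reconstruction'' aims at a strictly stronger goal, which the paper explicitly leaves open.
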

Our main technical contribution is a distributed algorithm that, given
a matching $M$ and a parameter $x$, determines if there is an
augmenting path of length at most $x$ in $O(x)$ rounds of the \CONGEST
model. If there is an augmenting path of length at most $x$, the
algorithm identifies two free (i.e., unmatched) nodes $u$ and $v$
between which such a path exists. We note that if the algorithm can be
extended to also \emph{construct} an augmenting path of length at most
$x$ between $u$ and $v$ in time $\tilde{O}(x)$, it would directly lead
to an $\tilde{O}(s^*)$-round algorithm for computing a maximum
matching, where $s^*$ is the size of a maximum matching. The reason
for this follows from the classic framework of Hopcroft and
Karp~\cite{hopkarp}.  It is well-known that if we are given a matching
$M$ of size $s^*-k$ for some integer $k\geq 1$, there is an
augmenting path of length less than $2s^*/k$ \cite{hopkarp}. Hence, if
we can find such a path and augment along it in time linear in the
length of the path, we get a total time of $O(s^*\log s^*)$ by summing
over all values of $k$ from $1$ to $s^*$. The same approach has been
used in \cite{akr18} to compute a maximum matching in time
$O(s^*\log s^*)$ in bipartite graphs. While finding a shortest
augmenting path is quite straightforward in bipartite graphs, getting
an efficient \CONGEST model algorithm for general
graphs turns out to be much more involved. We therefore hope that our
algorithm for finding the length and the endpoints of some shortest
augmenting path provides a significant step towards also efficiently
constructing a shortest augmenting path in the \CONGEST model and
therefore to obtaining an $\tilde{O}(s^*)$-time \CONGEST algorithm to
find a maximum matching. Before we discuss our
algorithm, the underlying techniques and the main challenges in more
detail in \Cref{sec:overview}, we next give a brief summary of the
history of the distributed maximum matching problem.

\vspace{.2cm}
\para{Distributed maximal matching algorithms:}
While except for \cite{akr18}, there is no previous work on exact
solutions for the distributed maximum matching problem, there is a
very extensive and rich literature on computing approximate solutions
for the problem. The most basic way to approximate maximum matching is
by computing a \emph{maximal matching}, which provides a
$1/2$-approximation for the maximum matching problem. The work
on distributed maximal matching algorithms started with the classic
randomized parallel maximal matching and maximal independent set
algorithms from the 1980s~\cite{alon86,itai86,luby86}. While these
algorithms were originally described for the PRAM setting, they
directly lead to randomized $O(\log n)$-round algorithms in the
\CONGEST model. It was later shown by Ha\'{n}\'{c}kowiak,
Karo\'{n}ski, and Panconesi~\cite{hanckowiak98,hanckowiak99} that
maximal matching can also be solved deterministically in
polylogarithmic time in the distributed setting. The current best
deterministic algorithm in the \CONGEST model (and also in the \LOCAL
model) is by Fischer~\cite{rounding} and it computes a maximal
matching in $O(\log^2\Delta\log n)$ rounds, where $\Delta$ is the
maximum degree of the network graph $G$. At the cost of a higher
dependency on $\Delta$, the dependency on $n$ can be reduced and it
was shown by Panconesi and Rizzi~\cite{panconesi01} that a maximal
matching can be computed in $O(\Delta+\log^* n)$ rounds. The best
known randomized algorithm is by Barenboim et al.~\cite{barenboim12}
and it shows that (by combining with the result of \cite{rounding}) a
maximal matching can be computed in $O(\log\Delta) + O(\log^3\log n)$
rounds in the \CONGEST model. The known bounds in the \CONGEST model
are close to optimal even when using large messages. It is known that
there is no randomized
$o\big(\frac{\log\Delta}{\log\log\Delta}+\sqrt{\frac{\log n}{\log\log
    n}}\big)$-round
maximal matching algorithm in the \LOCAL model~\cite{kuhn16_jacm}. A
very recent result further shows that there are also no randomized
$o\big(\Delta + \frac{\log\log n}{\log\log\log n}\big)$-round
algorithm and no deterministic
$o\big(\Delta+\frac{\log n}{\log\log n}\big)$-round algorithms to
compute a maximal matching~\cite{linearinDeltaLower}.

\vspace{.2cm}
\para{Distributed matching approximation algorithms:}
There is a series of papers that target the distributed maximum
matching problem directly and that provide results that go beyond the
$1/2$-approximation achieved by computing a maximal matching. Most of
them are based on the framework of Hopcroft and Karp~\cite{hopkarp}:
after $O(1/\eps)$ iterations of augmenting along a (nearly) maximal
set of vertex-disjoint short augmenting paths, one is guaranteed to
have a $(1-\eps)$-approximate solution for the maximum matching
problem. The first distributed algorithms to use this approach are an
$O(\log^{O(1/\eps)}n)$-time deterministic \LOCAL algorithm for
computing a $(1-\eps)$-approximation in graphs of girth at least
$2/\eps-2$~\cite{czygrinow03} and an $O(\log^4 n)$-time deterministic
\LOCAL algorithm for computing a $2/3$-approximation in general
graphs~\cite{czygrinow2004}. The first approximation algorithms in the
\CONGEST model are by Lotker et al.~\cite{lotker08}, who give a
randomized algorithm to compute a $(1-\eps)$-approximate maximum
matching in time $O(\log n)$ for every constant $\eps>0$. For
bipartite graphs, the running time of the algorithm depends
polynomially on $1/\eps$, whereas for general graphs it depends
exponentially on $1/\eps$.\footnote{In the \LOCAL model, the algorithm
  can be implemented in time $O(\log(n)/\poly(\eps))$ also for general
  graphs. This was independently also shown in a concurrent paper by
  Nieberg~\cite{nieberg08}.} The algorithm was recently improved by
Bar Yehuda et al.~\cite{yehuda17}, who give an algorithm with time
complexity $O\big(\frac{\log\Delta}{\log\log\Delta}\big)$ for
computing a $(1-\eps)$-approximation. As in \cite{lotker08}, the time
depends polynomially on $1/\eps$ in bipartite graphs and exponentially
on $1/\eps$ in general graphs. Note that the time dependency on
$\Delta$ in \cite{yehuda17} matches the lower bound of
\cite{kuhn16_jacm}. In \cite{akr18}, Ahmadi et al.\ give a
deterministic
$O\big(\frac{\log\Delta}{\eps^2}+\frac{\log^2\Delta}{\eps}\big)$-round
\CONGEST maximum matching algorithm that has an approximation factor
of $(1-\eps)$ in bipartite graphs and an approximation factor of
$(2/3-\eps)$ in general graphs. Unlike the previous algorithms, the
algorithm of \cite{akr18} is not based on the framework of Hopcroft
and Karp. Instead, the algorithm first computes an almost optimal
fractional matching and it then rounds the fractional solution to an
integer solution by adapting an algorithm of \cite{rounding}. There
also exist deterministic distributed algorithms to
$(1-\eps)$-approximate maximum matching in polylogairhtmic
time~\cite{even15,focs17,derandomization}, these algorithm however
require the \LOCAL model. The algorithms of
\cite{akr18,focs17,derandomization} directly also work for the maximum
weighted matching problem. Other distributed algorithms that compute
constant-factor approximations for the weighted maximum matching
problem appeared in
\cite{wattenhofer04,hoepman06,lotker08,lotker09,rounding,yehuda17}.
We note that none of the existing approximation algorithms can be used
to solve the exact maximum matching problem in time $o(|E|)$ in the
\CONGEST model.

\vspace{.2cm}
\para{Additional related work:} Our result can also be seen in
the context of some recent interest in the complexity of computing
exact solutions to distributed optimization problems. In particular,
it was recently shown that several problems that are closely related
to the maximum matching problem have near-quadratic lower bounds in
the \CONGEST model. In \cite{censorhillel_disc17}, it is shown that
computing an optimal solutions to the maximum independent set and the
minimum vertex cover problem both require time $\tilde{\Omega}(n^2)$
in the \CONGEST model. In \cite{bachrach_podc19}, similar
$\tilde{\Omega}(n^2)$ lower bounds are proven for other problems, in
particular for computing an optimal solution to the minimum dominating
set problem and for computing a $(7/8+\eps)$-approximation for maximum
independent set. Consequently, for maximum independent set, minimum
vertex cover and minimum dominating set, the trivial $O(|E|)$-time
\CONGEST model algorithm is almost optimal. If our result can be extended to actually find the maximum matching in almost
linear time, it would show that this is not true for the
maximum matching problem.

\vspace{.2cm}
\para{Mathematical notation:} Before giving an outline of our
algorithm in \Cref{sec:overview}, we introduce some
graph-theoretic notation that we will use throughout the remainder of
the paper. A walk $W$ from node $u$ to a node $v$ in a graph $G=(V,E)$
is a sequence of nodes $\langle u=v_1, v_2, \dots , v_k=v \rangle$
such that for all $j<k$, $\set{v_j, v_{j+1}}\in E$.  A path $P$ is a
walk that is cycle-free, i.e., a walk where the nodes are pairwise
distinct.  Let $\mathbb{V}(W)$ denote the multi-set of the nodes in a
walk $W$ and let $|W|$ denote the length of the walk $W$, i.e.,
$|W|=|\mathbb{V}(W)|-1$.  For simplicity, we write $v\in W$ if
$v\in \mathbb{V}(W)$.  Moreover, we say an edge $e$ is on walk $W$ and
write $e\in W$ if $e$ is an edge between two consecutive nodes in
$W$. For two walks $W_1=\langle u_1,\dots,u_s\rangle$ and
$W_2=\langle v_1,\dots,v_t\rangle$ with $u_s=v_1$, we use
$W_1\circ W_2$ to denote the concatenation of the walks
$W_1$ and $W_2$.  Further, for a path
$P=\langle u_1, u_2, \dots , u_i, \dots , u_j, \dots , u_k\rangle$, we
use $P[u_i, u_j]$ to denote the consecutive subsequence of $P$
starting at node $u_i$ and ending at node $u_j$, i.e., the subpath of
$P$ from $u_i$ to $u_j$. We use parentheses instead of square brackets to exclude the starting or ending node from the subpath, e.g., $P(u_i, u_j]$, $P[u_i, u_j)$ or $P(u_i, u_j)$.


\section{Outline of Our Approach}
\label{sec:overview}

For a graph $G$, it is well-known that a matching $M$ is a maximum matching of $G$ if and only if there is no augmenting path in $G$ w.r.t.\ $M$. 
	By performing a broadcast/convergecast, the size of the given matching can be learnt by all nodes in the graph in time linear in $D$, the diameter of $G$. 
	After all nodes learn the size of the given matching, the algorithm looks for an augmenting path of length at most $r$ in phases, where $r$ is initially set to $D$ and it doubles from each phase to the next. 
	The algorithm stops as soon as either $r > 4 |M|$ or it detects a shortest augmenting path of length at most $r$. 
	Note that the length of an augmenting path cannot be more than $2|M|+1$. 
	Therefore, if the algorithm does not find a shortest augmenting path, then there is no augmenting path in $G$ with respect to $M$. 
	The efficiency of the algorithm depends on how fast one can detect the existence of a shortest augmenting path of length at most $\ell$ in $G$ for an integer $\ell$. 
	The following main technical result states that this central challenging task can be accomplish efficiently. 	
	\begin{lemma}\label{lem:path}
		Given an arbitrary graph $G$ and a matching $M$ of $G$, there is a randomized algorithm to detect whether there exists an augmenting path of length at most $\ell$ in $O(\ell)$ rounds of the \CONGEST model, with high probability. 
	\end{lemma}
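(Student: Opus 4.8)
The plan is to reduce the task to a single synchronized \emph{alternating breadth-first search} started simultaneously from every free (i.e., unmatched) node and run for $\Theta(\ell)$ rounds. For each node $v$ I would maintain $\mathrm{evenlevel}(v)$ and $\mathrm{oddlevel}(v)$, the lengths of a shortest $M$-alternating path from \emph{some} free node to $v$ ending in a matched, resp.\ unmatched, edge, together with a pointer to a free node realizing it. These are propagated layer by layer: a node at even level $2i$ pushes $\mathrm{oddlevel}=2i+1$ across its unmatched edges, and a node reached at odd level $2i+1$ pushes $\mathrm{evenlevel}=2i+2$ across its \emph{unique} matched edge (so matched-edge propagation is trivially congestion-free). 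After $O(\ell)$ rounds, every node whose relevant level is at most $\ell$ knows it. A shortest augmenting path of length at most $\ell$ then manifests as a \emph{bridge}: an edge $(u,w)$ such that the alternating path to $u$ and the alternating path to $w$ originate at two \emph{different} free nodes, are vertex-disjoint, and have level-sum plus one at most $\ell$; the two free endpoints are reported (and, if a global answer is needed, a constant-size signal is disseminated along the already-built BFS forest, or via the outer algorithm's $O(D)$-round broadcast).

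The first obstacle is \textbf{congestion}: across one edge as many as $\Theta(n)$ distinct sources could want to advance to the same node at the same distance, which does not fit in an $O(\log n)$-bit message. Since we only need to \emph{detect} one augmenting path, I would have each node keep, per parity, a single representative source — the one minimizing an independently drawn random priority among the sources reaching it at the current shortest distance — so each message carries one $O(\log n)$-bit token. The danger is that this pruning could destroy every short augmenting path, since both halves of a target path $P = s \leadsto t$ would have to survive as the chosen representatives all the way to the bridge. I would control this by conditioning on a fixed shortest augmenting path $P$ and arguing that, with probability bounded below by an inverse polynomial, the representatives propagated from $s$ and from $t$ follow $P$ and meet at its bridge; $O(\log n)$ independent repetitions with fresh randomness (still $O(\ell)$ rounds in total) then boost this to high probability.

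The second, and genuinely hard, obstacle is \textbf{blossoms}, and handling it is where I expect the bulk of the work to go. In a general graph the layered search above is not faithful: the alternating \emph{walk} reaching $v$ at some level may revisit nodes, and two searches meeting at an edge may fail to yield an augmenting path because their alternating walks share nodes inside a common odd cycle. I would detect a blossom whenever the exploration reaches the same node from the same source at both parities (or two same-source branches collide), take its base to be the lowest common ancestor of the two branches in the BFS forest, and let the search continue ``through'' the contracted blossom — in effect maintaining the Micali--Vazirani notions of even/odd level and tenacity, but in the layer-synchronous, distributed fashion above so that everything still fits in $O(\ell)$ rounds. The crux is then to verify that a candidate minimum-tenacity bridge has \emph{vertex-disjoint petals} (so that it really yields an augmenting path rather than living inside a single blossom): a naive double depth-first search is far too sequential for an $O(\ell)$-round budget, so instead I would run a randomized, depth-$O(\ell)$ marking procedure launched from the two bridge endpoints that explores the two petals in parallel, aborts a branch at its first collision, and uses fresh random node labels and a few $O(\log n)$-bit fingerprints to arbitrate collisions and keep per-edge traffic bounded; proving that this procedure both terminates within $O(\ell)$ rounds and certifies petal-disjointness with high probability is the technical heart of the argument.
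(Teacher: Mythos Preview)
Your proposal has two genuine gaps, one in each of the obstacles you identify.

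\textbf{Congestion/pruning.} Your plan is to keep one random-priority representative per node and argue that a \emph{fixed} shortest augmenting path $P$ survives with inverse-polynomial probability. This probability bound is unjustified and in general false: at each of the $\Theta(\ell)$ nodes along $P$ there may be many competing sources arriving at the same distance, and the events ``$s$ wins at node $i$'' are not independent in any useful way, so the survival probability of a fixed $P$ can be exponentially small in $\ell$. The paper takes the opposite tack: it lets each node commit \emph{deterministically} (by minimum ID) to a single cluster, and then proves a purely structural lemma (\Cref{lem:detection}) that among all shortest augmenting paths there is always one of maximal ``rank'', meaning it splits into two consecutive subpaths each entirely contained in one cluster. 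No randomness or repetition is needed for this part; the point is that you must argue about the \emph{existence} of a good path after pruning, not the survival of a prespecified one.

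\textbf{Blossoms / bireachability.} Your plan to detect blossoms, contract them, and then certify petal-disjointness by a randomized marking procedure is where you yourself locate the ``technical heart'', but the sketch does not come close to an $O(\ell)$-round \CONGEST\ algorithm: blossoms nest arbitrarily, contraction is a global operation, and per-bridge petal verification could touch $\Theta(\ell)$ nodes for each of many candidate bridges. The paper's mechanism is entirely different and avoids contraction altogether. Within each cluster it runs a \emph{flow circulation}: whenever two same-cluster neighbors send each other the cluster token (neither being the other's predecessor), a unit flow is generated on that edge, split along predecessor edges back toward the center, with carefully chosen delays. A node is strictly inside a reachable odd cycle (hence bireachable) iff it sees a \emph{fractional} aggregate flow; the stem of the cycle sees the full unit and discards it. To fit this in \CONGEST, flows are aggregated and taken over $\mathbb{Z}/\gamma\mathbb{Z}$ with random splits, so ``fractional'' becomes ``nonzero mod $\gamma$'' with high probability (\Cref{sec:congest-adapt}). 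This is the actual content of \Cref{lem:path}, and nothing in your proposal substitutes for it.
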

	Considering the above lemma, let us now study the time complexity of the algorithm. 
	First, let us assume that the given matching $M$ is a maximum matching. 
	Then, it takes $O(D)$ rounds to learn the size of the given matching and $O(|M|)$ rounds to look for shortest augmenting paths in phases since $\sum_{i = \log D}^{2 + \log |M|} O(2^i) \leq  O(|M|)$. 
	However, since $\Omega(D)$ is a lower bound for the size of a maximum matching, the overall time complexity of the algorithm to verify that $M$ is a maximum matching is $O(|M|)$. 	
	Now let us assume that $M$ is not a maximum matching and $\ell$ is the length of a shortest augmenting path. 
	Then, it takes $O(D)$ rounds to learn the size of the given matching and $O(\ell)$ rounds for the algorithm to look for and eventually detect a shortest augmenting path in phases. 
	Hence, the overall time complexity of the algorithm to disprove $M$ being a maximum matching is $O(\ell + D)$. 
	This overall implies \Cref{thm:main}.
		   
\vspace{-.1cm}
\subsection{Detecting a Shortest Augmenting Path: The Challenges}
\label{sec:challenges}

Let $G=(V,E)$ be a graph, let $M\subseteq E$ be a matching of $G$, and let $f\in V$ be a free node (i.e., an unmatched node). Assume that we want to find a shortest augmenting path $P$ connecting $f$ with another free node $f'$. If the graph $G$ is bipartite, such a path can be found by doing a breadth first search (BFS) along alternating paths from $f$. This works because in bipartite graphs, for every node $v$ on a shortest augmenting path $P$ connecting $f$ with another free node $f'$, the subpath $P[f,v]$ connecting $f$ and $v$ is also a shortest alternating path between $f$ and $v$. In \cite{vazirani13}, Vazirani calls this property, which holds in bipartite graphs, the \emph{BFS-honesty property}. If the BFS-honesty property holds, to find a shortest alternating path from a free node $f$ to a node $v$, it suffices to know shortest alternating paths from $f$ to all the nodes along this path. The BFS-honesty property does not hold in general graphs. A simple example that shows this is given in \Cref{fig:BFS-honesty}. The shortest alternating path connecting node $f$ with $u$ is of length $3$. The shortest alternating path connecting $f$ with $v$ is of length $5$ and it contains node $u$, however the subpath connecting $f$ with $u$ on the alternating path to $v$ is of length $4$.

\begin{figure}[h] 
  \centering	
  \begin{tikzpicture}
    \shade[shading=ball, ball color=black] (-3,0) circle (.08);
    \node at (-3, .3) {$f$};
    \shade[shading=ball, ball color=black] (-2,0) circle (.08);
    \shade[shading=ball, ball color=black] (-1,0) circle (.08);
    \shade[shading=ball, ball color=black] (0,0) circle (.08);
    \node at (0, .3) {$u$};
    \shade[shading=ball, ball color=black] (1,0) circle (.08);
    \shade[shading=ball, ball color=black] (2,0) circle (.08);
    \shade[shading=ball, ball color=black] (-.5,-.7) circle (.08);
    \node at (1, .3) {$v$};
    \draw[dotted,line width=.4mm] (-3,0) - - (-2,0);
    \draw[line width=.4mm] (-2,0) - - (-1,0);
    \draw[dotted, line width=.4mm] (-1,0) - - (0,0);
    \draw[dotted, line width=.4mm] (0,0) - - (1,0);
    \draw[line width=.4mm] (1,0) - - (2,0);
    \draw[dotted, line width=.4mm] (-1,0) - - (-.5,-.7);
    \draw[line width=.4mm] (-.5,-.7) - - (0,0);
    \node at (-0.5, -1.0) {$w$};
  \end{tikzpicture}	
  \caption{The BFS-honesty property does not hold in general graphs (solid lines depict edges in the matching, dotted lines depict edges not in the matching). }
  \label{fig:BFS-honesty}
\end{figure}
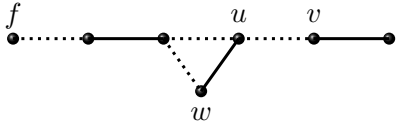
\vspace{-.2cm}
To show the use of the DFS-honesty property in the distributed setting more clearly, we next sketch the algorithm of \cite{akr18} for finding a shortest augmenting path in a bipartite graph. The algorithm essentially works as follows. Every free node $f\in V$ in parallel starts its own BFS exploration of $G$ along alternating paths. The exploration of a free node $f$ is done by propagating its ID (i.e., $f$) along alternating paths from $f$, where the ID is propagated by one more hop in each synchronous round. Whenever a node $u$ receives the IDs of two different free nodes $f$ and $f'$ in the same round, it only forwards the ID of one of them. 
Note that each node only forwards a single free node ID and it only forwards this ID once (in the round after it first receives it). This is sufficient if the BFS-honesty property holds. 
Moreover, this guarantees that IDs only traverse alternating paths and avoid traversing cycles. 
Assume that the shortest augmenting path in the graph is of length $\ell=2k+1$. let $P$ be such a path and assume that $\set{u,v}$ is the middle edge of $P$. Note that this implies that the shortest alternating paths of nodes $u$ and $v$ are both of length $k$. Hence, $u$ and $v$ receive the ID of a free node exactly in round $k$ and they will both forward that ID along edge $\set{u,v}$ in round $k+1$. When this happens, $u$ and $v$ learn about the fact that they are in the middle of an augmenting path of length $2k+1$ and that path can be constructed simply by following back the edges on which the alternating BFS traversals reached nodes $u$ and $v$. 

Let us now discuss some of the challenges when adapting this ID dissemination protocol to general graphs. For simplicity, assume that we are only doing the BFS exploration from a single free node $f$. Consider again the example in \Cref{fig:BFS-honesty}. We have seen that the shortest alternating path from $f$ to $v$ passes through node $u$, however the subpath from $f$ to $u$ is not the shortest alternating path from $f$ to $u$. In fact, while the shortest alternating path from $f$ to $u$ reaches $u$ on an unmatched edge, in order to reach node $v$, we have to use the shortest one of the alternating paths from $f$ to $u$ that reach $u$ on a matched edge. This suggests that each node $v$ should keep track of both kinds of shortest paths from node $f$ and that $v$ should forward $f$ twice.  A natural generalization of the protocol would thus be the following: After receiving $f$ on a shortest alternating path ending in an unmatched edge of $v$, $v$ forwards $f$ on its matched edge and after receiving $f$ on a shortest alternating path ending in the matched edge of $v$, $v$ forwards $f$ on its unmatched edges. One would hope that this lets each node detect both kinds of shortest alternating paths from node $f$. However, as \Cref{fig:distinguish} shows, this is not necessarily true. While in the \Cref{fig:sub-1}, when $v$ receives $f$ over its matched edge, the ID was indeed forwarded on a shortest alternating path from $f$ to $v$. However, in \Cref{fig:sub-2}, the exploration passes through an odd cycle and node $v$ is only reached on an alternating walk instead of an alternating path. In the example of \Cref{fig:sub-2}, node $w$ should detect that the BFS traversal passed through the odd cycle and $w$ should therefore not forward $f$ over its matched edge. However, it is not clear how $w$ should distinguish between the cases in \Cref{fig:sub-1} and \Cref{fig:sub-2}.  Note that in the BFS traversal of \Cref{fig:sub-1}, $f$ is not only forwarded on the alternating path to $w$, but it is also forwarded through the odd cycle as in \Cref{fig:sub-2}. In fact, the example of \Cref{fig:distinguish} is still a relatively simple case as odd cycles can be nested, and closed odd walks can look much more complicated than just passing through a single odd cycle. Detecting whether and when to forward the ID of a free node is the main algorithmic challenge that we face.	

\begin{figure}[t]
		\centering
		\begin{subfigure}{.4\textwidth}
 			\centering
  				\begin{tikzpicture}
		\shade[shading=ball, ball color=black] (-3,0) circle (.08);
		\node at (-3, .3) {$f$};
		\shade[shading=ball, ball color=black] (-2,0) circle (.08);
		\shade[shading=ball, ball color=black] (-1,0) circle (.08);
		\shade[shading=ball, ball color=black] (0,0) circle (.08);
		\shade[shading=ball, ball color=black] (1,0) circle (.08);
		\shade[shading=ball, ball color=black] (2,0) circle (.08);
		\node at (2, .3) {$v$};
		\draw[dotted, line width=.4mm] (-3,0) - - (-2,0);
		\draw[line width=.4mm] (-2,0) - - (-1,0);
		\draw[dotted, line width=.4mm] (-1,0) - - (0,0);
		\draw[dotted, line width=.4mm] (0,0) - - (1,0);
		\draw[line width=.4mm] (1,0) - - (2,0);
		\shade[shading=ball, ball color=black] (0,-.7) circle (.08);
		\node at (.3,-.6) {$w$};
		\node at (-1,.3) {$z$};
		\shade[shading=ball, ball color=black] (-.5,-1.3) circle (.08);
		\shade[shading=ball, ball color=black] (.5, -1.3) circle (.08);
		\shade[shading=ball, ball color=black] (-.5, -2) circle (.08);
		\shade[shading=ball, ball color=black] (.5, -2) circle (.08);
		\draw[line width=.4mm] (0,0) - - (0,-.7);
		\draw[dotted, line width=.4mm] (0,-.7) - - (-.5,-1.3);
		\draw[dotted, line width=.4mm] (.5,-1.3) - - (0,-.7);
		\draw[line width=.4mm] (-.5,-1.3) - - (-.5,-2);
		\draw[dotted, line width=.4mm] (-.5,-2) - - (.5,-2);
		\draw[line width=.4mm] (-.5,-2) - - (-.5,-1.3);
		\draw[line width=.4mm] (.5,-2) - - (.5,-1.3);
		\shade[shading=ball, ball color=black] (-.85,-.4) circle (.08);
		\shade[shading=ball, ball color=black] (-.7, -.8) circle (.08);
		\draw[dotted, line width=.4mm] (-1,0) - - (-.85,-.4);
		\draw[line width=.4mm] (-.85,-.4) - - (-.7, -.8);
		\draw[dotted, line width=.4mm] (-.7, -.8) - - (-.5, -1.3);
		
		\draw[color=red, ->] (-3,-.2) - - (-2,-.2);
		\draw[color=red, ->] (-2,-.2) - - (-1.1,-.2);
		\draw[color=red, ->] (-1.1,-.2) - - (-.97,-.5);
		\draw[color=red, ->] (-.97,-.5) - - (-.83,-.9);
		\draw[color=red, ->] (-.83,-.9) - - (-.7,-1.3);
		\draw[color=red, ->] (-.7,-1.3) - - (-.7,-2.2);
		\draw[color=red, ->] (-.7,-2.2) - - (.7,-2.2);
		\draw[color=red, ->] (.7,-2.2) - - (.7,-1.3);
		\draw[color=red, ->] (.7,-1.3) - - (.2,-.7);
		\draw[color=red, ->] (.2,-.7) - - (.2,-.2);
		\draw[color=red, ->] (.2,-.2) - - (1,-.2);
		\draw[color=red, ->] (1,-.2) - - (2,-.2);

				\end{tikzpicture}
  			\caption{}
  			\label{fig:sub-1}
		\end{subfigure}
		\begin{subfigure}{.4\textwidth}
 			\centering
  				\begin{tikzpicture}
		\shade[shading=ball, ball color=black] (-3,0) circle (.08);
		\node at (-3, .3) {$f$};
		\shade[shading=ball, ball color=black] (-2,0) circle (.08);
		\shade[shading=ball, ball color=black] (-1,0) circle (.08);
		\shade[shading=ball, ball color=black] (0,0) circle (.08);
		\shade[shading=ball, ball color=black] (1,0) circle (.08);
		\shade[shading=ball, ball color=black] (2,0) circle (.08);
		\node at (2, .3) {$v$};
		\draw[dotted, line width=.4mm] (-3,0) - - (-2,0);
		\draw[line width=.4mm] (-2,0) - - (-1,0);
		\draw[dotted, line width=.4mm] (-1,0) - - (0,0);
		\draw[dotted, line width=.4mm] (0,0) - - (1,0);
		\draw[line width=.4mm] (1,0) - - (2,0);
		\shade[shading=ball, ball color=black] (0,-.7) circle (.08);
		\node at (.3,-.6) {$w$};
		\shade[shading=ball, ball color=black] (-.5,-1.3) circle (.08);
		\shade[shading=ball, ball color=black] (.5, -1.3) circle (.08);
		\shade[shading=ball, ball color=black] (-.5, -2) circle (.08);
		\shade[shading=ball, ball color=black] (.5, -2) circle (.08);
		\draw[line width=.4mm] (0,0) - - (0,-.7);
		\draw[dotted, line width=.4mm] (0,-.7) - - (-.5,-1.3);
		\draw[dotted, line width=.4mm] (.5,-1.3) - - (0,-.7);
		\draw[line width=.4mm] (-.5,-1.3) - - (-.5,-2);
		\draw[dotted, line width=.4mm] (-.5,-2) - - (.5,-2);
		\draw[line width=.4mm] (-.5,-2) - - (-.5,-1.3);
		\draw[line width=.4mm] (.5,-2) - - (.5,-1.3);
		\draw[color=red, ->] (-3,-.2) - - (-2,-.2);
		\draw[color=red, ->] (-2,-.2) - - (-1,-.2);
		\draw[color=red, ->] (-1,-.2) - - (-.2,-.2);
		\draw[color=red, ->] (-.2,-.2) - - (-.2,-.7);
		\draw[color=red, ->] (-.2,-.7) - - (-.7,-1.3);
		\draw[color=red, ->] (-.7,-1.3) - - (-.7,-2.2);
		\draw[color=red, ->] (-.7,-2.2) - - (.7,-2.2);
		\draw[color=red, ->] (.7,-2.2) - - (.7,-1.3);
		\draw[color=red, ->] (.7,-1.3) - - (.2,-.7);
		\draw[color=red, ->] (.2,-.7) - - (.2,-.2);
		\draw[color=red, ->] (.2,-.2) - - (1,-.2);
		\draw[color=red, ->] (1,-.2) - - (2,-.2);
			\end{tikzpicture}
  			\caption{}
  			\label{fig:sub-2}
		\end{subfigure}
	\caption{Main challenge: Nodes need to be able to distinguish whether the BFS exploration reaches them on an alternating path or only on alternating walks.}
	\label{fig:distinguish}
	\end{figure}
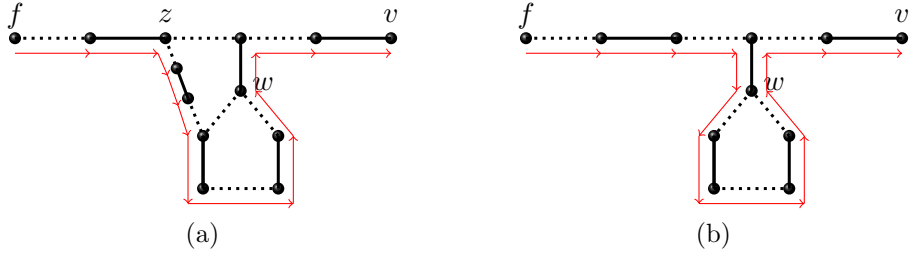
A second challenge come from the fact that we need to do alternating BFS explorations from all free nodes and it is not obvious how to coordinate these parallel BFS explorations while keeping the message size small. In the bipartite case, it was enough for each node $v$ to only participate in the BFS exploration of a single free node $f$ and to discard all other BFS traversals that reach node $v$. It is not clear whether the same thing can also be done in general graph. Luckily it turns out to still be sufficient if each node $v$ only participates in the BFS exploration of the first free node that reaches $v$. Proving that this is sufficient is however more involved than that in the bipartite case.	
\subsection{The Free Node Clustering}
\label{sec:clusteringoutline}

We start the outline of our algorithm to detect a shortest augmenting path by describing the required outcome of the alternating BFS exploration in general graphs in more detail. As mentioned above, we intend to in parallel perform BFS explorations starting from all the free nodes $f_1,\dots,f_{\rho}$. We will show that it is sufficient for each node $v\in V$ to participate in the BFS exploration for exactly one free node $f_i$. This implies that at each point in time, the BFS explorations of the different free nodes $f_1,\dots,f_{\rho}$ induce a clustering of the nodes in $V$. There is a cluster for each free node $f_i$, and each node $v\in V$ is either contained in exactly one of the $\rho$ clusters or it is not contained in any cluster (i.e., has not been reached by any of the explorations). We call this induced clustering the \emph{free node clustering}. The clustering is computed in synchronous rounds and we will guarantee that it satisfies the following properties.

\begin{itemize}
\item[(C1)] Consider some node $v\in V$ and some round number $r\geq 1$. If $v$ has not joined any cluster in the first $r-1$ rounds, if $v$ has an alternating path $P$ of length $r$ to a cluster center $f$ (i.e., a free node $f$), and if all nodes of $P$ except node $v$ are in the cluster of node $f$ after $r-1$ rounds, then $v$ joins the cluster of $f$ or some other cluster with the same property. That is, $v$ joins a cluster in round $r$ such that afterwards, it has an alternating path $P'$ of length $r$ to its cluster center such that $P'$ is completely contained in the cluster that $v$ joined. Let $C$ be the cluster that $v$ joins and let $U\subseteq C$ be the set of neighbors $u$ of $v$ such that the cluster contains an alternating path of length $r$ from $v$ through $u$ to the cluster center. The set $U$ is called the \emph{predecessors} of $v$. If $v$ joins a cluster in round $r$, we say that $v$ is \emph{$r$-reachable} (i.e., $v$'s shortest alternating paths to its cluster center that are completely contained in the cluster are of length $r$). If $v$'s adjacent edge on the shortest alternating path of $v$ is an unmatched edge (i.e., if $r$ is odd), we say that $v$ is \emph{$r$-$0$-reachable} and otherwise, we say that it is \emph{$r$-$1$-reachable}.
\item[(C2)] Assume that $v$ is $r_v$-reachable. Let $r>r_v$ be the first round after which the cluster of $v$ contains an alternating path $P$ of length $r$ connecting $v$ with $f$ such that if $v$ is $r_v$-$0$-reachable, $P$ starts with a matched edge at node $v$ and if $v$ is $r_v$-$1$-reachable, $P$ starts with an unmatched edge at node $v$ (if such a round $r$ exists). Then, after $r$ rounds of the construction, $v$ is aware of the existence of such a path. If $v$ is $r_v$-$0$-reachable, we say it is also \emph{$r$-$1$-reachable} and if it is $r_v$-$1$-reachable, we say that it is also \emph{$r$-$0$-reachable}.
\end{itemize}
To put it differently, a node in a cluster is called $r$-$0$-reachable ($r$-$1$-reachable) if there is a shortest odd-length (even-length) alternating path of length $r$ from the cluster center to the node that is completely contained in the cluster.
The clustering after $r$ rounds of the construction will be called the \emph{$r$-radius free node clustering}. For the precise definition of the clustering and of the related terminology, we refer to \Cref{sec:clustering-definitions}. We will see that the $r$-radius free node clustering can be constructed in $r$ rounds in the \CONGEST model. We give an outline of the distributed construction of the clustering in the following \Cref{sec:clusteringalgoutline}. The details of the distributed construction and its analysis appear in \Cref{sec:3,sec:analysis}. Before discussing the distributed construction, we next sketch how the free node clustering can be used to detect an augmenting path and why it is sufficient for detecting a shortest augmenting path.

\vspace{.2cm}
\para{Detecting augmenting paths:}
After computing the $r$-radius free node clustering for a sufficiently large radius $r$, we can use it to find an augmenting path as follows. Let $u$ and $v$ be two neighbors in $G$ such that $u$ and $v$ are in different clusters (say for free nodes $f$ and $f'$). Assume that for two integers $\ell_u,\ell_v\geq 0$ one of the following conditions hold:
\begin{enumerate}
\item The edge $\set{u,v}$ is in the matching, $u$ is $\ell_u$-$0$-reachable (in its cluster), and $v$ is $\ell_v$-$0$-reachable (in its cluster).
\item The edge $\set{u,v}$ is not in the matching, $u$ is $\ell_u$-$1$-reachable (in its cluster), and $v$ is $\ell_v$-$1$-reachable (in its cluster).
\end{enumerate}
In both cases the matching directly implies that there exists an augmenting path of length $\ell_u+\ell_v+1$ between the free nodes $f$ and $f'$. Further, after $\max\set{\ell_u,\ell_v}+1$ rounds, $u$ and $v$ are aware of the existence of this path. 

\vspace{.2cm}
\para{Detectability of a shortest augmenting path:} 
It remains to show that the free node clustering allows to find some shortest augmenting path. Assume that the length of a shortest augmenting path in $G$ with respect to the given matching $M$ is $2k+1$ for some integer $k\geq 0$. For an augmenting path $P=\langle f=v_0,\dots,v_{\ell}=f'\rangle$ of length $\ell=2k+1$ between two free nodes $f$ and $f'$, we let $i\geq0$ and $j\geq 0$ be two integers such that $i$ is the largest integer such that all nodes in $P[v_0,v_i]$ are in the cluster of $f$ and such that $j$ is the largest integer such that all the nodes in $P[v_{\ell-j},v_\ell]$ are in the cluster of $f'$. We define the \emph{rank} of the augmenting path $P$ as $i+j$. Note that the path $P$ is detectable if and only if it has rank $2k$. We therefore need to show that there exists a shortest augmenting path of rank $2k$.

To prove that there is a shortest augmenting path of rank $2k$, we assume that $P$ is a shortest augmenting path of maximal rank and that the rank of $P$ is less than $2k$ and we show that this leads to a contradition: this either allows to construct an augmenting path of length less than $2k+1$ or it allows to construct an augmenting path of length $2k+1$ of larger rank. The actual proof is somewhat technical. It consists of two steps. If we assume, w.l.o.g., that $i\leq j$, we first show inductively that all the nodes $v_{i+1},\dots,v_{\max\set{i+1,j}}$ are in the cluster of $f'$. If $j\geq \ell-j-1$, we have proven that $v_{\ell-j-1}$ is in the cluster of $f'$, which is a contradiction to the choice of $j$. Otherwise, node $v_{\ell-j-1}$ is in a cluster $f''\neq f'$ (it is however possible that $f''=f$). We can now derive the desired contradiction by a careful concatenation of parts of the paths connecting $f''$ with $v_{\ell-j-1}$, parts of the augmenting path between $f$ and $f'$, and parts of a path between $f'$ and $v_{i+1}$ that was constructed in the earlier inductive argument. The details of the arguments appear in \Cref{sec:path-setup}. 

\subsection{Distributed Construction of the Free Node Clustering}
\label{sec:clusteringalgoutline}

	We focus on a single step (round) of the the distributed construction of the free node clustering.
	To that end, consider graph $G$ and matching $M$, and assume that the first $r-1$ rounds of the clustering construction have been done successfully and the introduced clustering properties (C1) and (C2) of \Cref{sec:clusteringoutline} hold. 
	Therefore, for all integers $t<r$ and $\vartheta \in \set{0,1}$, every $t$-$\vartheta$-reachable node correctly detects the fact that it is $t$-$\vartheta$-reachable and knows its predecessors. 
	Then, let us explain the outline of the approach towards implementing the $r^{th}$ step of the distributed construction of the clustering. 
	
	Let us first focus on maintaining property (C1). 
	To satisfy (C1), every $(r-1)$-$\vartheta$-reachable node sends its cluster ID over its adjacent matched edge if $\vartheta = 0$ and over its adjacent unmatched edges if $\vartheta = 1$. 
	This way, for every node that receives a cluster ID over its adjacent edge, by joining the corresponding cluster, there would be an alternating path of length $r$ from the cluster center to the node such that the path is completely contained in the cluster. 
	This maintains property (C1) for $r$ steps of the clustering and it can be achieved in a distributed setting as explained. 
	However, maintaining property (C2) is the main challenge as we try to elaborate in the sequel.
	 
	Let us consider a node $v$ in the cluster centered at some free node $f$ after $r-1$ steps of the clustering construction such that it is $r'$-$\vartheta$-reachable for some integers $r' < r$ and $\vartheta \in \set{0,1}$. 
	Let us then assume that after the nodes have joined their corresponding clusters in the $r^{th}$ step, there is an alternating path of length $r$ from $f$ to $v$ that has completely fallen into the cluster centered at $f$ such that the path contains an adjacent matched edge of $v$ if $\vartheta = 0$ and an adjacent unmatched edge of $v$ otherwise.
	Therefore, $v$ should learn about the existence of such a path to maintain property (C2). 
	Nodes like $v$ can be reached within the cluster from their corresponding cluster center in two ways; first through an alternating path from the cluster center to $v$ such that the path is completely contained in the cluster and contains a matched edge of $v$, and second through a similar path but containing an unmatched edge of $v$. 
	Let us call these nodes that can be reached via both kinds of paths \textit{bireachable nodes}. 
	
	Let us define an \textit{odd cycle} to be an alternating walk of odd length that is completely contained in a cluster and starts and ends at the same node. 
	Node $v$ that is the first and last node of an odd cycle is called the stem of the odd cycle.
	An odd cycle is said to be \textit{minimal} if it has no consecutive subsequence that is an odd cycle. 
	Note that a minimal odd cycle can still have a consecutive subsequence that is an even-length cycle. 
	An odd cycle is moreover said to be \textit{reachable} if either the stem is the cluster center or there is an alternating path from the cluster center to the stem of the odd cycle such that (1) it is completely contained in the cluster, (2) it is edge-disjoint from the odd cycle, and (3) it includes the matched edge of the stem of the odd cycle. 
	You can see examples of reachable minimal odd cycles in \Cref{fig:sub-1}, one with stem $w$ and another one with stem $z$.
	All the nodes of an odd cycle except the stem are said to be \textit{strictly inside} the odd cycle.
	Then, one can show that a node is bireachable if and only if it is strictly inside a reachable minimal odd cycle.
	We only need this simple observation to explain the intuition behind our approach for maintaining property (C2) and in \Cref{sec:analysis} we formally prove the correctness of the approach. 
	As an example, node $w$ is strictly inside the reachable minimal odd cycle with stem $z$ in \Cref{fig:sub-1} and hence bireachable, but $w$ is not bireachable in \Cref{fig:sub-2}. 
			
	To help the nodes to distinguish whether they are strictly inside a reachable minimal odd cycle or not, we define a flow circulation protocol throughout each cluster.
	Let us consider the very simple example of a reachable minimal odd cycle in \Cref{fig:flow1}.
	When the cluster ID is sent over the middle edge of this odd cycle (i.e., $e$) in both directions in the same round, we consider a flow generation of unit size over the edge and we call it the flow of $e$. 
	Then, half of the generated flow is sent back towards the stem of the odd cycle on each of the two paths. 
	When the stem receives the whole unit flow of edge $e$ in a single round, it learns that it is the stem of an odd cycle for which the flow is generated and discards the flow (it avoids sending the flow further). 
	Whereas all the other nodes inside the odd cycle receive a flow of value less than $1$.
	They interpret this incomplete flow receipt as being strictly inside a reachable minimal odd cycle. 
	Moreover, they interpret the round in which they receive an incomplete flow for the first time as the length of an existing alternating path from the cluster center. 
	Therefore, to maintain property (C2), a node detects the length of its shortest alternating path through its matched (unmatched) edge by receiving an incomplete flow for the first time if its shortest alternating path contains its unmatched (matched) edge. 
	
	\begin{figure}[t]
		\centering
		\begin{subfigure}{.4\textwidth}
 			\centering
  				\includegraphics[width=4.5cm]{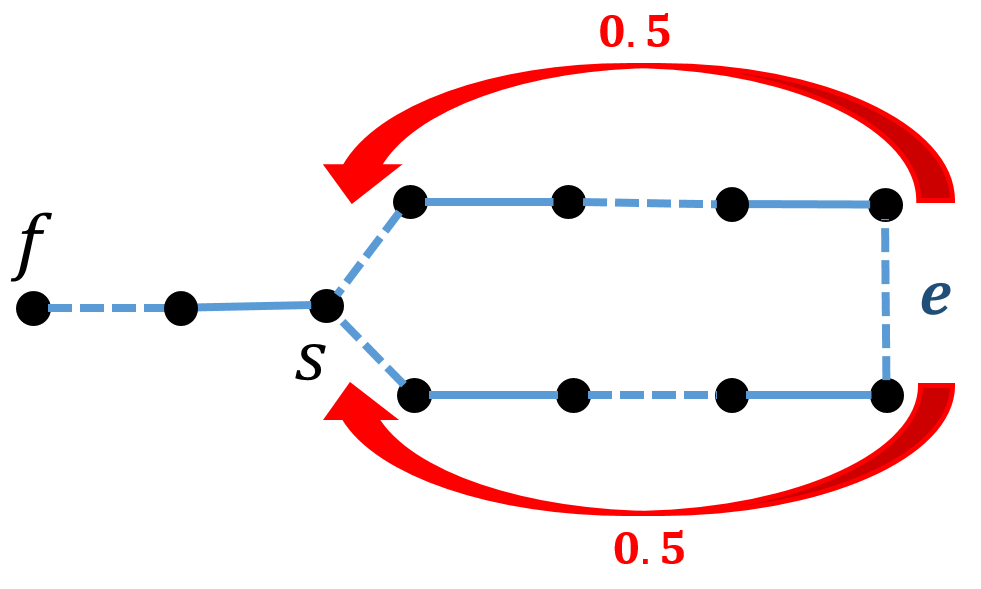}
  			\caption{}
  			\label{fig:flow1}
		\end{subfigure}
		\begin{subfigure}{.4\textwidth}
 			\centering
  				\includegraphics[width=7cm]{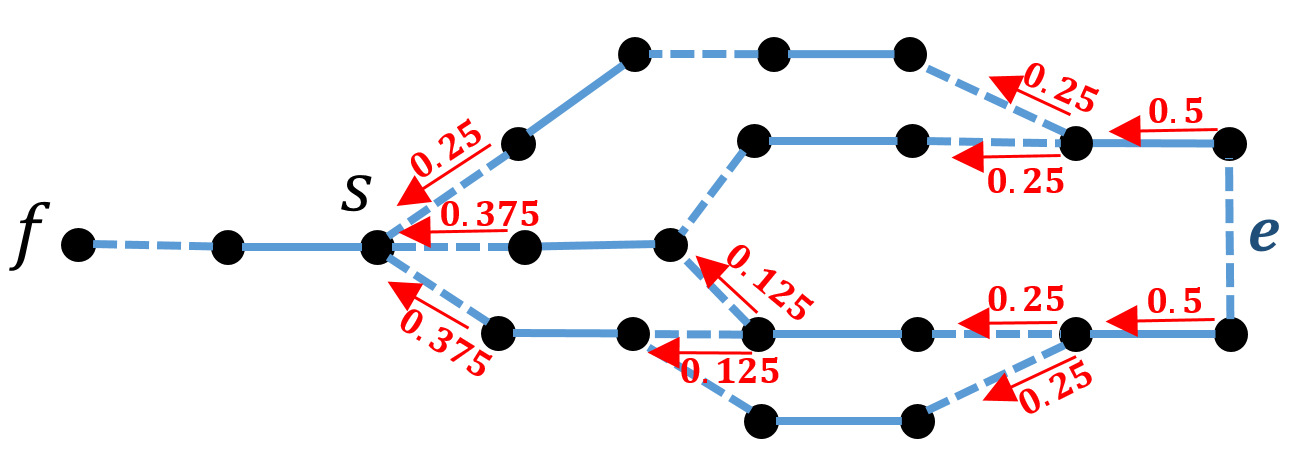}
  			\caption{}
  			\label{fig:flow2}
		\end{subfigure}
	\caption{Flow circulation in reachable minimal odd cycles}
	\label{fig:flow}
	\end{figure}

	Many such odd cycles might share a common middle edge as the cycles in \Cref{fig:flow2} that share edge $e$ as their middle edge.
	Then, it is enough that every node divides the value of the received flow of $e$ and sends them backwards until the cycle's stem, i.e., node $s$, receives the whole unit flow of $e$. 
	However, in case of having many interconnected and nested reachable minimal odd cycles that do not share a single middle edge, an edge might carry the flows of many different edges in the same round. This is a problem when implementing the idea in the \CONGEST model as we cannot bound the number of flows that have to be sent concurrently over an edge. Instead of separately sending flows generated at different edges $e$, we therefore sum all flows that have to go over the same edge and only send aggregate values. Ideally, we would like to have the following desired differentiation; a node that receives an aggregated flow whose size is not an integer, learns that it is strictly inside at least one reachable minimal odd cycle, and a node that receives an aggregated flow whose size is an integer learns that it is the stem of at least one reachable minimal odd cycle but not strictly inside any such cycle and it discards the flow. 
	To avoid that the sum of a set of fractional flows for different edges sums to an integer, we can use randomization. Instead of always equally splitting a flow that has to be sent over several edges, we randomly split the flow. This guarantees that w.h.p., flows only sum up to an integer if they consist of all parts of all involved separate flows. Unfortunately, this is still not directly implementable in the \CONGEST model because we might need to split a single flow a  polynomial in $n$ many times and $O(\log n)$ bits then are not sufficient to forward the flow value with sufficient accuracy. In order to apply the idea in the \CONGEST model, we instead use flow values from a sufficiently large (polynomial size) finite field. In \Cref{sec:congest-adapt}, we show that this suffices to w.h.p.\ obtain the same behavior as if flows for each edge were sent separately. 
	Aggregating flows thus allows to satisfy the congestion requirement, it however causes a number of further challenging problems, which we present and discuss next.
	 
	Let us consider the rather basic example of having only two nested reachable minimal odd cycles in \Cref{fig:flow3}. 
	Let $C$ denote the odd cycle with stem $s$ and $C'$ denote the odd cycle with stem $s'$. 
	Observe that $e$ is the middle edge of $C$ and $e'$ is the middle edge of $C'$. 
	The received flow of $e$ by $x$ must be sent to $y_1$ whereas the received flow of $e'$ by $x$ must be sent to $y_2$, which requires node $x$ to treat the two flows differently. 
	That is, node $x$ must recognize that the received flow of $e$ corresponds to the odd cycle containing the alternating path ending at edge $\set{y_1, x}$, and the received flow of $e'$ corresponds to the odd cycle containing the alternating path ending at edge $\set{y_2, x}$.
	This cannot be achieved due to the flow aggregation enforced by the congestion restriction. 
	Therefore, node $x$ is not capable of correctly directing the flows along the right paths so that the flows of $e$ only traverse the paths of cycle $C$ and the flows of $e'$ only traverse the paths of cycle $C'$.
		
	To resolve this issue and be able to still aggregate the flows, nodes should be able to treat all flows in the same way.
	Therefore, since every node knows its predecessors, we would like to establish the generic regulation of always sending flows only towards all predecessors no matter what the flow is. 
	However, by letting node $x$ send the received flow of $e'$ to its only predecessor $y_1$, the nodes in the alternating path between $x$ and $s'$ through $y_2$ do not anymore receive any flow of $e'$. 
	To fix this and keep node $x$ free of treating flows differently, we eliminate the flow generation over $e'$ and simulate it by generating a flow over $e''$. 
	That is, we shift the flow generation of cycle $C'$ from $e'$ to $e''$.
	Then, half of the flow of $e''$ is sent by $y_2$ to its predecessor $y_4$, and half of it is sent by $x$ to its predecessor $y_1$. 
	
	Previously, we let the flow generation only occur over the middle edge of an odd cycle, that can be easily recognized when an edge carries the same cluster ID in opposite directions in the same round.
	Now by having flow generation over both middle edges like $e$ as well as non-middle edges like $e''$, we need a more involved flow generation regulation. 
	Let every node send its cluster ID in at most one round over its matched edge and in at most one round over its unmatched edges. 
	A node sends its cluster ID over its matched edges in round $r$ if it is $(r-1)$-$0$-reachable, and it sends its cluster ID over its unmatched edges in round $r'$ if it is $(r'-1)$-$1$-reachable.
	We let a flow be generated over an edge when the two endpoints are not each other's predecessors and they both send the same cluster ID to each other, no matter if they are sent in the same round or not.
	Neither the endpoints of $e$ nor those of $e''$ are each other's predecessors while the endpoints send $f$ to each other over $e$ and $e''$. 
	Therefore, flow generation occur over both $e$ and $e''$, where $e$ is an example of a middle edge that the endpoints send cluster IDs in the same round, and $e''$ is an example of a non-middle edge for a shifted flow generation that the endpoints send cluster IDs in different rounds. 
	Also note that since $y_3$ is the predecessor of $y_6$, a flow is not anymore generated over $e'$ within this new regulation. 

	 \begin{figure}[t]
	 	\centering
  			\includegraphics[width=8cm]{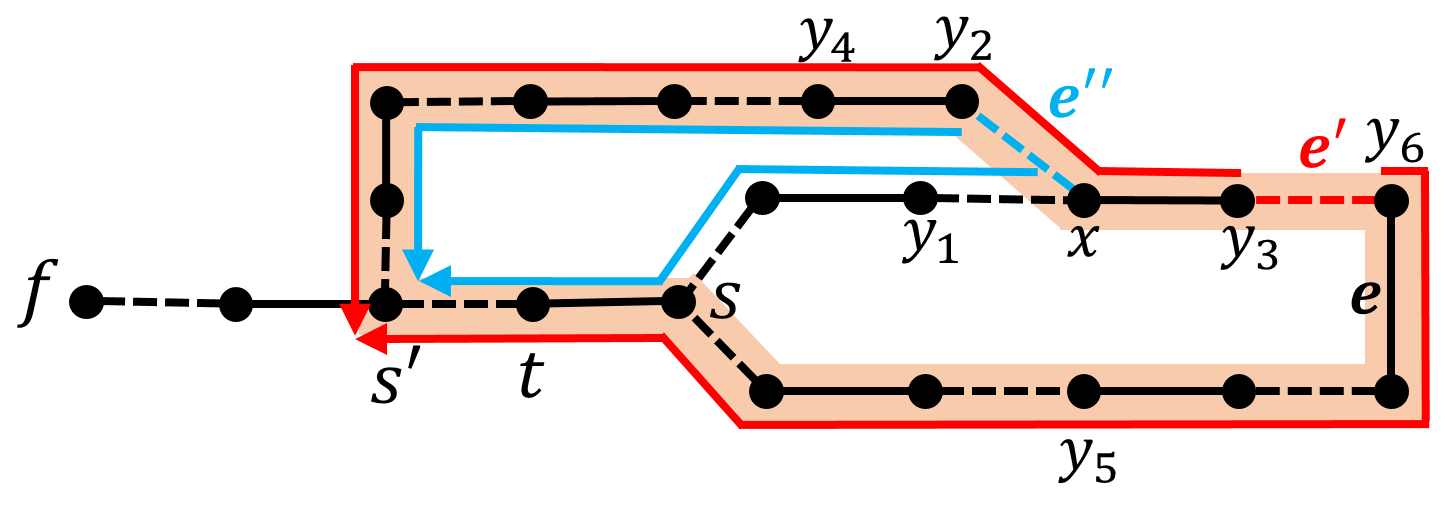}
  		\caption{Nested odd cycles: flow simulation of edge $e'$ on edge $e''$}
		\label{fig:flow3}
	 \end{figure}
 
	Now to see that shifting flow generation maintains the desired effects and avoids any side effects, let us compare the two cases of flow generation over $e'$ and its simulation over $e''$. 
	Each half of the flow of $e'$ is sent towards $s'$, one along the path between $y_6$ and $s'$ through $y_5$ and one along the path between $y_3$ and $s'$ through $y_4$ as depicted by arrows in \Cref{fig:flow3}. 
	In the simulation, each half of the flow of $e''$ is also sent towards $s'$, one along the path between $y_2$ and $s'$ through $y_4$ and one along the path between $x$ and $s'$ through $y_1$ again as depicted by arrows in \Cref{fig:flow3}.
	We need the simulation to serve the purposes of the flow generation of $e'$. 
	However, there are two crucial differences that might question the desired effects of flow $e'$ if we run the simulation instead. 
	To explain the first difference, consider the nodes inside odd cycle $C$. 
	Nodes like $y_1$ do not receive flows of $e'$ but receive flows of $e''$ in the simulation. 
	Moreover, nodes like $y_5$ receive flows of $e'$ but not flows of $e''$ in the simulation. 
	The second difference is that node $s'$ as the stem of $C'$ receives the whole unit flow of $e'$ in a single round as desired to perceive the fact that it is the corresponding stem and discards the flow. 
	However, in the simulation since $e''$ is not the middle edge of $C'$ and the flows are sent along paths with different lengths, $s'$ does not receive the whole unit flow of $e''$ in a single round. 
	This avoids node $s'$ to perceive the fact that it is the stem of an odd cycle and the flow is further sent by $s'$. 
	Let us see how crucial these differences are and how we can resolve them. 
	
	Regarding the first difference, the decisive observation is that whenever a node receives a proper fraction of a flow for the first time in round $r$, it detects the existence of an alternating path of length $r$. 
	Therefore, only the first receipt of such flow is important and must be at the right time for a node. 
	All those nodes in $C$ that differ in receiving the corresponding flows in the flow circulation of $e'$ and $e''$ already have received a proper fraction of flow $e$, and hence the receipt and the time of receiving later flows are irrelevant to them. 
	
	However, the second difference is crucial and needs to be resolved. 
	Note that the half flow of $e''$ is sent along the path between $y_2$ and $s'$ through $y_4$ that is the same path traversed by the half flow of $e'$. 
	Therefore, if $y_2$ sends the half flow of $e''$ to $y_4$ immediately in the next round of receiving the cluster ID from $x$, it reaches $s'$ at exactly the same time as the half flow of $e'$ would have reached $s'$. 
	Now assume that $x$ would also have sent the other half flow of $e''$ along the path between $x$ and $s'$ through $y_5$.
	If $x$ would have sent the flow in the very next round of receiving the cluster ID from $y_2$, then the flow of $e''$ would also have  reached $s'$ in exactly the same round as the flow of $e'$ would have reached $s'$.
	However, since $x$ actually sends this flow along the path between $x$ and $s'$ through $y_1$, it reaches $s'$ sooner. 
	This time difference is the difference of the length of the shortest alternating path between $x$ and $f$ ending in the matched edge of $x$ and such a path ending in the unmatched edge of $x$. 
	This difference is known by $x$, and $x$ can therefore delay sending the flow by this number of rounds and repair the unwanted side effects of the simulation. 
	Note that $x$ cannot send the flow in the very next round of receiving the cluster ID from $y_2$ since it has not yet decided at that time to send its cluster ID to $y_2$ and hence cannot yet recognize the flow generation over $e''$. 
	Therefore, it has to anyway send the flow along a shorter path, e.g., the path through $y_1$. 
	
	This discussed simple example inevitably abstracts away some details. 
	In the example, flows are only sent over alternating paths.
	However, if nodes always send flows to their predecessors, flows do not necessarily traverse alternating paths and the paths along which flows are sent might have consecutive unmatched edges. 
	Then, along a path that a flow is forwarded, every node that has two adjacent unmatched edges on the path delays forwarding the flow. 
	We postpone further details to \Cref{sec:3,sec:analysis}.

\hide{
	However, the second difference is crucial and needs to be resolved. 
	The nodes for which the receipt of flow $e'$ at the right time is vital are nodes $s, t, s'$ and all the nodes in the path between $y_2$ and $s'$ through $y_4$.
	We need to regulate sending the flows of $e''$ in the simulation such that these nodes receive the flows of $e''$ at exactly the time as if they would have received the flows of $e'$. 
	Let us first consider the half of flow $e'$ that traverses the path between $y_3$ and $s'$ through $y_4$.
	Since $x$ is $12$-$1$-reachable, it sends $f$ to $y_2$ in round $13$. 
	The flow of $e'$ would have also reached $y_2$ in round $13$. 
	Therefore, $y_2$ just needs to send the half flow of $e''$ immediately in the next round of receiving $f$ from $x$, and then all nodes in the path between $y_2$ and $s'$ through $y_4$ receive the flow $e''$ at the right time. 
	Now let us consider the other half flow of $e'$ that traverses the path between $y_6$ and $s'$ through $y_5$. 
	Node $s$ receives this flow in round $17$, where $17$ is the length of the shortest alternating path between $f$ and $s$ ending in $s$'s unmatched edges. 
	Node $y_2$ is $8$-$1$-reachable and sends $f$ to $x$ in round $9$. 	
	If $x$ sends the half flow of $e''$ immediately in the next round, it reaches $s$ in round $12$. 
	The difference of the times at which $s$ receives the flow of $e'$ and $e''$, i.e., $17 - 12 = 5$, is exactly the difference of the shortest alternating path between $x$ and $f$ ending in a matched edge of $x$ and that ending in an unmatched edge of $x$. 
	Since node $x$ is aware of the length these paths, it just needs to delay sending the half flow of $e''$ for $5$ rounds, and then $s, t$ and $s'$ will receive the flow at exactly the same round they would have received the flow $e'$. 
}


\section{Shortest Augmenting Path Detection}\label{sec:3}

In this section, we present the algorithm to detect a shortest
augmenting path in time linear in the length of the path in the \CONGEST model. 
	The organization of this section is as follows. In
\Cref{sec:clustering-definitions}, we formally define the free node
clustering that was described in \Cref{sec:clusteringoutline}. We
define the clustering by giving a deterministic sequential algorithm that
constructs the clustering in a step-by-step manner. Note that
this deterministic algorithm is only for the purpose of providing a
precise definition of the clustering.  Then, in \Cref{sec:path-setup},
we show that given such a clustering, at least one shortest augmenting path can be detected in
a single round of the \CONGEST model.  In
\Cref{sec:distributed-clustering}, we provide a distributed algorithm
to construct the free node clustering. Due to lack of space, the
analysis of the distributed free node clustering algorithm appears in
in \Cref{sec:analysis}.  For the
sake of simplicity, we first consider no restriction on the size of sent
messages when we describe the algorithm and present its analysis in  \Cref{sec:analysis}.
We postpone dealing with the message size restriction to
\Cref{sec:congest-adapt}, where we then show how one can employ
randomness to adapt the algorithm to the \CONGEST model (along the
lines described in \Cref{sec:clusteringoutline}).

\subsection{ The \boldmath{$r$}-Radius Free Node Clustering}\label{sec:clustering-definitions}	
		
To define the $r$-radius free node
clustering of a graph $G$ w.r.t.\ a given matching $M$ of $G$,
we introduce a deterministic $r$-step algorithm, which we 
henceforth call the FNC algorithm,.  The free nodes
$f_1, \dots, f_\rho$ are the cluster centers.  For
all $i$, let $C_i$ denote the cluster that is centered at free node
$f_i$.  Initially every cluster $C_i$ only contains $f_i$ and during
the execution of the algorithm more nodes potentially join the
cluster.  For consistency, we assume that there exists a
step $0$ in which every free node joins the cluster centered at
itself, i.e., initially $\forall i\in[1, \rho]: C_i=\set{f_i}$.
Then, in every step $t\geq 1$, every node that has not yet joined any
cluster, concurrently joins the cluster centered at $f_i$ if and only
if $f_i$ is the minimum-ID free node from which $v$ has an alternating
path $P$ of length $t$ such that
$\mathbb{V}(P)\setminus \set{v} \subseteq C_i$.\footnote{Note that
  there might be some nodes in $G$ that never join any cluster in any
  step of the FNC algorithm.}  Throughout, let $C_i(t)$ denote the set
of nodes in cluster $C_i$ after $t$ steps of the FNC algorithm. We define $\calC(r) := \set{C_1(r), \dots , C_\rho(r)}$ to be the
$r$-radius free node clustering of $G$.  See
\Cref{alg:clustering} for the details of the FNC algorithm.

\begin{algorithm}[h]
  \caption{$r$-Radius Free Node Clustering (FNC) Algorithm}\label{alg:clustering}
  \SetKwInOut{Input}{Input}\SetKwInOut{Output}{Output}
  \Input{Graph $G=(V,E)$, matching $M$ of $G$, and integer $r\geq 1$}
  \Output{$r$-Radius free node clustering of $G$}
  \BlankLine

  $V' \leftarrow V\setminus\{f_1, \dots, f_\rho\}$ \;
  \ForAll{$i\in \set{1, \dots, \rho}$}{ 
  	$C_i\leftarrow\set{f_i}$ \; 
  }
  \ForAll{$t\in\set{1, 2, \dots, r}$}{   
  	\ForAll{$i\in \set{1, \dots, \rho}$}{
		$B_i := \emptyset$\;
	} 
	\ForAll{$v\in V'$}{
		$X \leftarrow \{j \ | \ v$ has an alternating path $P$ of length $t$ s.t. $\mathbb{V}(P)\setminus \set{v}\subseteq C_j\}$\; 
		
		\If{$X \neq \emptyset$}{
			$j' \leftarrow \underset{j\in X}{\mathrm{argmin} \ f_j}$\;
			$B_{j'} \leftarrow B_{j'} \cup \set{v}$\; 
			$V' \leftarrow V' \setminus \set{v}$\;
		}
	}
	\ForAll{$i\in \set{1, \dots, \rho}$}{
		$C_i := C_i \cup B_i$\;
	}
  }
  \Return{$\mathcal{C}(r) \leftarrow \set{C_1, \dots,C_\rho}$};
\end{algorithm}

\noindent	
To simplify the discussions, we introduce the following definitions
and terminology.  In the following definitions, $v$ is an
arbitrary node in $G$ and $\vartheta$ is an arbitrary integer in $\set{0,1}$.  We say that
$P$ is a \textit{path of $v$} or $v$ \textit{has a path} $P$ if $P$ is
a path starting at a free node and ending at node $v$.
		
\begin{definition}[Uniform Paths]
  We say that a path $P$ is uniform at time $t\geq 0$ if
  $\mathbb{V}(P)\subseteq C_i(t)$ for some
  $i\in\set{1,\dots,\rho}$. When the time $t$ is clear from the
  context, we just say that $P$ is uniform.
\end{definition}

\noindent
The following lemma is a simple observation about uniform alternating paths.

\begin{lemma}\label{lem:uniform}
	Let $P$ be an alternating path of length $r$ from any free node to any node. 
	If there is any time (possibly larger than $r$) at which $P$ is uniform, then $P$ is uniform at time $r$.
\end{lemma}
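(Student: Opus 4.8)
The plan is to show, by induction on $j$, that if $P=\langle f=v_0,v_1,\dots,v_r=v\rangle$ is uniform at some time $T$ --- say $\mathbb{V}(P)\subseteq C_i(T)$ --- then $v_j\in C_i(j)$ for every $0\le j\le r$. Once this is in hand, the fact that the FNC clusters only grow, i.e.\ $C_i(j)\subseteq C_i(r)$ for all $j\le r$, immediately yields $\mathbb{V}(P)\subseteq C_i(r)$, which is exactly the assertion that $P$ is uniform at time $r$.

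Before the induction I would record the two structural facts about the FNC algorithm that do all the work: (i) $C_i(t)\subseteq C_i(t+1)$ for all $i$ and $t$; and (ii) every node of $G$ belongs to at most one cluster during the entire execution --- a non-free node is deleted from $V'$ the moment it is placed into some $B_{j'}$, and free nodes never enter $V'$ at all. Fact (ii) is the crucial one: whenever a vertex $v_j$ of $P$ happens to lie in some cluster at some time, that cluster is forced to be $C_i$, simply because $v_j\in C_i(T)$ by the uniformity hypothesis (combined with (i) to compare times when needed).

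The base case $j=0$ is that $v_0=f$ is a free node, so by (ii) --- and because a free node only ever belongs to the cluster centered at itself --- the witnessing cluster $C_i$ is the one centered at $f$ and $v_0\in C_i(0)$. For the inductive step, assume $v_k\in C_i(k)$ for all $k<j$; by (i), all of $v_0,\dots,v_{j-1}$ lie in $C_i(j-1)$. The prefix $P[v_0,v_j]$ is an alternating path of length $j$ from the free node $f$ to $v_j$ whose internal vertices $v_1,\dots,v_{j-1}$ all lie in $C_i(j-1)$. If $v_j$ has already joined a cluster by the start of step $j$, then by fact (ii) that cluster is $C_i$, so $v_j\in C_i(j-1)\subseteq C_i(j)$. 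Otherwise $v_j\in V'$ at the start of step $j$, and at that step the set $X$ computed for $v_j$ contains the index $i$ --- witnessed by $P[v_0,v_j]$ --- so $X\neq\emptyset$ and $v_j$ joins some cluster $C_{i''}$ in step $j$; since $v_j\in C_{i''}(j)$ and $T\ge j$ in this branch (a value $T<j$ would give $v_j\in C_i(T)\subseteq C_i(j-1)$, contradicting that $v_j$ had not yet joined a cluster), fact (ii) forces $i''=i$, hence $v_j\in C_i(j)$. This closes the induction.

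I do not anticipate a genuine obstacle here; the only subtlety --- and the sole reason the statement is not completely trivial --- is that the FNC algorithm breaks ties by taking the minimum-ID cluster, so one cannot simply assert that $v_j$ joins $C_i$ (rather than some smaller-ID cluster) at step $j$. This is precisely where fact (ii) is used: because cluster membership is permanent and exclusive, the eventual uniformity of $P$ inside $C_i$ retroactively pins down every cluster that any $v_j$ could possibly join to be $C_i$ itself.
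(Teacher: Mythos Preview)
Your proposal is correct and follows essentially the same approach as the paper: both arguments induct on the index $j$ along $P$ to show $v_j\in C_i(j)$, using that cluster membership is monotone and exclusive (your facts (i) and (ii), the paper's ``nodes commit to the clusters they join''), and both split the inductive step into the case where $v_j$ has already joined a cluster versus the case where the prefix $P[v_0,v_j]$ forces it to join one at step $j$. Your write-up is somewhat more explicit about why the minimum-ID tie-breaking is harmless, but the underlying argument is the same.
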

\begin{proof}
	For a free node $f_i$ and a node $v$, let $\langle v=v_r, \dots, v_0=f_i \rangle$ be the given path $P$. 
	Let $t$ be an integer such that $P$ is completely contained in cluster $C_i$ after $t$ steps of the FNC algorithm.   
	That is, all nodes of $P$ eventually join the cluster centered at $f_i$. 
	Thus, since nodes commit to the clusters they join, the nodes of $P$ cannot join any other clusters. 
	Let us now by induction show that for all $r'\leq r$, $v_{r'}\in C_i(r')$. 
	Since $v_0\in C_i(0)$, $v_1$ joins $C_i$ in the first step of the FNC algorithm, i.e., $v_1\in C_i(1)$.  
	Now fix an arbitrary integer $r'\leq r$ such that for all $r'' < r'$, $v_{r''}\in C_i(r'')$.  
	Node $v_{r'}$ can only join cluster $C_i$. 
	If it already joined the cluster in the first $r'-1$ steps of the FNC algorithm, then $v_{r'} \in C_i(r')$ since for all $t'\geq 0$, $C_i(t') \subseteq C_i(t'+1)$. 
	Otherwise, node $v_{r'}$ must join cluster $C_i$ in step $r'$ of the FNC algorithm, i.e., $v_{r'} \in C_i(r')$. 
	Hence, we can conclude that for all $r'\leq r$, $v_{r'}\in C_i(r')$.
	That is, for all $r'\leq r$, $v_{r'}\in C_i(r)$, and consequently $P$ is uniform at time $r$.
	
\end{proof}

\begin{definition}[Almost Uniform Paths]
  We say that a path $P$ of $v$ is almost uniform (at time $t$) if
  $\mathbb{V}(P)\setminus \set{v}$ is uniform (at time $t$). Note that every uniform path of $v$ is also almost uniform.
\end{definition}

\begin{definition}[$\vartheta$-Edges]
  A free (i.e., an unmatched) edge is called a $0$-edge, and a matched edge is called a $1$-edge.   
\end{definition}
		
\begin{definition}[$\vartheta$-Paths]
  An alternating path $P$ of $v$ is called a $\vartheta$-path of $v$ if $P$ contains a $\vartheta$-edge adjacent to $v$. 
\end{definition}

		
\begin{definition}[Predecessors]
  We say $u$ is a \textit{predecessor} of $v$ if $u$ is the neighbor
  of $v$ on a shortest uniform alternating path of $v$. 
\end{definition}

		
\begin{definition}[Reachability]
  For an integer $r\geq 0$, we say that $v$ is $r$-$\vartheta$-reachable if
  $v$ has a shortest uniform $\vartheta$-path of length $r$.
  Moreover, we say that $v$ is $r$-reachable if it has a shortest uniform
  alternating path of length $r$.
\end{definition} 
For the sake of consistency, we assume that every free node is $0$-$0$-reachable and $0$-$1$-reachable. 


\subsection{Detecting a Shortest Augmenting Path}\label{sec:path-setup}

	In this section, we show how the existence of a shortest
        augmenting path of length at most $\ell$ can be detected in a
        single round of the \CONGEST model if the nodes of $G$ are provided with the $\ell$-radius free node clustering with respect to a given matching $M$ of $G$ and if in addition the $\ell$-radius free node clustering is \textit{well-formed} in the following sense.

\begin{definition}[Well-Formed Clustering]
  The $r$-radius free node clustering $\mathcal{C}(r)$ is said to be
  well-formed in a distributed setting if for all $r'\leq r$,
  $\vartheta\in \set{0,1}$ and $i$, every $r'$-$\vartheta$-reachable
  node $v\in C_i(r)$, beyond knowing its cluster ID, knows its predecessors and the
  fact of being $r'$-$\vartheta$-reachable.
\end{definition}
 
\noindent 
	Before proving the aforementioned claim that is stated in the next lemma, consider the following definition. 

\begin{definition}[Rank of an Augmenting Path]
  For an integer $\ell$, consider an arbitrary augmenting path $P$ of
  length $\ell$ between any pair of free nodes $f_{s}$ and $f_{t}$.
  Let us name the nodes of $P$ as
  $\langle f_{s}=u_0, \dots, u_{\ell}=f_{t} \rangle$ and let 
  $i$ and $j$ be the largest integers such that the subpaths
  $P[f_{s}, u_{i}]$ and $P[u_{\ell-j}, f_{t}]$ are uniform.  Then, we
  define the \textit{rank} of $P$ to be $i+j$.
\end{definition}

	\begin{lemma}\label{lem:detection}
		Let all nodes of a given graph $G$ be provided with the well-formed $r$-radius free node clustering with respect to a given matching $M$ of $G$.
		If there is a shortest augmenting path of length $\ell\leq r$, then a shortest augmenting path can be detected in a single round of the \CONGEST model. 
	\end{lemma}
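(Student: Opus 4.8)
The plan is to show that the well-formedness of $\mathcal{C}(r)$ together with the existence of a shortest augmenting path $P$ of length $\ell \le r$ forces the existence of \emph{some} shortest augmenting path of rank $2k$ (where $\ell = 2k+1$), and that any such rank-$2k$ augmenting path is immediately detectable by the single-round criterion described after the clustering definition (the two cases about a matched/unmatched edge $\{u,v\}$ crossing two clusters). The detection itself is easy: if $P'=\langle f_s=u_0,\dots,u_\ell=f_t\rangle$ has rank $\ell-1=2k$, then with $i+j=2k$ and $i\le j$ one gets $i=j=k$, so $u_k$ lies in the cluster of $f_s$ and is $k$-reachable inside it while $u_{k+1}$ lies in the cluster of $f_t$ and is $k$-reachable inside it. Since $u_k,u_{k+1}$ are in different clusters (otherwise the rank would exceed $2k$), and $\{u_k,u_{k+1}\}$ is the $(k+1)$-st edge of an alternating path of odd length, one checks that it is a $0$-edge iff $k$ is even, matching the parity of both $k$-reachabilities; so either case 1 or case 2 of the detection rule applies, and by well-formedness both endpoints know they are $k$-$\vartheta$-reachable and know their predecessors, hence one round of \CONGEST over the edge $\{u_k,u_{k+1}\}$ suffices to announce the augmenting path. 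I would state this as a short standalone claim first.

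The heart of the proof is the existence of a rank-$2k$ shortest augmenting path, and I would argue it by contradiction exactly as sketched in \Cref{sec:challenges}: let $P=\langle f=v_0,\dots,v_\ell=f'\rangle$ be a shortest augmenting path of \emph{maximal} rank, and suppose its rank $i+j$ is $< 2k$; w.l.o.g.\ $i\le j$. The first step is an inductive claim: all of $v_{i+1},\dots,v_{\max\{i+1,j\}}$ lie in the cluster of $f'$. The induction is driven by \Cref{lem:uniform} — each suffix $P[v_m,f']$, $m>\ell-j$, is uniform in $C_{f'}$, so the corresponding alternating path is uniform at its own length, which lets us feed it into property-style reasoning about which cluster $v_{i+1}$ can join: using that $v_{i+1}\notin C_f$ (by maximality of $i$) and that $v_{i+1}$ has a uniform alternating suffix into $C_{f'}$ of length $\ell-i-1$, one shows $v_{i+1}\in C_{f'}$, and then pushes the index up one at a time. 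If after this $j\ge \ell-j-1$, i.e.\ $j\ge k$, we have placed $v_{\ell-j-1}$ into $C_{f'}$, contradicting maximality of $j$. So assume $j<k$ (hence $i\le j<k$ and $i+j<2k$ strictly), and $v_{\ell-j-1}$ belongs to some third cluster $C_{f''}$ (possibly $f''=f$).

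For the final contradiction I would do the careful surgery: concatenate a shortest uniform alternating path $Q''$ from $f''$ to $v_{\ell-j-1}$ inside $C_{f''}$, the segment $P[v_{\ell-j-1},f']$ of the original augmenting path, and — from the inductive step — a uniform alternating path $Q'$ from $f'$ to $v_{i+1}$ inside $C_{f'}$, checking parities at the two splice points $v_{\ell-j-1}$ and (if needed) $v_{i+1}$ so that the glued walk is alternating at every internal vertex; one must also ensure the two free endpoints $f''$ and $f'$ are distinct or, if $f''=f$, that the resulting closed-up structure still yields a genuine augmenting path between two free nodes by taking an appropriate sub-walk, and extract a simple augmenting path from the resulting alternating walk. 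Counting lengths, the glued object has length at most $|Q''| + (j+1) + |Q'|$; using $|Q''|\le \ell-j-1$ (its a shortest uniform path and $P[f'',\dots]$-type bounds force this) and $|Q'|\le i+1$, the total is at most $(\ell-j-1)+(j+1)+(i+1) = \ell+i+1$, which is $2k+1+i$ — not obviously short enough — so the real content is that the splicing lets us \emph{reuse} the long middle of $P$ and produce either (a) a strictly shorter augmenting path, contradicting $\ell=2k+1$ minimal, or (b) an augmenting path of length exactly $\ell$ whose rank strictly exceeds $i+j$ because now a longer prefix or suffix is uniform, contradicting maximality of the rank of $P$. I expect the parity bookkeeping at the splice points and the case analysis $f''=f$ versus $f''\notin\{f,f'\}$ to be the main obstacle; \Cref{lem:uniform} is the key lever that makes "uniform at some time" upgrade to "uniform at the right time," and I would lean on it repeatedly to keep the argument finite and clean. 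The routine length arithmetic and the extraction of a simple path from an alternating walk I would relegate to the details in \Cref{sec:path-setup}.
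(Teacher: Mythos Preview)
Your overall plan matches the paper's, but several concrete steps are off and one key tool is missing.

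First, a small but real error in the detection paragraph: rank $2k$ with $i+j=2k$ does \emph{not} force $i=j=k$. Any split with $v_i\in C_{f_s}$ and $v_{i+1}\in C_{f_t}$ works; the edge $\{v_i,v_{i+1}\}$ crosses clusters, $v_i$ has a uniform $(i\bmod 2)$-path of length $\le i$ and $v_{i+1}$ a uniform $((i{+}1)\bmod 2)$-path of length $\le 2k-i$, and the detected augmenting path has length $\le 2k+1$, hence equal. This is easy to fix, but as written it is incorrect.

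The substantive gap is the missing workhorse. The paper isolates a claim you never state: if a node has an \emph{almost uniform} $\vartheta$-path of length at most $k$ that is not uniform, then it has a \emph{uniform} $\vartheta$-path of length at most $k$ (same $\vartheta$). The proof of this claim is where the ``otherwise we get a shorter augmenting path'' contradiction actually lives; it is what lets you convert ``$v_{i+1}\notin C_{f}$ but $P[f,v_{i+1}]$ is almost uniform'' into a uniform path of the right parity inside whatever cluster $v_{i+1}$ did join. You instead appeal to \Cref{lem:uniform}, but that lemma only upgrades ``uniform at some time'' to ``uniform at time $r$''; it does not produce a uniform path from an almost-uniform one, and it says nothing about parity. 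Relatedly, your induction base is argued in the wrong direction: you claim $v_{i+1}$ ``has a uniform alternating suffix into $C_{f'}$ of length $\ell-i-1$,'' but $P[v_{i+1},f']$ is \emph{not} uniform --- only its last $j{+}1$ nodes are known to lie in $C_{f'}$. The paper's argument runs from $f$: $P[f,v_{i+1}]$ is almost uniform in $C_f$, then the claim above yields a short uniform $(i\bmod 2)$-path of $v_{i+1}$ in its own cluster, and a separate contradiction argument (intersecting that path with $P$) forces that cluster to be $C_{f'}$.

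Finally, the surgery. You already noticed your length bound gives $2k+1+i$, which is useless, and you defer the fix. The actual fix is not a different bound on the same concatenation; it is a different concatenation. The paper does not glue at $v_{\ell-j-1}$ and $v_{i+1}$. It takes the uniform path $P_{2k-j}$ of $v_{2k-j}$ (from the claim above), picks the node $v_x$ on $P$ closest to $f_w$ along $P_{2k-j}$, then picks $v_{x'}$ on $P$ closest to $f_t$ along $P_{i+1}$, and concatenates $P_{i+1}[v_{x'},f_t]\circ P[v_{x'},v_x]\circ P_{2k-j}[v_x,f_w]$. The ``closest node on $P$'' choice is what makes the pieces disjoint and the parities line up (with a one-line check that the wrong parity at the splice would already give a short augmenting path). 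Without this closest-node device and without the almost-uniform $\Rightarrow$ uniform claim, your sketch does not close.
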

	\begin{proof}
		Let us assume that there is a shortest augmenting path that is partitioned into two consecutive subpaths such that each subpath is completely contained in a different cluster. 
		Then, since all the nodes are provided with the well-formed $r$-radius free node clustering, the two neighboring nodes at the end of the subpaths can detect the existence of the shortest augmenting path in just a single round of the \CONGEST model.
		To do so, they only need to inform each other about their cluster IDs and their reachabilities in a single round.
		Hence, to prove the lemma it is sufficient to prove that there is a shortest augmenting path that is partitioned into two consecutive subpaths such that each subpath is completely contained in a different cluster. 
		
	Let us assume that the shortest augmenting path is of length $2k+1$.
	Throughout the proof, we consider that the graph is provided with an $r$-radius free clustering for some $r\geq 2k+1$. 
  	Then, to prove the lemma, it is enough to show that there is a shortest augmenting path of rank $2k$.
  	Let us first prove the following helping claim. 
\begin{addmargin}[2em]{0em}
    \begin{claim}\label{clm:shortest-path}
      If an arbitrary node $v$ has an almost uniform $\vartheta$-path of length $k$ that is not uniform, then $v$ has a uniform
      $\vartheta$-path of length at most $k$.
    \end{claim}
    \begin{proof}[Proof of \Cref{clm:shortest-path}]
    	Let $P_1$ be a path from a free node $f_i$ to $v$ that is almost uniform but not uniform. 
	At time $r$, $P_1[f_i, w)$ is uniform since $P_1$ is almost uniform. 
	Moreover, $P_1[f_i, w)$ is of length $k-1$.
      	Therefore, based on \Cref{lem:uniform}, $P_1[f_i, v)$ is uniform at time $k-1$, i.e., $\mathbb{V}(P_1[f, v))\subseteq C_i(k-1)$. 
	Since $v$ has an almost uniform alternating path, it joins some cluster in step $k$ of the FNC algorithm if it has not yet joined any cluster.  
	Hence, we can say that $v$ joins some cluster $C_j$ in the first $k$ steps of the FNC algorithm, i.e., $v\in C_j(k)$ for some $j$.  
	Therefore, $v$ has a shortest uniform alternating path $P_2$ of length at most $k$.  
	Note that $P_1[f_i, v)$ is completely contained in $C_i$, and $P_2$ in $C_j$.
	Hence $P_1[f_i, v)$ and $P_2$ are disjoint. 
	Thus, if $P_2$ is a $(1-\vartheta)$-path of $v$, then the concatenation of $P_1$ and $P_2$ would be an augmenting path of length less than $2k+1$, which contradicts the assumed length of the shortest augmenting path.  	Therefore, $P_2$ is a uniform $\vartheta$-path of $v$, whose length is at most $k$.
	
    \end{proof}      
  \end{addmargin}

  \noindent For the sake of contradiction, let us assume that the
  highest rank of any shortest augmenting path in $\calG$ is $T<2k$.  Let
  $P=\langle f_s=v_0, v_1, \dots, v_{2k+1} =f_t \rangle$ be an
  arbitrary shortest augmenting path of rank $T$.  Then, let $i$ and
  $j$ be the largest integers such that all the nodes of $P[f_s, v_i]$
  and $P[v_{2k+1-j}, f_t]$ are respectively in clusters $C_s$ and
  $C_t$, which implies $i+j=T<2k$.  Without loss of generality, let us
  assume that $i\leq j$.  To show a contradiction to our assumption on
  the highest rank of any shortest augmenting path, let us first prove
  the following claim.
\begin{addmargin}[2em]{0em}
    \begin{claim}\label{clm:allinFt}.
      For $\psi:= \max \left\{ i+1, \ \min\set{j, 2k-j} \right\}$, all the nodes of $P[v_{i+1}, v_\psi]$ are in $C_t$. 
    \end{claim}
    \begin{proof}[Proof of \Cref{clm:allinFt}]
      	We prove the claim by induction on $z\in [i+1, \psi]$.  Let us first show that $v_{i+1} \in C_t$ as the induction base.  
	Due to the choice of
      $i$, node $v_{i+1}$ is not in cluster $C_s$.  Therefore,
      $P[f_s, v_{i+1}]$ is an almost uniform $(i\mod{2})$-path of
      $v_{i+1}$ from $f_s$ while $v_{i+1}$ is in a cluster centered at a
      different free node than $f_s$.  Based on
      \Cref{clm:shortest-path}, $v_{i+1}$ thus has a shortest
      uniform $(i\mod{2})$-path $P_{i+1}$ of length at most $i+1$.
      
      \begin{center}		
	\begin{tikzpicture}
          \draw (-5,0) -- (5, 0);
          \draw[dashed] (0,.5) -- (0, -.5);
          \shade[shading=ball, ball color=blue] (-5,0) circle (.08);
          \node at (-5,.3) {$f_s$};  
          \shade[shading=ball, ball color=red] (5,0) circle (.08);
          \node at (5,.3) {$f_t$}; 
          \shade[shading=ball, ball color=black] (-1,0) circle (.08);
          \node at (-1,.3) {$v_\psi$};
          \shade[shading=ball, ball color=blue] (-3,0) circle (.08);
          \node at (-3,.3) {$v_i$};
          \shade[shading=ball, ball color=black] (-2.5,0) circle (.08);
          \node at (-2.35,.3) {$v_{i+1}$};
          \shade[shading=ball, ball color=red] (1,0) circle (.08);
          \node at (1.5,.3) {$v_{\tiny{2k-j+1}}$};
          \draw[color=blue] (-5, -.2) to [bend right=20] (-3, -.2);
          \draw[color=red] (1, -.2) to [bend right=20] (5, -.2);
          \node[blue] at (-4,-.2) {$\in C_s$};
          \node[red] at (3,-.35) {$\in C_t$};
          \node[blue] at (-4,.1) {$\dots$};
          \node[red] at (3,.1) {$\dots$};
          \node at (-1.7,.1) {$\dots$};
          \node at (0,.1) {$\dots$};
	\end{tikzpicture}	
      \end{center}
      
      For the sake of contradiction, let us assume that $v_{i+1}$ is
      in cluster $C_m$ for $m\neq t$.  Moreover, path
      $P[v_{i+1}, f_t]$ is an $((i+1)\mod{2})$-path of length $2k-i$ of
      $v_{i+1}$.  Therefore, $P'$, the concatenation of $P_{i+1}$ and
      $P[v_{i+1}, f_t]$, is an alternating walk of length at most
      $2k+1$ between $f_m$ and $f_t$.  Note that $P'$ cannot be a path
      since otherwise it would be an augmenting path that is of length less
      than $2k+1$ or rank more than $i+j$.  It then contradicts at
      least one of the two choices of $k$ or $T$.  Therefore,
      $P_{i+1}$ and $P[v_{i+1}, f_t]$ must have a common node in
      addition to $v_{i+1}$.  Any common node of $P_{i+1}$ and
      $P[v_{i+1}, f_t]$ must be in $P[v_{i+1}, v_{2k-j}]$.  That is
      because $P[v_{2k-j+1}, f_t] \in C_t$ and $P_{i+1} \in C_m$,
      where $m\neq t$.  Let $v_x\in P[v_{i+2}, v_{2k-j}]$ be the
      closest node to $f_m$ on $P_{i+1}$.  Note that
      $max\set{|P[f_s, v_x]|, |P[v_x, f_t]|}\leq 2k-i$ since
      $i+1<x\leq 2k-j$ and $i\leq j$.  Moreover, since
      $|P_{i+1}|\leq i+1$ and $v_x\neq v_{i+1}$, path
      $P_{i+1}[v_x, f_m]$ is of length at most $i$.  Hence, since
      $m\neq t$ and $m\neq s$, the concatenation of
      $P_{i+1}[v_x, f_m]$ and either $P[f_s, v_x]$ or $P[v_x, f_t]$ is
      an augmenting path of length less than $2k+1$, which contradicts
      the choice of $k$.  Therefore, $v_{i+1}\in C_t$, i.e., $m=t$.
      Note that if $\psi=i+1$, the proof of the
      \Cref{clm:allinFt} is already completed.  Therefore, for
      the remainder of the claim's proof, let us assume
      $\psi \neq i+1$, i.e., $\psi=\min\set{j, 2k-j}$, and conclude
      the proof by showing the induction step.

      Regarding the induction step, for an arbitrary integer
      $z\in [i+1, \psi-1]$, let us assume that all the nodes of
      $P[v_{i+1}, v_z]$ are in cluster $C_t$ and prove that node
      $v_{z+1}$ is in $C_t$ too.  Let us first show that $v_{z+1}$ has
      an almost uniform $(z\mod{2})$-path of length at most $z+1$ from
      $f_t$.  Let $v_y\in P[v_{i+1}, v_z]$ be the closest node to
      $f_t$ on $P_{i+1}$.  $P[f_s, v_y]$ is of length less than $j$
      since $y\leq z< \psi\leq j$.  Moreover, since
      $|P_{i+1}|\leq i+1$, path $P_{i+1}[v_y, f_t]$ is of length at
      most $i+1$.  Hence, if $P_{i+1}[v_y, f_t]$ is a
      $((y+1)\mod{2})$-path of $v_y$, then the concatenation of
      $P[f_s, v_y]$ and $P_{i+1}[v_y, f_t]$ would be an augmenting
      path of length $j+i+1<2k+1$, contradicting the choice of $k$.
      Therefore, $P_{i+1}[v_y, f_t]$ is a $(y\mod{2})$-path of $v_y$.
      Hence, $P'$, the concatenation of $P_{i+1}[v_y, f_t]$ and
      $P[v_y, v_{z+1}]$ is an almost uniform $(z\mod{2})$-path of
      $v_{z+1}$ from $f_t$ whose length is at most $z+1$.
			
      For the sake of contradiction, let us assume that $v_{z+1}$ is
      in cluster $C_q$ for $q\neq t$.  Then, $v_{z+1}$ is in a
      different cluster than $C_t$ and it has an almost uniform
      $(z\mod{2})$-path of length at most $z+1\leq k$ from $f_t$.
      Hence, based on \Cref{clm:shortest-path}, $v_{z+1}$ has a
      uniform $(z\mod{2})$-path of length at most $z+1$ from $f_q$,
      denoted by $P_{z+1}$.  Now let $v_{y'}\in P[v_{z+1}, v_{2k-j}]$
      be the closest node to $f_q$ on $P_{z+1}$.
      $P_{z+1}[v_{y'}, f_q]$ must be a $(y'\mod{2})$-path of $v_{y'}$
      since otherwise the concatenation of $P[v_{y'}, f_t]$ and
      $P_{z+1}[v_{y'}, f_q]$ would be an augmenting path that is of length
      less than $2k+1$ or rank more than $T$.  Moreover, let
      $v_{y''}\in P[v_{i+1}, v_{y'}]$ be the closest node to $f_t$ on
      $P_{i+1}$.  Then, $P_{i+1}[v_{y''}, f_t]$ is a
      $((y''+1)\mod{2})$-path of $v_{y''}$ since otherwise the
      concatenation of $P_{i+1}[v_{y''}, f_t]$ and $P[f_s, v_{y''}]$
      would be an augmenting path of length less than $2k+1$ or rank
      more than $T$.  Note that $|P_{i+1}[v_{y''}, f_t]|\leq i+1$ and
      $|P_{z+1}[v_{y'}, f_q]|\leq z+1\leq \psi \leq j$.  Therefore,
      since $P[v_{y''}, v_{y'}]$ is of length less than $2k-i-j$, the
      concatenation of $P_{i+1}[v_{y''}, f_t]$, $P[v_{y''}, v_{y'}]$
      and $P_{z+1}[v_{y'}, f_q]$ is an augmenting path of length less
      than $2k+1$ or rank more than $T$, contradicting the choice of
      $k$ or $T$.     
       
    \end{proof}
\end{addmargin}  

  \noindent Now let us first consider the case $j\geq k$.  This
  implies that $2k-j = \min\set{j, 2k-j}$.  Moreover, since $i+j<2k$,
  it holds that $i+1 \leq 2k-j$.  Therefore, $\psi = 2k-j$.  Then,
  based on Claim 2, $v_{2k-j}\in C_t$, which
  contradicts the choice of $j$ and concludes the proof.  Let us then
  assume that $j<k$.  Due to the choice of $j$, let us assume that
  $v_{2k-j}$ is in cluster $C_w$ for $w\neq t$.  Moreover, path
  $P[v_{2k-j}, f_t]$ is an almost uniform $((2k-j)\mod{2})$-path of
  $v_{2k-j}$ from $f_t$ whose length is $j+1$.  Therefore, based on
  Claim 1, $v_{2k-j}$ has a uniform
  $((2k-j)\mod{2})$-path $P_{2k-j}$ of length at most $j+1$ from $f_w$.  Let $v_x\in P[v_{\psi+1}, v_{2k-j}]$ be the closest
  node to $f_w$ on $P_{2k-j}$.  Then, $P_{2k-j}[v_x, f_w]$ is a
  $(x\mod{2})$-path of $v_x$ since otherwise the concatenation of
  $P_{2k-j}[v_x, f_w]$ and $P[v_x, f_t]$ would be an augmenting
  path of length less than $2k+1$.  
  Based on Claim 2, node $v_{i+1}$ is in cluster $C_t$ but not $C_s$.  Therefore,
      $P[f_s, v_{i+1}]$ is an almost uniform but not uniform $(i\mod{2})$-path of
      $v_{i+1}$ from $f_s$.  Based on
      Claim 1, $v_{i+1}$ thus has a shortest
      uniform $(i\mod{2})$-path $P_{i+1}$ of length at most $i+1$.  
  Let $v_{x'}\in P[v_{i+1}, v_{x}]$
  be the closest node to $f_t$ on $P_{i+1}$.  Then,
  $P_{i+1}[v_{x'}, f_t]$ is a uniform $((x'+1)\mod{2})$-path of
  $v_{x'}$ whose length is at most $i+1$.  Then, the concatenation of
  $P_{i+1}[v_{x'}, f_t]$, $P[v_{x'}, v_x]$ and $P_{2k-j}[v_x, f_w]$ is
  an augmenting path that is of length less than $2k+1$ or rank more than
  $i+j$ between $f_t$ and $f_w$, contradicting the choice of $k$ or
  $T$.
    
\end{proof}			
				 
\subsection{Distributed Free Node
  Clustering}\label{sec:distributed-clustering}

In this section, we present a distributed deterministic algorithm whose
$r$-round execution provides the well-formed $r$-radius free node
clustering.  This algorithm uses large messages. However, in
\Cref{sec:congest-adapt}, we show how to use randomness to adapt this algorithm
to the \CONGEST model.  The algorithm makes all the free nodes
(cluster centers) propagate their own IDs along their shortest
alternating paths. It can essentially be seen as a multi-source
breadth first search graph exploration.
To correctly develop
the well-formed free node clustering, it is crucial that the free node
IDs, that we call \textit{tokens}, only traverse paths but not walks
with cycles.  
The algorithm succeeds in preventing the tokens to traverse odd-length cycles by a technique of generating and circulating flows throughout the network as we see in the sequel.

\vspace{.3cm}
\para{Distributed \boldmath{$r$}-Radius Free Node Clustering: DFNC
  Algorithm}
  \vspace{.2cm}

\noindent
The algorithm is run for $r$ rounds.  Let the following
variables be maintained by the nodes during the
execution; $r^{(0)}_v$ and $r^{(1)}_v$ keep track of the $v$'s
reachabilities, $cid_v$ holds the cluster ID of $v$, and $pred_v$
holds the set of $v$'s predecessors.  At the beginning of the
execution, for every free node $v$, variables $r^{(0)}_v$ and
$r^{(1)}_v$ are set to $0$, variable $cid_v$ is set to $v$, and
$pred_v$ is set to $\emptyset$.  Moreover, for every matched node, all
these variables are initially undefined and set to $\perp$.  Every
node $v$ participates in the token dissemination based on the
following simple rule. For an arbitrary integer $t\geq 1$, in round
$t$:
	\begin{itemize}
		\item If $r^{(0)}_v=t-1$, then $v$ sends $cid_v$ over its adjacent $1$-edge (if any). Otherwise, if $r^{(1)}_v=t-1$, then $v$ sends $cid_v$ over all its adjacent $0$-edges (if any).
	\end{itemize}
\noindent
	Then, based on the above simple rule, every node sends tokens to its neighbors in at most two rounds, at most once over its $1$-edge and at most once over its $0$-edges. 
	Token forwarding for a node depends on its variables considering the above simple rule. 
	We already explained how the variables are set for a free node.
	Therefore, in the first round of the execution every free node sends its ID to all its neighbors. 
	Now let us explain how a matched node sets its variables, i.e., which cluster it joins and how it detects its reachabilities and its predecessors. 
	Let round $t$ be the first round in which a node $v$ receives tokens. 
	Let $\tau_1, \dots, \tau_j$ be the tokens that $v$ receives from its neighbors in round $t$.
	Then, $v$ sets $cid_v$ to $\min_i \tau_i$, and subsequently sets $pred_v$ to the set of all its neighbors that sent $cid_v$ to $v$ in round $t$. 
	Now let us explain how node $v$ sets its variables $r^{(1)}_v$ and $r^{(0)}_v$. 
	There are two types of messages sent by the nodes throughout the execution; tokens (i.e., free node IDs) and \textit{flow messages}.  
	Node $v$ sets $r^{(1)}_v$ and $r^{(0)}_v$ based on the received tokens and flow messages. 
	Before we explain how these variables are set by $v$, let us first define the flow messages by explaining flow generation and circulation throughout the network.
	\\[.1cm]
	\noindent
	\textit{Flow Generation:}
		A \textit{flow} is a key-value pair, where the key is an edge and the value is a real number in $[0,1]$. 
		Any flow whose key is some edge $e$ is simply called a \textit{flow of edge $e$}.
		A \textit{flow message} is then defined to be a set of flows (i.e., key-value pairs) that are sent by a node to its neighbor in a round. 
		A flow generation is an event that can only happen
                over an edge for which both endpoints belong to the same cluster. 
		Let us fix an arbitrary edge $e=\set{u,w}$ where both endpoints belong to the same cluster. 
		Then, we say that a flow is generated over edge $e$ if and only if (1) none of $u$ or $w$ is the other one's predecessor, and (2) both $u$ and $w$ send tokens to each other. 
		Note that we only consider at most one flow generation for every edge throughout the whole execution.
		Let us assume that $u$ and $w$ are not each other's predecessors, $u$ sends token to $w$ in round $r_u$, and $w$ sends token to $u$ in round $r_w$. 
		Then, the flow generation over $e$ is defined as an event in which $u$ receives a singleton flow message $\set{(e, 1/2)}$ over $e$ in round $r_w$ and $w$ receives a singleton flow message $\set{(e, 1/2)}$ over $e$ in round $r_u$. 
		It is important to note that nodes $u$ and $w$ might send tokens to each other in different rounds, i.e., $r_u\neq r_w$.	
		However, they cannot perceive the flow generation over $e$ before they make sure that $e$ carries tokens in both directions.
		In particular, if $r_w<r_u$, node $u$ cannot in round $r_w$ perceive the flow receipt over $e$ since it does not yet know whether it will send a token to $w$.
		Hence, $u$ will perceive this flow receipt of round $r_w$ later in round $r_u$ in which it decides to send token over $e$ and then knows that the edge carries tokens in both directions. 
		However, we will see that $u$ does not need to know about the flow receipt of round $r_w$ before round $r_u$. 
	\\[.1cm]
	\noindent	
	\textit{Flow Circulation:} 
	No matter if it receives a flow over its adjacent $0$-edges or its adjacent $1$-edge, every node always forwards the received flow to its predecessors by equally splitting the flow value among them.
	When the edge over which $v$ receives a flow and the edges connected to its predecessors are not all $0$-edges (see \Cref{fig:forwarding-1} and~\ref{fig:forwarding-2}), $v$ forwards the flow immediately in the next round after receiving the flow.
	Otherwise (see \Cref{fig:forwarding-3}), it delays forwarding the flow for $r^{(1)}_v - r^{(0)}_v$ rounds.
	A node furthermore avoids forwarding the whole flow of a
        single edge $e$ in a single round (i.e., a flow of value $1$ of $e$).  
	Let us see the details of the flow circulation in the following. 
	
		\begin{figure}[t]
		\centering
		\begin{subfigure}{.3\textwidth}
 			\centering
  			\begin{tikzpicture}
				\draw[lightgray,-latex,line width=4mm] (-.4,0) -- (2.5,0);
				\node at (-.1,.5) {\scriptsize{flow}};
				\node at (0,.3) {\scriptsize{direction}};
			
				\shade[shading=ball, ball color=black] (0,0) circle (.1);
				\shade[shading=ball, ball color=black] (2,.2) circle (.1);
				\shade[shading=ball, ball color=black] (2,.7) circle (.1);
				\shade[shading=ball, ball color=black] (2,-.7) circle (.1);
				
				\draw[line width=.4mm, dashed] (1,0) - - (2,.2);
				\draw[line width=.4mm, dashed] (1,0) - - (2,.7);
				\draw[line width=.4mm, dashed] (1,0) - - (2,-.7);
				\node at (2,-.18) {$\vdots$};
				
				\node[red] at (2.2,.3) {$*$};
				\node[red] at (2.2,.8) {$*$};
				\node[red] at (2.2,-.6) {$*$};
				
				\coordinate (u) at (1,0);
				\shade[shading=ball, ball color=black] (u) circle (.1);
				\node at (1,.25) {\small{$u$}};

				\draw[line width=.4mm] (u) - - (0,0);			

			\end{tikzpicture}
  			\caption{}
  			\label{fig:forwarding-1}
		\end{subfigure}
		\begin{subfigure}{.3\textwidth}
 			\centering
  			\begin{tikzpicture}
				\draw[lightgray,-latex,line width=4mm] (-.4,0) -- (2.5,0);

				\shade[shading=ball, ball color=black] (2,0) circle (.1);
				\shade[shading=ball, ball color=black] (0,.2) circle (.1);
				\shade[shading=ball, ball color=black] (0,.7) circle (.1);
				\shade[shading=ball, ball color=black] (0,-.7) circle (.1);
				
				\draw[line width=.4mm, dashed] (1,0) - - (0,.2);
				\draw[line width=.4mm, dashed] (1,0) - - (0,.7);
				\draw[line width=.4mm, dashed] (1,0) - - (0,-.7);
				\node at (0,-.18) {$\vdots$};
				
				\node[red] at (2.2, .2) {$*$};
				
				\coordinate (u) at (1,0);
				\shade[shading=ball, ball color=black] (u) circle (.1);
				\node at (1,.25) {\small{$u$}};

				\draw[line width=.4mm] (u) - - (2,0);					
			\end{tikzpicture}
  			\caption{}
  			\label{fig:forwarding-2}
		\end{subfigure}
		\begin{subfigure}{.3\textwidth}
 			\centering
  			\begin{tikzpicture}
				\draw[lightgray,-latex,line width=4mm] (-.4,0) -- (2.5,0);
				\shade[shading=ball, ball color=black] (0,.2) circle (.1);
				\shade[shading=ball, ball color=black] (0,.7) circle (.1);
				\shade[shading=ball, ball color=black] (0,-.7) circle (.1);
				\shade[shading=ball, ball color=black] (2,.2) circle (.1);
				\shade[shading=ball, ball color=black] (2,.7) circle (.1);
				\shade[shading=ball, ball color=black] (2,-.7) circle (.1);		
				\draw[line width=.4mm, dashed] (1,0) - - (0,.2);
				\draw[line width=.4mm, dashed] (1,0) - - (0,.7);
				\draw[line width=.4mm, dashed] (1,0) - - (0,-.7);
				\node at (0,-.18) {$\vdots$};				
				\draw[line width=.4mm, dashed] (1,0) - - (2,.2);
				\draw[line width=.4mm, dashed] (1,0) - - (2,.7);
				\draw[line width=.4mm, dashed] (1,0) - - (2,-.7);
				\node at (2,-.18) {$\vdots$};				
				\node[red] at (2.2,.3) {$*$};
				\node[red] at (2.2,.8) {$*$};
				\node[red] at (2.2,-.6) {$*$};		
				\coordinate (u) at (1,0);
				\shade[shading=ball, ball color=black] (u) circle (.1);
				\node at (1,.25) {\small{$u$}};
			\end{tikzpicture}
  			\caption{}
  			\label{fig:forwarding-3}
		\end{subfigure}
	\caption{The 3 possibilities of flow forwarding. The $u$'s predecessors are marked by asterisks.}
	\label{fig:d}
	\end{figure}
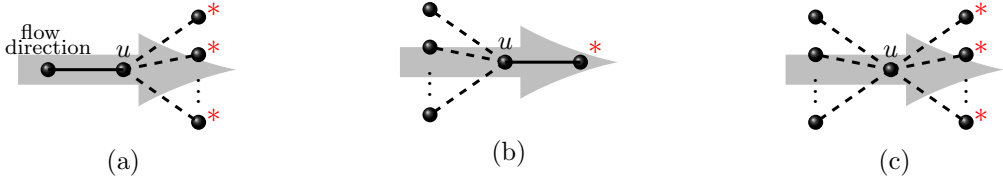

	Let $I_v(t)$ denote the set of all the flows that a node $v$ receives in a round $t$, i.e., the union of all the received flow messages by $v$ in round $t$.
	Moreover, let $O_v(t,e)$ denote the \textit{output buffer} of a node $v$ for its adjacent edge $e$ in a round $t$, which is initially an empty set and eventually sent as a flow message over $e$ by $v$ in round $t$.
	Now let us fix an arbitrary node $v$, where $E_v$ is the set of $v$'s adjacent edges that are connected to its predecessors.
	Node $v$ updates its output buffers in two ways; (1) it updates them with respect to the received flows and (2) it updates them to avoid forwarding the whole unit flow of a specific edge in a single round. 
	Regarding the former case, fix an arbitrary round $t$ in which $v$ receives flows. 
	If the edges over which $v$ receives flows and the edges in $E_v$ are all $0$-edges (\Cref{fig:forwarding-3}), let $t' := t + r^{(1)}_v - r^{(0)}_v +1$.
	Otherwise (see \Cref{fig:forwarding-1} and~\ref{fig:forwarding-2}), let $t':= t+1$. 
	Then, for every $(e,f)\in I_v(t)$, $v$ updates its output buffers as follows:
	$$ \forall e'\in E_v: \ \ O_{v}(t', e') \leftarrow O_{v}(t', e') \cup (e, \frac{f}{|E_v|}) \text{ .}$$	
	Now regarding the latter way of output buffers update, fix an arbitrary round $t'$.
	At the beginning of round $t'$, let $O_v(t')$ be the set of all the flows in the output buffers of $v$ for round $t'$, i.e., $O_v(t') := \bigcup_{e'\in E_v} O_v(t', e')$.
	Then, let $S_v(e,t')$ be the sum of flow values of a specific edge $e$ that are sent by $v$ in round $t'$, i.e., $S_v(e,t'):= \sum_{(e,f)\in O_v(t')} f$.
	For every edge $e$, if $S_v(e,t') = 1$, node $v$ removes all flows of edge $e$ from all its output buffers of round $t'$.
	That is, $v$ removes all flows of $e$ and we say that $v$ \textit{discards} the flow of $e$. 
	After discarding all such flows, for every $e'\in E_v$, $v$ forwards $O_v(t', e')$ over edge $e'$ in round $t'$ if $O_v(t', e') \neq \emptyset$. 
	\\[.2cm]
	\noindent
	\textit{Setting Variables $r^{(1)}_v$ and $r^{(0)}_v$ (Reachability Detection):}
		We say that round $t$ is an \textit{incomplete round} for $v$ if node $v$ sends flow in round $t+1$. 
		Let $t$ be the first round in which $v$ receives tokens or be the first incomplete round for $v$.
		If $t$ is an even integer, $v$ assigns $t$ to $r^{(1)}_v$, otherwise, $v$ assigns $t$ to $r^{(0)}_v$.
		Note that the first round that a node receives a token (if any) is before its first incomplete round (if any) since it receives flows from the nodes it has already sent tokens to.


\section{The Analysis of the DFNC Algorithm}\label{sec:analysis} 
	In this section, we show that an $r$-round execution of the DFNC algorithm provides the well-formed $r$-radius free node clustering as stated in the following lemma.
	
	\begin{lemma}\label{lem:DFNC-proof}
		For any integer $r$, an $r$-round execution of the DFNC algorithm on a graph $G$ and a matching $M$ of $G$ provides the well-formed $r$-radius free node clustering of $G$ with respect to $M$. 
	\end{lemma}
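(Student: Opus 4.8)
The plan is to prove \Cref{lem:DFNC-proof} by a single induction on the round number $t$: assuming that everything determined by the first $t-1$ rounds of the DFNC algorithm is as intended, establish the same for round $t$. Concretely, I would maintain two invariants. The \emph{clustering invariant}: after $t$ rounds, for every $i$ the set of nodes with $cid_v=i$ equals $C_i(t)$, and every such node has $pred_v$ equal to its true predecessor set. The \emph{reachability invariant}: for every $t'\le t$ and $\vartheta\in\set{0,1}$, a node $v$ has $r^{(\vartheta)}_v=t'$ iff $v$ is $t'$-$\vartheta$-reachable; taking $t=r$ this is precisely well-formedness. The clustering invariant is the easier half: given the reachability invariant through round $t-1$, the token-forwarding rule replays exactly one step of the FNC algorithm, since a node emits its token over its $1$-edge in round $t$ precisely when it is $(t-1)$-$0$-reachable and over its $0$-edges in round $t$ precisely when it is $(t-1)$-$1$-reachable. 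Hence $v$ first receives a token in round $t$ iff it has an alternating path of length $t$ that is almost uniform at time $t-1$ and no shorter such path, which by \Cref{lem:uniform} and the definition of the FNC algorithm is exactly the condition under which $v$ joins a (minimum-ID) cluster in step $t$; the delivering neighbours of the minimum token are then exactly $pred_v$. This also shows tokens never traverse a walk containing a cycle: each node forwards its token at most once over each of its two edge types, and only in the round dictated by its (correct) reachability values.

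The substance of the proof, and the main obstacle, is the reachability invariant, i.e., that the flow mechanism makes each bireachable node detect its second reachability at exactly the right round while never producing a false positive. I would begin by proving the structural fact announced in \Cref{sec:clusteringalgoutline}: a node is bireachable iff it lies strictly inside a reachable minimal odd cycle. Then, within the same induction, I would pin down the trajectory of every flow: for an edge $e=\set{u,w}$ over which a flow is generated (the endpoints send each other tokens and are not mutual predecessors), the two halves of its unit flow travel backward from $u$ and from $w$ along the predecessor-DAGs, split equally at each branching, and are held for $r^{(1)}_v-r^{(0)}_v$ rounds at every node $v$ whose flow enters and leaves only along $0$-edges. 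The crucial timing claim is that, precisely because of this delay rule, the flow of $e$ reaches each node $v$ on its trajectory in the round equal to the length of the corresponding uniform alternating path of $v$ through $e$; in particular the two halves reach the stem $s$ of the associated reachable minimal odd cycle in the same round, so $S_s(e,\cdot)=1$ and $s$ discards the flow of $e$, whereas every node strictly inside that cycle has by then already received a strictly fractional amount of it. Turning the informal discussion of middle edges, shifted flow generation on an auxiliary edge $e''$, nested and interconnected odd cycles, and paths with consecutive $0$-edges into a clean inductive argument for this timing claim is where I expect the real difficulty to lie.

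With the flow trajectories and their timing established, the reachability invariant follows: by the characterization above, a node $v$ has a first incomplete round (the first round after which it sends a non-discarded flow) exactly when it is bireachable, and that round equals the length of $v$'s shortest uniform path of the parity opposite to its first reachability; since those two parities are opposite, $r^{(0)}_v$ and $r^{(1)}_v$ receive the two correct values, and --- by the timing claim --- early enough that $v$ emits its token over the remaining edge type at the right round, which is exactly what the clustering induction needs to carry forward. No false positives arise because, by the converse half of the same characterization together with the discarding rule, any strictly fractional flow a node ever receives is part of the flow of an edge whose odd cycle that node lies strictly inside, so a non-bireachable node never has an incomplete round. I would close by noting that throughout \Cref{sec:analysis} flows are carried separately per edge (messages are unbounded here), so distinct flows cannot accidentally sum to an integer; the extra care needed once flows for different edges are aggregated to respect the $O(\log n)$-bit bound is precisely what \Cref{sec:congest-adapt} supplies via random splitting over a finite field.
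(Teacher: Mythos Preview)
Your overall induction scheme and the clustering half coincide with the paper: \Cref{lem:base,lem:proper-clustering,lem:step2} establish exactly your clustering invariant from the reachability invariant at the previous round, via the same token-replay argument.

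The reachability half, however, diverges substantially. You propose to formalize the odd-cycle picture from \Cref{sec:clusteringalgoutline}: prove the characterization ``bireachable $\Leftrightarrow$ strictly inside a reachable minimal odd cycle,'' then track each flow to its stem and argue the two halves arrive simultaneously. The paper explicitly does \emph{not} pursue this; that characterization is stated only as intuition and is never proved. The difficulty you yourself flag---nested and interconnected odd cycles, shifted flow generation onto non-middle edges, and flows traveling along predecessor paths that need not be alternating---is real, and your plan supplies no mechanism to resolve it beyond naming it.

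What the paper does instead is to work entirely with \emph{shortcuts} (predecessor paths) and their \emph{promoted length}, defined so that the promoted length of any shortcut of an $\ell$-reachable node is exactly $\ell$ (\Cref{lem:promoted}); this single lemma absorbs all the delay bookkeeping. With it, \Cref{lem:discards} shows that every fragment of the flow of a fixed edge $e$ that reaches a given node is scheduled to be resent in one and the same round---the synchronization you need, proved without ever mentioning odd cycles or stems. The completeness direction ($v$ is $r$-$\vartheta$-reachable $\Rightarrow r_v^{(\vartheta)}=r$, \Cref{lem:step3-odd}) then follows from a marking/labeling argument on shortcuts (\Cref{lem:good-edge}) that locates a specific edge whose flow reaches $v$ incompletely by round $r$. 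The soundness direction (\Cref{lem:step4}) is the harder half and is where your proposal is thinnest: from an incomplete flow of some edge $e$ the paper \emph{constructs} an actual uniform $\vartheta$-path of length $r$ for $v$, using a disjoint-shortcuts lemma (\Cref{lem:disjoint-shortcuts}), a separator lemma for pairs of shortest uniform paths (\Cref{lem:sep}), and tenacity comparisons (\Cref{lem:tenacity1,lem:tenacity2}), assembled by the path-construction procedure of \Cref{alg:disc}. None of this machinery appears in your plan, and it is precisely what replaces the odd-cycle argument you left open.
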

	
	\noindent
	We show the correctness of this lemma by induction on $r$. 
	For the entire \Cref{sec:analysis}, we fix an arbitrary graph $\calG=(V_{\calG},E_{\calG})$ and an arbitrary matching $\calM$ of $\calG$, where $\set{f_1,\dots,f_{\rho}}\subseteq V_{\calG}$ are the free nodes.
	Thus, throughout this section, when we refer to unmatched (free) nodes, matched nodes, augmenting paths, etc., they are always considered in $\calG$ with respect to $\calM$. 
	Let $\calE$ also be an arbitrary execution of the DFNC algorithm on $\calG$ and $\calM$.  
	For all $i$ and $t$, let $C_i(t)$ denote the set of the nodes in the cluster centered at free node $f_i$ in the $t$-radius free node clustering of $\calG$ with respect to $\calM$. 
	For all $i$ and $t$, let $D_i(t)$ also denote the set of the nodes that join the cluster centered at free node $f_i$ in the first $t$ rounds of $\calE$.
	Then, for all $t$, $\calC(t):= \set{C_1(t), \dots, C_\rho(t)}$ and $\calD(t):= \set{D_1(t), \dots, D_\rho(t)}$ are respectively the $t$-radius free node clustering and the clustering provided by the first $t$ rounds of $\calE$.
	
	We provide the necessary arguments for the induction proof in three sections. 
	In Section I, we start with proving the induction base, i.e., $\calD(0) = \calC(0)$. 
	Thereafter, we consider the following assumption as the induction hypothesis of the proof:
	
\vspace{.2cm}
\begin{addmargin}[2em]{2em}
	\textbf{\textit{I.H.}} \ For every $t<r$, the first $t$ rounds of $\calE$ provides the well-formed $t$-radius free node clustering of $\calG$ with respect to $\calM$.
\end{addmargin}
\vspace{.2cm}

\noindent
	We will then show that the clustering provided by the first $r$ rounds of $\calE$ is the $r$-radius free node clustering, i.e., $\calD(r) = \calC(r)$. 
	Moreover, it will be shown that the variables holding the set of predecessors of the nodes that join clusters in round $r$ of $\calE$ are correctly set. 
	We continue to prove that the provided clustering is also well-formed in the next two sections.
	In Section II, we show that for every node $v$, $r^{(\vartheta)}_v = r$ if $v$ is $r$-$\vartheta$-reachable.
	In Section III, we show that for every node $v$, $v$ is $r$-$\vartheta$-reachable if $r^{(\vartheta)}_v = r$. 
	Putting all these pieces together completes the induction argument and proves \Cref{lem:DFNC-proof}.
	%
\subsection{Section I}
	Let us start with the proof of the induction base in following lemma. 
	\begin{lemma}\label{lem:base}
		$\calD (0)$ is the well-formed $0$-radius free node clustering of $\calG$, i.e., $\calC(0) = \calD(0)$. 
	\end{lemma}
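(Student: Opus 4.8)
The plan is to simply unwind the two definitions involved and check that they agree, and then verify the (essentially vacuous) well-formedness conditions. First I would observe that by the description of the FNC algorithm, in step $0$ each free node joins the cluster centered at itself and no other node joins any cluster, so $C_i(0)=\set{f_i}$ for every $i$ (recall also that the free nodes are removed from $V'$ at the outset, so they can never later join a different cluster). On the execution side, at the start of any execution $\calE$ of the DFNC algorithm we have $cid_v=v$ for every free node $v$ and all relevant variables equal to $\perp$ for every matched node; since no messages have been exchanged after $0$ rounds, no matched node has joined a cluster, so $D_i(0)=\set{f_i}$. Comparing the two gives $\calC(0)=\calD(0)$.

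Next I would verify well-formedness of $\calD(0)$. The definition requires that for all $r'\le 0$, $\vartheta\in\set{0,1}$ and $i$, every $r'$-$\vartheta$-reachable node $v\in C_i(0)$ knows its cluster ID, knows its predecessors, and knows that it is $r'$-$\vartheta$-reachable. Only $r'=0$ is relevant, and by the stated convention the set of $0$-$0$-reachable nodes and the set of $0$-$1$-reachable nodes both coincide with the set of free nodes. So it suffices to check the three conditions for each free node $f_i$: it has $cid_{f_i}=f_i$, hence knows its cluster ID; its only shortest uniform alternating path is the trivial length-$0$ path $\langle f_i\rangle$, which contains no node other than $f_i$, so $f_i$ has no predecessors, consistent with $pred_{f_i}=\emptyset$; and it has $r^{(0)}_{f_i}=r^{(1)}_{f_i}=0$, so it records being $0$-$0$-reachable and $0$-$1$-reachable.

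There is no real obstacle in this base case — it is entirely a bookkeeping check that the initializations of the DFNC algorithm match the $t=0$ instance of the FNC definition and of the reachability/predecessor terminology. The one subtlety worth stating explicitly is that $0$-$\vartheta$-reachability holds only by convention (a length-$0$ path has no edge and thus a priori contains no $\vartheta$-edge), so the argument must invoke that convention rather than attempt to exhibit a $\vartheta$-edge adjacent to a free node.
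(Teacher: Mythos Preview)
Your proposal is correct and follows essentially the same approach as the paper's proof: both simply observe that at time $0$ each clustering consists of the singletons $\set{f_i}$, and then verify the well-formedness conditions by checking the initial values $cid_{f_i}=f_i$, $pred_{f_i}=\emptyset$, and $r^{(0)}_{f_i}=r^{(1)}_{f_i}=0$ against the (conventional) $0$-$\vartheta$-reachability of the free nodes. Your write-up is slightly more explicit about invoking the convention for $0$-$\vartheta$-reachability, but otherwise the arguments are the same.
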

	\begin{proof}
		Consider the state of the nodes right at the beginning of $\calE$, i.e., at time $0$. 
		Then, the provided clustering is the set of singleton sets containing the free nodes, which is the same as the $0$-radius free node clustering of $\calG$, i.e., $\calC(0) = \calD(0)$. 
		Every free node $v$ is $0$-$0$-reachable and $0$-$1$-reachable while $r^{(0)}_v = 0$ and $r^{(1)}_v = 0$. 
		Moreover, we have no other $0$-$0$-reachable or $0$-$1$-reachable nodes except the free nodes in $\calG$. 
		Furthermore, every free node $v$ has no predecessors while $pred_v = \emptyset$.
		Therefore, the clustering provided by the DFNC algorithm at the beginning of the execution, i.e., at time $0$, is the well-formed $0$-radius free node clustering of $\calG$.
		
              \end{proof}

	Now let us show that the provided clustering after $r$ rounds of $\calE$ is actually the $r$-radius free node clustering. 
	
	\begin{lemma}\label{lem:proper-clustering}
		Assuming that \textit{I.H.} holds, the clustering provided by the first $r$ rounds of $\calE$ is the $r$-radius free node clustering, i.e., $\calD (r) = \calC(r)$. 
	\end{lemma}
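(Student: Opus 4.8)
The plan is to reduce the statement to a single clean characterization of token arrivals and then observe that this characterization is exactly the rule the sequential FNC algorithm uses in step $r$. By \textit{I.H.} we already have $\calD(t)=\calC(t)$ and well-formedness for every $t<r$; in particular $C_j(t)=D_j(t)$ for $t<r$, and for every node $u$ and $\vartheta\in\set{0,1}$ we have $r^{(\vartheta)}_u=t$ iff $u$ is $t$-$\vartheta$-reachable, for $t<r$. Since in $\calE$ a node joins a cluster precisely in the first round in which it receives a token (committing to the minimum-ID token and setting $pred$ to the neighbours that delivered it), and since a node already clustered after $r-1$ rounds of $\calE$ is, by \textit{I.H.}, exactly a node of $\calC(r-1)$, it suffices to prove: for every node $v$ unclustered after $r-1$ rounds and every free node $f_j$, node $v$ receives the token $f_j$ in round $r$ of $\calE$ if and only if $v$ has an alternating path $P$ of length $r$ from $f_j$ with $\mathbb{V}(P)\setminus\set{v}\subseteq C_j(r-1)$. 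Given this, the minimum token $v$ receives in round $r$ is the minimum-ID free node satisfying the FNC condition of step $r$, so $D_i(r)\setminus D_i(r-1)=C_i(r)\setminus C_i(r-1)$ for every $i$ and hence $\calD(r)=\calC(r)$; moreover $pred_v$, being the set of neighbours that delivered $cid_v$ in round $r$, then coincides with the predecessor set of $v$ as defined.

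First I would prove the ``only if'' (soundness) direction by tracing a token back hop by hop. If $v$ receives $f_j$ from a neighbour $u$ over an edge $e$ in round $r$, then $u$ sent $cid_u=f_j$ over $e$ in round $r$, so by the forwarding rule either $r^{(0)}_u=r-1$ and $e$ is the $1$-edge of $u$, or $r^{(1)}_u=r-1$ and $e$ is a $0$-edge of $u$ (note $u$ must have joined its cluster by round $r-1$, since otherwise $r^{(\vartheta)}_u=r\neq r-1$). By the well-formedness part of \textit{I.H.}, $u$ is correspondingly $(r-1)$-$0$-reachable or $(r-1)$-$1$-reachable, hence has a shortest uniform $\vartheta$-path $Q$ of length $r-1$ to $f_j$ of the matching type. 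Extending $Q$ by $e$ (and $v$) preserves the alternation, since the $\vartheta$-edge of $u$ on $Q$ and the $(1-\vartheta)$-edge $e$ are of opposite types. The only point to rule out is $v\in\mathbb{V}(Q)$: were this so, $Q$ would be a uniform path containing $v$, hence by \Cref{lem:uniform} uniform already at time $r-1$, giving $v\in C_j(r-1)=D_j(r-1)$, i.e.\ $v$ clustered by round $r-1$, a contradiction. Thus $Q$ extended by $e$ is an alternating path of length $r$ from $f_j$ to $v$ with interior $\mathbb{V}(Q)\subseteq C_j(r-1)$, which iterating down to $f_j$ completes this direction.

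Next I would prove the ``if'' direction together with the timing statement. Given $P=\langle f_j=v_0,v_1,\dots,v_r=v\rangle$ with $v_0,\dots,v_{r-1}\in C_j(r-1)=D_j(r-1)$, each $v_\ell$ with $\ell<r$ has joined $C_j$, so $cid_{v_\ell}=f_j$; moreover $P[f_j,v_\ell]$ is a uniform alternating path of length $\ell$ whose edge at $v_\ell$ forces $v_\ell$ to forward its cluster ID over $\set{v_\ell,v_{\ell+1}}$ (the matched edge of $v_\ell$ when $r^{(0)}_{v_\ell}$ is at stake, an unmatched edge otherwise), so $v_\ell$ is $\ell'$-$\vartheta$-reachable for some $\ell'\le\ell$ of that type and, by well-formedness, forwards $f_j$ over $\set{v_\ell,v_{\ell+1}}$ in round $\ell'+1\le\ell+1$. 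Inductively, $v$ receives $f_j$ by round $r$. Finally, an unclustered-after-$(r-1)$ node $v$ has, by \textit{I.H.} and \Cref{lem:uniform}, no alternating path of length $<r$ with interior contained in some $C_m(r-1)$ (any such path would have made $v$ join a cluster by round $r-1$), so by the soundness direction $v$ receives no token before round $r$; hence its first --- and only relevant --- receipt of $f_j$ is exactly in round $r$. Combining the two directions gives the characterization, and with \textit{I.H.} this yields $\calD(r)=\calC(r)$ and the correctness of the predecessor variables set in round $r$.

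The main obstacle is the soundness direction's demand that traced-back token sequences be genuine simple alternating paths rather than walks through cycles. This is exactly where the well-formedness part of \textit{I.H.} is indispensable: it guarantees that the values $r^{(\vartheta)}_u$ truly record lengths of \emph{shortest uniform paths}, so that the reconstructed $Q$ is a path; and \Cref{lem:uniform} is what lets us upgrade ``eventually uniform'' to ``uniform at the right time'' so the revisit case can be excluded. The deeper reason the $r^{(\vartheta)}_u$ values are trustworthy --- namely that the flow-circulation mechanism prevents a node from being deemed reachable through an odd closed walk --- is not touched here; it is folded into \textit{I.H.} and established separately in the subsequent sections.
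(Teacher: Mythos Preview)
Your proof is correct and follows essentially the same approach as the paper: both argue, via \textit{I.H.} and \Cref{lem:uniform}, that for an unclustered node $v$ the tokens received in round $r$ correspond exactly to the free nodes $f_j$ for which $v$ has an almost uniform alternating path of length $r$, and that the minimum on each side therefore matches. The only organizational difference is that you package this as a single bidirectional characterization (and fold in the predecessor correctness, which the paper states separately as \Cref{lem:step2}), whereas the paper does two explicit element-chasing containments $C_i(r)\setminus C_i(r-1)\subseteq D_i(r)\setminus D_i(r-1)$ and vice versa.
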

	\begin{proof}
		In addition to having for all $t<r$, $\calD(t) = \calC(t)$, the provided clustering by the first $r-1$ rounds of $\calE$ (i.e. $\calD(r-1)$) is assumed to be well-formed. 
		We only need to show that for all $i$, $C_i(r)\setminus C_i(r-1) = D_i(r)\setminus D_i(r-1)$. 
		To that end, for any $i\in [1,\rho]$, fix an arbitrary node $c\in C_i(r)\setminus C_i(r-1)$ and an arbitrary node $d\in D_i(r)\setminus D_i(r-1)$. 
		It is then enough to show that $c\in D_i(r)\setminus D_i(r-1)$ and $d\in C_i(r)\setminus C_i(r-1)$. 
		
		Let us first show that $c\in D_i(r)\setminus D_i(r-1)$. 
		To do so, we show that $c$ joins the cluster centered at $f_i$ in the $r^{th}$ round of $\calE$. 
		Considering the DFNC algorithm, it is then enough to show the following three points; (1) $c$ does not receive any token before round $r$, (2) $c$ receives token $f_i$ in round $r$, and (3) $f_i$ is the minimum one among all the tokens that $c$ receives in round $r$.
		Regarding the first point, since $c\in C_i(r)\setminus C_i(r-1)$, node $c$ joins cluster $C_i$ in the $r^{th}$ step of the FNC algorithm. 
		Therefore, $c$ is not in any cluster of $\calC(r-1)$ and hence not in any cluster of $\calD(r-1)$.
		That is, $c$ does not join any cluster in the first $r-1$ rounds of $\calE$.
		Hence, node $c$ does not receive any token before round $r$ as every node joins a cluster in the first round of receiving a token. 
		Regarding the second point, note that $c$ joins cluster $C_i$ in the $r^{th}$ step of the FNC algorithm.
		Therefore, $c$ has an almost uniform alternating path $P_c$ of length $r$ from $f_i$ in clustering $\calC(r-1)$, and consequently in $\calD(r-1)$ of $\calG$.
		Let $w$ be $c$'s neighbor on $P_c$ and connected to $c$ by a $\vartheta$-edge for some $\vartheta\in \set{0,1}$.
		Let us show that $P_c[f_i, w]$ is a shortest uniform $(1-\vartheta)$-path of $w$ in $\calD(r-1)$. 
		For the sake of contradiction, let us assume that $w$ has a uniform $(1-\vartheta)$-path $P_c'$ of length $t<r-1$. 
		Then, based on \Cref{lem:uniform}, $P_c'$ is uniform at time $t$ and hence $\mathbb{V}(P_c') \subseteq C_i(t)$. 
		Node $c$ is not in any cluster of $\calC(t)$ and thus not in $P_c'$. 
		Thus, the concatenation of $P_c'$ and $\langle w, c\rangle$ is an almost uniform alternating path of length less than $r$ of $c$ at time $t$. 
		Therefore, $c$ would join a cluster at latest in round $t$ of the FNC algorithm, which contradicts $c$ not being in any cluster of $\calC(r-1)$. 
		Hence, $P_c[f_i, w]$ is a shortest uniform $(1-\vartheta)$-path of $w$ in $\calD(r-1)$.
		Thus, $w$ is $(r-1)$-$(1-\vartheta)$-reachable and hence $r_{w}^{(1-\vartheta)} = r-1$. 
		Therefore, $w$ sends token $cid_w=f_i$ to $c$ in round $r$. 
		Now regarding the last point, for the sake of contradiction, let us assume that $c$ receives a token $f_j<f_i$ over a $\vartheta$-edge, say $\set{c, w'}$, in round $r$. 
		Then, $r^{(1-\vartheta)}_{w'} = r-1$ in $\calE$. 
		Then, since $\calD(r-1)$ is well-formed, $w'$ is $(r-1)$-$(1-\vartheta)$-reachable in clustering $\calD(r-1)$ and consequently in $\calC(r-1)$ of $\calG$. 
		Let $P_c''$ be a shortest uniform $(1-\vartheta)$-path of $w'$ in clustering $\calC(r-1)$ of $\calG$. 
		Based on \Cref{lem:uniform}, $P_c''$ is uniform at time $r-1$, i.e., $\mathbb{V}(P_c'')\subseteq C_j(r-1)$.
		However, $c\not \in P_c''$ because $c\in C_i(r) \setminus C_i(r-1)$ and hence $c\not\in C_j(r-1)$. 
		Thus, the concatenation of $P_c''$ and $\langle w',c\rangle$ is an almost uniform alternating path of length $r$ of $c$ from $f_j$ in $\calC(r-1)$, where $f_j<f_i$. 
		Therefore, $c$ must have not joined $C_i$ in the $r^{th}$ step of the FNC algorithm, which contradicts the choice of $c$. 
		
		Now let us show that $d\in C_i(r)\setminus C_i(r-1)$.
		To do so, we show that $d$ joins the cluster centered at $f_i$ in the $r^{th}$ step of the FNC algorithm execution on $\calG$. 
		Therefore, it is enough to show the following three points; (1) $d$ does not join any cluster in the first $r-1$ steps of the FNC algorithm, (2) $d$ has an almost uniform alternating path of length $r$ in clustering $\calC(r-1)$ and (3) $f_i$ is the minimum-ID free node from which $d$ has an almost uniform alternating path of length $r$ in clustering $\calC(r-1)$. 
		Regarding the first point, note that $d$ is not in $\calD(r-1)$ and consequently not in $\calC(r-1)$. 
		Therefore, $d$ does not join any cluster in the first $r-1$ steps of the FNC algorithm. 
		Regarding the second point, since $d\in D_i(r)\setminus D_i(r-1)$, $d$ joins the cluster centered at $f_i$ in the $r^{th}$ round of $\calE$. 
		Therefore, for some node $z$ and integer $\varphi \in \set{0,1}$, $d$ receives token $f_i$ from $z$ over its $\varphi$-edge in round $r$, and hence $r^{(1-\varphi)}_z = r-1$.
		As a result, since the first $r-1$ rounds of $\calE$ provides the well-formed $(r-1)$-radius free node clustering, $z$ must have a shortest uniform $(1-\varphi)$-path $P_d$ of length $r-1$ in clustering $\calD(r-1)$ and consequently in $\calC(r-1)$.
		Based on \Cref{lem:uniform}, $P_d$ must be uniform at time $r-1$, whereas $d$ is not in any cluster of $\calD(r-1)$. 
		Therefore, $d\not \in P_d$ and hence the concatenation of $P_d$ and $\langle z, d \rangle$ is an almost uniform $\varphi$-path of length $r$ in $\calC(r-1)$. 
		Now regarding the last point, for the sake of contradiction, let us assume that $d$ has an almost uniform alternating path $P_d'$ of length $r$ from a free node $f_{j'}< f_i$ in $\calC(r-1)$ and consequently in $\calD(r-1)$.
		Let $z'$ be $d$'s neighbor on $P_d'$. 
		Let us show that $P_d'[f_{j'}, z']$ is a shortest uniform $(1-\varphi)$-path of $z'$. 
		Let $z'$ have a shorter uniform $(1-\varphi)$-path $P_d''$ of length $t'<r-1$. 
		Then, based on \Cref{lem:uniform}, $P_d''$ is uniform at time $t'$ in clustering $\calD(t')$. 
		Then, since $d$ is not in any cluster of $\calD(t')$, the concatenation of $P_d''$ and $\langle z', d\rangle$ would be an almost uniform alternating path of length less than $r$, and hence $d$ would join some cluster before round $r$ of $\calE$, which is contradictory. 
		Therefore, $P_d'[f_{j'}, z']$ is a shortest uniform $(1-\varphi)$-path of length $r-1$ of $z'$.
		Therefore, $r_{z'}^{(1-\varphi)} = r-1$, and hence $z'$ sends token $f_{j'}$ to $d$ in round $r$. 
		This contradicts node $d$ joining $C_i$ in round $r$ since $f_{j'}<f_j$. 
		
	\end{proof}	
				
	As a final step of this section, we state in the following lemma that predecessors are correctly set in the execution of the DFNC algorithm. 		
	\begin{lemma}\label{lem:step2}
		Assuming that \textit{I.H.} holds, for all $i$ and every node $v\in D_i(r)$, $pred_v$ is properly set to $v$'s predecessors in round $r$ of $\calE$. 
	\end{lemma}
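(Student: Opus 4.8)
The plan is to prove the claim by splitting on whether $v$ joined its cluster strictly before round $r$ or exactly in round $r$. Recall that $pred_v$ is assigned only once, in the first round in which $v$ receives tokens, and recall that by \Cref{lem:proper-clustering} (which holds under \textit{I.H.}) we have $\calD(r)=\calC(r)$. If $v\in D_i(r-1)$, then $pred_v$ was assigned in some round $r'\le r-1$, and by \textit{I.H.}\ (well-formedness of $\calD(r-1)$) it equals the set of $v$'s predecessors in $\calC(r-1)$. I would then observe that this set does not change when passing to $\calC(r)$: a predecessor of $v$ is a neighbor on a \emph{shortest} uniform alternating path of $v$, and all such paths have the same fixed length, namely the $r_v\le r-1$ for which $v$ is $r_v$-reachable; by \Cref{lem:uniform}, any uniform alternating path of $v$ of length $r_v$ that is present in $\calC(r)$ is already uniform at time $r_v$, hence present in $\calC(r-1)$. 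Thus the family of shortest uniform alternating paths of $v$, and therefore the predecessor set, is the same in $\calC(r-1)$ and $\calC(r)$, so $pred_v$ is still correct.

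For $v\in D_i(r)\setminus D_i(r-1)$ I would reuse facts established in the proof of \Cref{lem:proper-clustering}: $v$ receives its first tokens in round $r$, $cid_v=f_i$, and $v$ joins $C_i$ in step $r$ of the FNC algorithm; consequently $v$ is $r$-reachable and every shortest uniform alternating path of $v$ has length exactly $r$, since any shorter uniform alternating path of $v$ would, by \Cref{lem:uniform}, be uniform before time $r$ and force $v$ to join a cluster earlier. Since $pred_v$ is set to the neighbors that send $cid_v=f_i$ to $v$ in round $r$, it suffices to show that a neighbor $u$ of $v$ sends $f_i$ to $v$ in round $r$ if and only if $u$ lies on a shortest uniform alternating path of $v$.

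For the forward direction, let $u$ send $cid_u=f_i$ to $v$ in round $r$ over an edge $\set{u,v}$ of type $\vartheta$. By the token-dissemination rule this forces $r^{(1-\vartheta)}_u=r-1$ (here the fact that $0$-paths have odd length and $1$-paths even length is what makes the rule unambiguous and pins the round down), and $u$ has committed to $cid_u=f_i$, i.e.\ $u\in C_i$. By \textit{I.H.}, $u$ is $(r-1)$-$(1-\vartheta)$-reachable, so it has a shortest uniform $(1-\vartheta)$-path $P$ of length $r-1$, necessarily contained in $C_i$; by \Cref{lem:uniform}, $P$ is uniform at time $r-1$, and since $v\notin C_i(r-1)$ we have $v\notin P$, so appending the $\vartheta$-edge $\set{u,v}$ to $P$ yields a uniform alternating path of $v$ of length $r$ through $u$ --- a shortest one by the previous paragraph. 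Hence $u$ is a predecessor of $v$. For the converse, let $P$ be a shortest uniform alternating path of $v$ through $u$ (length $r$) and let $\vartheta$ be the type of $\set{u,v}$; then $P[f_i,u]$ is a uniform $(1-\vartheta)$-path of $u$ of length $r-1$, and it is a \emph{shortest} one, because a shorter uniform $(1-\vartheta)$-path of $u$ would --- being uniform before time $r-1$ by \Cref{lem:uniform}, and not containing $v$ --- extend via $\set{u,v}$ to an almost uniform alternating path of $v$ of length $<r$, contradicting that $v$ first joins a cluster in step $r$. So $u$ is $(r-1)$-$(1-\vartheta)$-reachable, whence $r^{(1-\vartheta)}_u=r-1$ by \textit{I.H.}, and the dissemination rule makes $u$ send $cid_u=f_i$ over its $\vartheta$-edges, in particular over $\set{u,v}$, in round $r$. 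Combining the two directions gives $pred_v=\{\text{predecessors of }v\}$.

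The step I expect to demand the most care is this last equivalence for newly joined nodes: one has to check that the parities align so that the token rule fires in precisely round $r$ (using that a $\vartheta$-edge at $v$ on a shortest path forces the matching parity of the path length and of $u$'s reachability index), and one has to invoke \Cref{lem:uniform} repeatedly and correctly to slide between ``uniform at some time'' and ``uniform at exactly the time we need''. The old-node case is comparatively routine once one notes that short uniform alternating paths are stable under cluster growth.
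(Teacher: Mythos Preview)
Your proof is correct and follows essentially the same approach as the paper. For the newly joined nodes $v\in D_i(r)\setminus D_i(r-1)$, your two directions match the paper's argument almost verbatim: use the token rule to translate ``$u$ sends $f_i$ over a $\vartheta$-edge in round $r$'' into $r^{(1-\vartheta)}_u=r-1$, invoke \textit{I.H.}\ for $(r-1)$-$(1-\vartheta)$-reachability, and apply \Cref{lem:uniform} to guarantee $v\notin P$ so that appending $\set{u,v}$ yields a length-$r$ uniform alternating path.

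The one place you go beyond the paper is the case $v\in D_i(r-1)$: the paper's proof handles only the newly joined nodes and leaves the old ones to \textit{I.H.}, without spelling out that the predecessor set of $v$ is the same in $\calC(r-1)$ and $\calC(r)$. Your observation that every shortest uniform alternating path of $v$ has length $r_v\le r-1$ and hence, by \Cref{lem:uniform}, is already uniform at time $r_v$, makes this stability explicit. This is a small but genuine addition of rigor.
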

	\begin{proof}
		In addition to having for all $t<r$, $\calD(t) = \calC(t)$, the provided clustering by the first $r-1$ rounds of $\calE$ (i.e. $\calD(r-1)$) is assumed to be well-formed.
		Here we show that for all $i$ and every node $v\in D_i(r)\setminus D_i(r-1)$, variable $pred_v$ is properly set to the set of $v$'s predecessors in round $r$ of $\calE$. 
		Fix an arbitrary node $v\in D_i(r)\setminus D_i(r-1)$ for any $i$. 
		We only need to show that $v$ receives token $f_i$ from node $w$ in round $r$ if and only if $w$ is a predecessor of $v$. 
	
		First, let us fix an arbitrary predecessor $w_1$ of $v$, and show that $w_1$ sends token $f_i$ to $v$ in round $r$. 
		Let $P$ be a shortest uniform alternating path of $v$ on which $w_1$ is $v$'s neighbor. 
		Note that $|P| = r$. 
		Let $P$ be a $\vartheta$-path for some $\vartheta \in \set{0,1}$. 
		Now let us show that $P[f_i, w_1]$ is a shortest uniform $(1-\vartheta)$-path of $w_1$. 
		For the sake of contradiction, let us assume that $w_1$ has uniform $(1-\vartheta)$-path $P'$ of length $t < r-1$.  
		Based on \Cref{lem:uniform}, path $P'$ is uniform at time $t$, i.e., $\mathbb{V}(P')\subseteq C_i(t)$.
		However, $v\notin C_i(t)$ since $v\in D_i(r)\setminus D_i(r-1)$, where $t<r-1$. 
		Therefore, $v\not\in P'$, and hence the concatenation of $P'$ and $\langle w_1, v \rangle$ would be an almost uniform $\vartheta$-path of length less than $r$ of $v$. 
		Hence, $v$ must have joined some cluster before round $r$, which is contradictory. 
		Therefore, $P[f_i, w_1]$ is a shortest uniform $(1-\vartheta)$-path of $w_1$.
		Hence, $w_1$ is $(r-1)$-$(1-\varphi)$-reachable. 
		Since $\calD(r-1)$ is well-formed, it thus holds that $r^{(1-\varphi)}_{w_1} = r-1$. 
		As a result, $w_1$ sends $f_i$ to $v$ in round $r$. 
	
		Second, let us fix an arbitrary node $w_2$ that sends token $f_i$ to $v$ in round $r$ over a $\vartheta$-edge for some $\vartheta \in \set{0,1}$. 
		Thus, $cid_{w_2} = f_i$ and $r^{(1-\vartheta)}_{w_2} = r-1$. 
		Therefore, since $\calD(r-1)$ is the well-formed $(r-1)$-radius free node clustering of $\calG$, $w_2$ has a shortest uniform alternating path $P''$ of length $r-1$, which is a $(1-\vartheta)$-path. 
		Based on \Cref{lem:uniform}, $P''$ is uniform at time $r-1$ whereas node $v$ has not yet joined any cluster. 
		Hence, $v\not \in P''$ and therefore the concatenation of $P''$ and $\langle w_2, v \rangle$ is analternating path. 
		It will become a uniform $\vartheta$-path of length $r$ of $v$ after node $v$ joins $D_i$ in round $r$ of $\calE$. 
		Therefore, $w_2$ is the neighbor of $v$ on a uniform $\vartheta$-path of $v$, and hence a predecessor of $v$. 
		
		\end{proof} 	
%
\subsection{Section II}
	In this section, we show that after $r$ rounds of the DFNC algorithm execution, for all $\vartheta\in \set{0,1}$, every $r$-$\vartheta$-reachable node properly detects its $r$-$\vartheta$-reachability as stated in the following Lemma:
	\begin{lemma}\label{lem:step3-odd}
		Assuming that \textit{I.H.} holds, for every node $v$ and integer $\vartheta \in \set{0,1}$, $r^{(\vartheta)}_v = r$ in execution $\calE$ if $v$ is $r$-$\vartheta$-reachable. 
	\end{lemma}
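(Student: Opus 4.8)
The plan is to split into two cases according to whether the uniform $\vartheta$-path of length $r$ witnessing $r$-$\vartheta$-reachability of $v$ is itself a \emph{shortest} uniform alternating path of $v$. First note a parity fact used throughout: along an alternating path leaving a free node the first edge is unmatched and the types alternate, so the edge incident to $v$ on a length-$r$ path (edge number $r$) is a $0$-edge exactly when $r$ is odd; since $v$ is $r$-$\vartheta$-reachable this forces $\vartheta=0\iff r$ odd, which is exactly the parity under which the reachability-detection rule, applied to a trigger round $r$, sets $r^{(\vartheta)}_v:=r$. In \textbf{Case 1}, $v$'s shortest uniform alternating path has length $r$ (hence is a $\vartheta$-path). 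By \Cref{lem:proper-clustering}, $\calD(r)=\calC(r)$, so $v\in D_i(r)\setminus D_i(r-1)$ for the relevant $i$: $v$ receives its first token, and joins a cluster, exactly in round $r$ of $\calE$. Moreover $v$ has no incomplete round before round $r$: $v$ sends no token before round $r+1$, so no flow is generated on an edge incident to $v$ before round $r+1$, and (since flows move only toward cluster centers, from nodes of strictly larger shortest-path length) $v$ receives no forwarded flow before round $r$ either. Hence round $r$ is the first round processed by the reachability-detection rule, and by the parity fact it sets $r^{(\vartheta)}_v=r$.

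In \textbf{Case 2}, $v$'s shortest uniform alternating path has length $r_v<r$. Since $r$ is the smallest length of a uniform $\vartheta$-path of $v$, this shorter path is a $(1-\vartheta)$-path; in particular $r_v\not\equiv r\pmod 2$ and $v$ is bireachable. By \textit{I.H.}\ we have $r^{(1-\vartheta)}_v=r_v$, so it suffices to show that $v$'s \emph{first} incomplete round is $r$ (its parity then places it into $r^{(\vartheta)}_v$ by the parity fact). Using the characterization recalled in \Cref{sec:clusteringalgoutline}, $v$ is strictly inside some reachable minimal odd cycle $\gamma$ with stem $s$, which we may choose so that concatenating a shortest uniform path from the cluster center to $s$ with the arm of $\gamma$ from $s$ to $v$ reaching $v$ through its $\vartheta$-edge is a shortest uniform $\vartheta$-path of $v$, of length $r$. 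The argument then rests on four ingredients: (i) a flow of total value $1$ is generated on the relevant edge of $\gamma$ — its middle edge in the simplest case, or a shifted edge as in \Cref{fig:flow3} — because, by \textit{I.H.}, the two endpoints of that edge exchange tokens and are not each other's predecessors, which by definition triggers generation; (ii) following this flow as each node passes it only to its predecessors, with the one-round delay and the additional $r^{(1)}_u-r^{(0)}_u$-round delay across consecutive $0$-edges, the first round in which a positive fraction of it reaches $v$ is exactly $r$, the shift-and-delay correction compensating precisely for the gap between the middle-edge and shifted-edge generation positions (\textit{I.H.}\ guarantees every intermediate node already knows its cluster ID, reachabilities, predecessors and delay parameter); (iii) the total value of that edge's flow collected by $v$ in round $r$ is strictly below $1$ — $v$ being strictly inside, not a stem — so $v$ does not discard it and forwards a positive fraction in round $r+1$, making round $r$ an incomplete round (in the large-message model flows of distinct edges stay separate, so this is immediate; aggregation is handled in \Cref{sec:congest-adapt}); (iv) $v$ has no incomplete round before $r$, since an incomplete round $t<r$ would mean $v$ forwards a flow in round $t+1$, and unwinding that flow back to its generation edge using \textit{I.H.}\ and the structure of reachable minimal odd cycles yields a uniform $\vartheta$-path of $v$ strictly shorter than $r$, contradicting minimality.

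The main obstacle is the combination of (ii) and (iv): precisely pinning down the round in which the (possibly shifted and deliberately delayed) flow first reaches $v$, and certifying that no other flow reaches $v$ too early, in the presence of arbitrarily nested and interconnected reachable minimal odd cycles. This calls for an auxiliary inductive analysis of the flow circulation itself — essentially that every flow of every edge travels over $1$-edges only in even rounds and over $0$-edges only in odd rounds, that a node collects a full unit of an edge's flow exactly when it is a stem of a corresponding odd cycle, and that the first-arrival time of a flow at a node equals that node's relevant shortest-alternating-path length. Lining up the round parities with the two kinds of delays, and controlling the interaction of shifted generations across nested cycles, is where the real work lies.
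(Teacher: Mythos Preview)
Your Case~1 is essentially the paper's argument, though slightly over-argued: the algorithm description already records that the first token-receipt round precedes any incomplete round, so once you know $v$ first receives a token in round~$r$ and invoke the parity observation, you are done; the extra paragraph about flows not reaching $v$ earlier is unnecessary.

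Your Case~2 takes a genuinely different route from the paper, and the gap you flag at the end is real. You build the argument on the ``reachable minimal odd cycle'' picture from \Cref{sec:clusteringalgoutline}; the paper explicitly says that picture is only for intuition and does \emph{not} use it in the formal proof. Instead, the paper introduces \emph{shortcuts} (predecessor-following paths), their \emph{promoted length} (actual flow traversal time including delays), and a marking/labeling scheme leading to a ``good-edge'' lemma (\Cref{lem:good-edge}). That lemma locates, on any shortest uniform $\vartheta$-path $P$ of $v$, an edge $\{u,w\}$ such that $u$ has a marked shortcut through $v$ while $w$ does not. The paper then verifies the flow-generation conditions on $\{u,w\}$, uses the $P^-_v$ label of $w$ to certify that part of the flow bypasses $v$ (hence $v$ never sees the full unit), uses the $P^+_v$ labels along $P[u,v]$ to certify no intermediate node discards the flow, and finally computes via promoted lengths (\Cref{lem:promoted}) a clean chain of inequalities showing the flow reaches $v$ no later than round~$r$. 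That last computation is exactly your obstacle~(ii), and the shortcut/promoted-length machinery is what makes it a five-line calculation rather than an open-ended induction over nested odd cycles.

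Your point~(iv) is also handled more cheaply than you propose: the paper simply notes that by \textit{I.H.}\ the clustering is well-formed through round $r-1$ in both directions, so an incomplete round $t<r$ would force $r^{(\vartheta)}_v=t$ and hence $t$-$\vartheta$-reachability, contradicting that $v$'s shortest uniform $\vartheta$-path has length $r$. No flow-unwinding is needed. In short, your outline is sound at the level of intuition, but turning the odd-cycle picture into a proof would require rebuilding much of the paper's shortcut machinery anyway; the paper's labeling approach is the cleaner formalization.
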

	To prove this Lemma, we need to first point out a few observations about flow circulation throughout the network while running the DFNC algorithm. 
	To that end, we provide a series of necessary definitions and helper lemmas in the sequel.  
	Since every node forwards the received flows to their predecessors, the flows do not necessarily traverse alternating paths.
	We call the paths along which a node sends flows towards the cluster center the node's \textit{shortcuts}.
		Consider the following definition for a more precise description of a shortcut. 
		\begin{definition}[Shortcuts]
			For any matched node $v$ and free node $f_i$, a uniform path $P:=\langle v = v_0, v_1, \dots , v_\ell = f_i\rangle$ is called a shortcut of $v$ if for all $j<\ell$, $v_{j+1}$ is the predecessor of $v_j$. 
		\end{definition}	
	\begin{lemma}\label{lem:reach}
		Assuming that \textit{I.H.} holds, let distinct nodes $u$ and $v$ respectively be $r_u$-reachable and $r_v$-reachable for $r_v\leq r_u \leq r$. 
		Then, $v$ has no shortcut containing $u$. 
	\end{lemma}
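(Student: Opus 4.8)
The plan is a short proof by contradiction, built on the observation that the reachability value strictly decreases as one moves along a shortcut toward its cluster center. So suppose that $v$ does have a shortcut $P=\langle v=v_0,v_1,\dots,v_\ell=f_i\rangle$ that contains $u$. Since $u\neq v$, we have $u=v_k$ for some index $k$ with $1\le k\le\ell$. By the definition of a shortcut, $v_{j+1}$ is a predecessor of $v_j$ for every $j<\ell$, and $P$ is uniform, so $u$ and $v$ in fact lie in the same cluster $C_i$.

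The crux is the following elementary claim: if a node $w$ is a predecessor of a node $x$, then $r_w\le r_x-1$. To prove it, one unfolds the definition of predecessor: $w$ is the neighbour of $x$ on some shortest uniform alternating path $Q$ of $x$, so $|Q|=r_x$. Removing the last node $x$ from $Q$ leaves $Q[f_i,w]$, which is again alternating (a prefix of an alternating path) and uniform (its vertex set is contained in $\mathbb{V}(Q)$), and has length $r_x-1$; hence $w$ has a uniform alternating path of length $r_x-1$, and so its shortest one has length at most $r_x-1$, i.e.\ $r_w\le r_x-1$. In particular no free node can appear as some $v_j$ with $j<\ell$, since by \Cref{lem:base} free nodes have no predecessors; this is consistent with $P$ running from the matched node $v$ to the free node $f_i$.

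Applying this claim along the shortcut and inducting on $j$ gives $r_{v_j}\le r_v-j$ for all $j\le\ell$. Taking $j=k$ and combining with the hypothesis $r_v\le r_u$ yields $r_v\le r_u=r_{v_k}\le r_v-k$, hence $k\le 0$, contradicting $k\ge 1$. This proves that $v$ has no shortcut containing $u$. I expect the whole argument to be this short; the only step that requires any care is the crux claim that a predecessor has strictly smaller reachability value, and even that is an immediate consequence of unwinding the definitions of predecessor and of $r$-reachability, with \Cref{lem:uniform} available if one wishes to be pedantic about the time at which the subpath $Q[f_i,w]$ is uniform. The statement is purely combinatorial (it refers only to the free node clustering and notions derived from it), so the hypothesis \textit{I.H.}, while available throughout \Cref{sec:analysis}, does not seem to be needed for this particular lemma.
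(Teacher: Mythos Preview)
Your proof is correct and follows the same core idea as the paper's: both argue by contradiction and establish that reachability strictly decreases as one moves along a shortcut from $v$ toward the cluster center, so any node $u$ appearing on a shortcut of $v$ must have reachability strictly smaller than $r_v$.

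The difference is in how this monotonicity is established. The paper routes the argument through the DFNC execution $\calE$: it invokes \textit{I.H.} together with \Cref{lem:proper-clustering} to identify a node's reachability with the first round in which it receives a token, and then observes that a predecessor must have sent (and hence received) a token strictly earlier. You instead argue directly from the combinatorial definitions: if $w$ is a predecessor of $x$, take a shortest uniform alternating path $Q$ of $x$ through $w$ and note that $Q[f_i,w]$ is itself a uniform alternating path of $w$ of length $r_x-1$. Your observation that \textit{I.H.} is not actually needed here is correct; the lemma is a purely structural statement about the FNC clustering, and your route makes that transparent. The paper's detour through $\calE$ buys nothing extra for this lemma, though it is consistent with the inductive framework in which the section is written.
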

	\begin{proof}
		For the sake of contradiction, let us assume that $v$ has a shortcut $S$ that contains $u$. 
		Let us name the nodes in $S[v, u]$ as $\langle v = w_0, \dots , w_m = u \rangle$ for some integer $m$. 
		To show a contradiction, we prove by induction that $r_v > r_u$. 
		To that end, we prove that for all $j<m$, the reachability of $w_j$ is strictly greater than that of $w_{j+1}$. 
		For the induction base, we prove that the reachability of $v=w_0$ is strictly greater than that of $w_1$.
		Considering \textit{I.H.} and \Cref{lem:proper-clustering}, $\calD(r) = \calC(r)$. 
		Therefore, since $v$ is $r_v$-reachable, $v$ must receive a token for the first time in round $r_v$. 
		Hence, since $w_1$ is the predecessor of $v$, $w_1$ must send a token to $v$. 
		Therefore, $w_1$ must have a received a token before round $r_v$. 
		Let $t<r_v\leq r$ be the first round that $w_1$ receives a token.
		Due to \textit{I.H.}, $w_1$ is $t$-reachable.
		Therefore, the reachability of $v=w_0$ is strictly greater than that of $w_1$.
		Now since the reachability of $w_1$ is less than that of $v$ and hence less than $r$, we can inductively employ \textit{I.H.} and prove that the reachability of $w_m = u$ is less than that of $w_0 = v$. 
		That is, $r_v > r_u$, which is contradictory. 
		
              \end{proof}
	
	\begin{lemma}\label{lem:reach2}
		Assuming that \textit{I.H.} holds, let a $t$-reachable node $v$ have a shortcut $S$ that contains a $t'$-$\vartheta$-reachable node $u$ for any integers $t', t\leq r$ and $\vartheta\in \set{0,1}$.
		If $u$'s adjacent edge on $S[v, u]$ is a $(1-\vartheta)$-edge, then $t' <t$.  
	\end{lemma}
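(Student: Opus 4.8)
The plan is to argue entirely inside the combinatorial FNC clustering $\calC$ and to pin down $t'$ by exhibiting an explicit uniform $\vartheta$-path of $u$ of length at most $t-1$. Write the shortcut as $S=\langle v=v_0,v_1,\dots,v_\ell=f_i\rangle$, where $f_i$ is the cluster center of $v$, and let $u=v_m$. The hypothesis that $u$ has an adjacent edge on $S[v,u]$ forces $m\ge1$. As a first reduction, if $u$ is free then, since a shortcut meets only one free node — its endpoint $f_i$, the other vertices having predecessors — we have $u=f_i$, so $t'=0$ by the convention that free nodes are $0$-$\vartheta$-reachable, while $t\ge1$ because $v$ is a matched node; hence $t'<t$ and there is nothing more to do. So I will assume from now on that $u$ is matched, hence $u\ne f_i$.

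Next I would record that reachability strictly decreases along $S$. For each $j<\ell$ the node $v_{j+1}$ is a predecessor of $v_j$, so it is the neighbor of $v_j$ on some shortest uniform alternating path $Q_j$ of $v_j$, whose length is the reachability $r_j$ of $v_j$; the prefix $Q_j[f_i,v_{j+1}]$ is a uniform alternating path of $v_{j+1}$ of length $r_j-1$, so $r_{j+1}\le r_j-1$. (This is the monotonicity already used inside the proof of \Cref{lem:reach}.) Iterating from $v_0=v$, for which $r_0=t$ by hypothesis, gives $r_{m-1}\le t-(m-1)\le t$.

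Then I would build the path. Since $u=v_m$ is a predecessor of $v_{m-1}$, there is a shortest uniform alternating path $P$ of $v_{m-1}$, of length $r_{m-1}$, on which $u$ is the neighbor of $v_{m-1}$; write $P=\langle f_i,\dots,w,u,v_{m-1}\rangle$, where the vertex $w$ exists because $u\ne f_i$ forces $|P|=r_{m-1}\ge2$. The last edge $\{u,v_{m-1}\}$ of $P$ is exactly $u$'s adjacent edge on $S[v,u]$, which is a $(1-\vartheta)$-edge by hypothesis, so by alternation the preceding edge $\{w,u\}$ is a $\vartheta$-edge. Hence $P[f_i,u]=\langle f_i,\dots,w,u\rangle$ is a uniform alternating path from the free node $f_i$ to $u$ containing a $\vartheta$-edge adjacent to $u$ — that is, a uniform $\vartheta$-path of $u$ — and its length is $r_{m-1}-1$. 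Since $u$ is $t'$-$\vartheta$-reachable, $t'$ is the length of its \emph{shortest} uniform $\vartheta$-path, so $t'\le r_{m-1}-1\le t-m\le t-1<t$, the penultimate step using $m\ge1$, which is the claim.

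I do not expect a real obstacle here: once the definitions are unwound the statement is essentially forced. The only points needing care are the parity bookkeeping — checking that the alternation at $u$ puts a $\vartheta$-edge, not a $(1-\vartheta)$-edge, on the cluster-center side of $P$ — and the degenerate cases ($m\ge1$, $u$ possibly free, and $r_{m-1}\ge2$ so that $w$ is defined). Finally, as with \Cref{lem:reach} and the lemmas following it, the hypothesis \textit{I.H.} is carried along only for uniformity of the framing across \Cref{sec:analysis}: reachability, predecessors, and shortcuts are defined directly in terms of $\calC$, so the argument is purely combinatorial.
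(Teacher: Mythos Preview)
Your proof is correct and follows essentially the same approach as the paper: take $u$'s neighbor $w=v_{m-1}$ on $S[v,u]$, use that $u$ is a predecessor of $w$ to get a shortest uniform alternating path $P$ of $w$ through $u$, observe that $P[f_i,u]$ is a uniform $\vartheta$-path of $u$ by the parity hypothesis, and compare lengths. The only cosmetic differences are that you re-derive the strict monotonicity of reachability along $S$ inline (the paper cites \Cref{lem:reach} for this) and you handle the degenerate case $u=f_i$ explicitly, giving the slightly sharper bound $t'\le t-m$.
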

	\begin{proof}
		Let $v$ be in cluster $C_i$, and hence $S$ is a path between $v$ and $f_i$. 
		Let $w$ be $u$'s neighbor in $S[v, u]$, where consequently $\set{w, u}$ is a $(1-\vartheta)$-edge. 
		Node $u$ is the predecessor of $w$.
		Therefore, $u$ is $w$'s neighbor in a shortest uniform alternating path $P$ of $w$. 
		Since $\set{w, u}$ is a $(1-\vartheta)$-edge and $w$ is connected to its predecessors, namely $u$, with $(1-\vartheta)$-edges, every shortest uniform alternating path of $w$ must be a $(1-\vartheta)$-path. 
		Thus, $P$ is a $(1-\vartheta)$-path.  
		Hence $P[u, f_i]$ is a uniform $\vartheta$-path of $u$. 
		Since $u$ is $t'$-$\vartheta$-reachable, it holds that $|P[u, f_i]| \geq t'$. 
		Therefore, $|P|>t'$, that is $w$ is $t''$-reachable for some $t'<t''$. 
		If $w = v$, the proof is concluded.
		Otherwise, based on \Cref{lem:reach}, the reachability of $w$ is strictly less than that of $v$, which concludes $t'<t$. 
		
		\end{proof}

	To prove \Cref{lem:step3-odd}, we benefit from some specific way of marking some of the shortcuts and accordingly labeling some of the nodes.
	\\[.3cm]
	\noindent
	\textbf{Shortcut marking. } Let $P$ be an arbitrary uniform $\vartheta$-path of any node $v$ in the cluster centered at some free node $f_i$. 
	Then, we \textit{mark} a shortcut $S$ of a node $z$ with respect to $P$ when for every edge $\set{u,w}\in P$ that $u$ is the predecessor of $w$, if $S$ contains $w$ then $w$'s neighbor on $S[w, f_i]$ is $u$. 
	\\[.3cm]
	\noindent
	\textbf{Node labeling. } Considering path $P$, we label a node with $P^+_v$ when it has a marked shortcut with respect to $P$ that contains $v$. 
	Further, we label a node with $P^-_v$ when it has no marked shortcut with respect to $P$ that contains $v$.
	Note that each node can have either label $P^+_v$ or $P^-_v$, but not both. 

	\begin{lemma}\label{lem:good-edge}
		Assuming that \textit{I.H.} holds, let an arbitrary node $v$ in cluster $C_i$ be $r$-$\vartheta$-reachable and $r'$-$(1-\vartheta)$-reachable, where $r'<r$.
		There is an edge $\set{w, u}$ on every shortest uniform $\vartheta$-path $P$ of $v$ such that all the nodes in $P[u , v]$ are labeled $P^+_v$, node $w$ is labeled $P^-_v$ and $|P[f_i , w]|\geq r'$.
	\end{lemma}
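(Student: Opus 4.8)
The plan is to obtain the edge $\{w,u\}$ as the place where, walking along $P$ from $f_i$ towards $v$, the vertex labels switch from $P^-_v$ to $P^+_v$. Fix a shortest uniform $\vartheta$-path $P=\langle f_i=p_0,p_1,\dots,p_r=v\rangle$ of $v$. Two facts about the ends of $P$ get the argument started. First, $p_0=f_i$ has no predecessors (its shortest uniform alternating path is the trivial one $\langle f_i\rangle$), so its only shortcut is $\langle f_i\rangle$; since $r\ge 1$ we have $v\neq f_i$, so this shortcut does not contain $v$ and $p_0$ is labelled $P^-_v$. Second --- the heart of the proof, carried out in the next paragraph --- $v=p_r$ is labelled $P^+_v$. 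Granting this and using that every vertex carries exactly one of the two labels, the largest index $a$ for which $p_a$ is labelled $P^-_v$ is well defined and satisfies $0\le a\le r-1$; I set $w:=p_a$ and $u:=p_{a+1}$, so by maximality of $a$ every vertex of $P[u,v]$ is labelled $P^+_v$ and $w$ is labelled $P^-_v$, and the lemma reduces to showing $|P[f_i,w]|=a\ge r'$.

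To prove $v$ is labelled $P^+_v$ I would construct a shortcut $S$ of $v$ that is marked with respect to $P$; since every shortcut of $v$ contains $v$, this is enough. Build $S$ greedily, starting at $v$: at a current vertex $x\neq f_i$, if $P$ contains an edge $\{y,x\}$ with $y$ a predecessor of $x$, take $y$ as the next vertex of $S$; otherwise take an arbitrary predecessor of $x$. Such a $y$, when it exists, is unique: the two edges of $P$ incident to $x$ have opposite types ($P$ is alternating), whereas all predecessors of $x$ are joined to $x$ by edges of one fixed type --- the type determined by the parity of $x$'s reachability, since a uniform $0$-path has odd length and a uniform $1$-path has even length. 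A predecessor of a vertex has strictly smaller reachability than the vertex itself (the associated shortest uniform alternating path loses its last edge), so reachabilities strictly decrease along $S$; hence $S$ visits no vertex twice and must terminate at the unique reachability-$0$ vertex $f_i$, so $S$ is a genuine shortcut of $v$. Finally $S$ is marked with respect to $P$: whenever $S$ passes through a vertex $x$ for which $P$ has an edge $\{y,x\}$ with $y$ a predecessor of $x$, the construction chose $y$, i.e.\ $y$ is the $S$-neighbour of $x$ on the side of $f_i$.

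It remains to show $a\ge r'$, which I would do by contradiction. Suppose $a<r'$. Since $r'<r$, the vertex $p_{r'}$ lies on $P$ and has index larger than $a$, so it is labelled $P^+_v$; in particular $p_{r'}$ has a shortcut containing $v$. On the other hand, $P[f_i,p_{r'}]$ is a uniform alternating path of $p_{r'}$ of length $r'$, so $p_{r'}$ is $t$-reachable for some $t\le r'$; and since $v$ is $r'$-$(1-\vartheta)$-reachable with $r'<r$, no uniform alternating path of $v$ is shorter than $r'$ (a shorter one would be a $\vartheta$-path contradicting $r$-$\vartheta$-reachability, or a $(1-\vartheta)$-path contradicting $r'$-$(1-\vartheta)$-reachability), so $v$ is $r'$-reachable. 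Now $p_{r'}\neq v$ and $t\le r'\le r$, so \Cref{lem:reach}, applied with its ``$v$'' taken to be $p_{r'}$ and its ``$u$'' taken to be $v$, tells us that $p_{r'}$ has no shortcut containing $v$ --- contradicting the fact just established that $p_{r'}$ has a shortcut containing $v$. Hence $a\ge r'$, and $\{w,u\}=\{p_a,p_{a+1}\}$ has all the claimed properties.

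The step I expect to require the most care is the greedy construction of the marked shortcut $S$ of $v$: one has to make sure the forced choice at a vertex of $P$ is unambiguous (which is exactly where the ``single reachability parity'' of a vertex is used) and that, even though the construction can be forced to move back towards $v$ along $P$, the strictly decreasing reachability along $S$ still prevents $S$ from ever revisiting a vertex, so that $S$ is really a shortcut. Everything else --- the two endpoint labels, the choice of the cut index $a$, and the final contradiction via \Cref{lem:reach} --- is routine.
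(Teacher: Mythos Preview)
Your proof is correct and follows essentially the same approach as the paper: both arguments show that $v$ carries the label $P^+_v$, that the node at distance $r'$ from $f_i$ on $P$ carries the label $P^-_v$ (via \Cref{lem:reach}), and then take the switching edge closest to $v$. The only organizational difference is that the paper establishes the $P^-_v$ label of $p_{r'}$ up front and restricts attention to $P[v,p_{r'}]$, whereas you first show $p_0=f_i$ is labelled $P^-_v$, locate the switching index $a$ on all of $P$, and then separately argue $a\ge r'$; your contradiction in that last step is exactly the paper's direct argument for $p_{r'}$. Where you genuinely add something is the explicit greedy construction of a marked shortcut of $v$: the paper simply asserts ``every node in $P$ has a marked shortcut'' and moves on, while you justify it carefully (including the uniqueness of the forced predecessor on $P$, which is precisely where alternation of $P$ and the single edge-type of a node's predecessors are used).
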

	\begin{proof}
		Throughout the proof consider all marking and labeling with respect to path $P$. 
		Let us name the nodes in $P$ as $\langle v=v_0, \dots , v_r=f_i \rangle$.	
		Then, let $t$ be the integer such that $|P[v_t, f_i]| = r'$. 
		Since $r'<r$, $P[v, v_t]$ is of length at least $1$.  
		Since $P[v_t, f_i]$ is of length $r'$, node $v_t$ is $r''$-reachable for some $r''\leq r'$ whereas $v$ is $r'$-reachable. 
		Then, based on \Cref{lem:reach}, $v_t$ has no shortcut containing $v$ and is consequently labeled $P^-_v$. 
		Note that every node in $P$ has a marked shortcut.
		Therefore, since every shortcut of $v$ obviously contains $v$, node $v$ has a marked shortcut containing itself and thus labeled $P^+_v$. 
		Path $P[v, v_t]$ has one endpoint labeled $P^+_v$ and one endpoint labeled $P^-_v$.
		Therefore, since all the nodes in $P[v, v_t]$ have either label $P^+_v$ or $P^-_v$, there is an edge in $P[v, v_t]$ whose endpoints have different labels. 
		Therefore, edge $\set{u, w}$ is the closest edge to $v$ on $P$ whose endpoints have different labels.
		
              \end{proof}

	Let us next study the time it takes for a flow to traverse shortcuts. 
	To do so, we define the \textit{promoted length} of a shortcut (or a consecutive subpath of a shortcut) as the time it takes for a flow to traverse the path. 
	The traversal time of a path by a flow is actually the sum of the length of the path and all the delays caused by the inner nodes in flow forwarding along the path. 
	However, we would like to have a generalized definition for any walks rather than having the definition only for shortcuts. 
	Let us first formally define the delay by a node along a walk as follows. 
	For ease of discussion, for any $\varphi \in \set{0, 1}$, we say that a node is $\infty$-$\varphi$-reachable when it has no uniform $\varphi$-path in $\calG$. 
	\begin{definition}[Delay]
		Consider any walk $P$ and node $v\in P$, where $v$ is $r_0$-$0$-reachable and $r_1$-$1$-reachable. 
		The delay by node $v$ along walk $P$, denoted by $d(P, v)$, is defined to be $r_1 - r_0$ if $v$ has two adjacent $0$-edges on $P$ and $0$ otherwise.		 
	\end{definition}

	\begin{definition}[Promoted Length]
		The promoted length of walk $P$ is denoted and defined by $\|P\| := |P| + \sum_{v\in P}d(P, v)$. 
	\end{definition}

	\begin{lemma}\label{lem:promoted}
		Assuming that \textit{I.H.} holds, for any $\ell\leq r$, the promoted length of a shortcut of an $\ell$-reachable node equals $\ell$. 
	\end{lemma}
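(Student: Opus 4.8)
The plan is to argue by induction on the length $m:=|S|$ of the shortcut (equivalently, since reachabilities strictly decrease along a shortcut by \Cref{lem:reach}, so that $m\le\ell$, by induction on the reachability $\ell$). The base case $m=0$ is immediate: $S=\langle f_i\rangle$ is the trivial shortcut of the $0$-reachable cluster center $f_i$, and $\|S\|=0$. For the step, write $S=\langle v=v_0,v_1,\dots,v_m=f_i\rangle$; then $v_1$ is the predecessor of $v$ and $S':=S[v_1,f_i]$ is a shortcut of $v_1$. Let $\ell_1$ be the reachability of $v_1$, so $\|S'\|=\ell_1$ by the induction hypothesis. Since removing $v_0$ changes the set of incident edges only at $v_1$ among the nodes of $S'$, we have $d(S,u)=d(S',u)$ for all $u\in S'\setminus\{v_1\}$, while $d(S,v_0)=0=d(S',v_1)$ because $v_0$ and $v_1$ are endpoints of $S$ and $S'$ respectively. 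Expanding the definition of promoted length, this gives the identity $\|S\|=1+\|S'\|+d(S,v_0)+d(S,v_1)-d(S',v_1)=1+\ell_1+d(S,v_1)$, so it suffices to prove $d(S,v_1)=\ell-\ell_1-1$.

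Next I would pin down the reachabilities and the edge types at $v_1$. As $v_1$ is a predecessor of $v$, it lies on some shortest uniform alternating path $P$ of $v$ with $|P|=\ell$; let $\vartheta\in\{0,1\}$ be the type of $\{v_0,v_1\}$. Then $P$ is a $\vartheta$-path of $v$, so $v$ is $\ell$-$\vartheta$-reachable and, since $P$ starts at a free node, $\ell$ is odd if $\vartheta=0$ and even if $\vartheta=1$. Also $P[f_i,v_1]$ is a uniform $(1-\vartheta)$-path of $v_1$ of length $\ell-1$, and $v_1$ has no shorter uniform $(1-\vartheta)$-path: such a path $Q$ would be uniform at time $|Q|<\ell-1$ by \Cref{lem:uniform}, hence avoid $v$ (which joins a cluster only in step $\ell$), and appending the $\vartheta$-edge $\{v_1,v\}$ would give $v$ an almost uniform alternating path of length $<\ell$, forcing $v$ into a cluster too early. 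Hence $v_1$ is $(\ell-1)$-$(1-\vartheta)$-reachable; in particular $\ell_1\le\ell-1$, and by well-formedness of the first $r-1$ rounds (the \textit{I.H.}) $r^{(1-\vartheta)}_{v_1}=\ell-1$.

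It then remains to compute $d(S,v_1)$ by cases. If $\vartheta=1$, then $\{v_0,v_1\}$ is the unique matched edge of $v_1$, so every uniform $1$-path of $v_1$ passes through $v_0$; a uniform $1$-path of $v_1$ of length $<\ell-1$ would thus give $v$ a uniform alternating path of length $<\ell$, which is impossible, so $v_1$ has no uniform path of length $<\ell-1$ at all and $\ell_1=\ell-1$. Moreover $\{v_1,v_2\}$ is then a $0$-edge (the matched edge of $v_1$ is already spent on $\{v_0,v_1\}$), so $v_1$ is not incident to two $0$-edges on $S$ and $d(S,v_1)=0=\ell-\ell_1-1$. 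If $\vartheta=0$ and $\ell_1=\ell-1$ (so $v_1$'s reachability is even), its predecessor $v_2$ is reached by a uniform $1$-path, so $\{v_1,v_2\}$ is a $1$-edge and again $d(S,v_1)=0=\ell-\ell_1-1$ (and if $\ell_1=0$, i.e.\ $v_1=f_i$ and $m=1$, then $v_1$ is an endpoint of $S$ and $d(S,v_1)=0$ as well). If $\vartheta=0$ and $\ell_1<\ell-1$, then since $r^{(1)}_{v_1}=\ell-1$ the reachability of $v_1$ is realized by a uniform $0$-path of length $\ell_1$, so $r^{(0)}_{v_1}=\ell_1$ and $\{v_1,v_2\}$ is a $0$-edge; together with the $0$-edge $\{v_0,v_1\}$, node $v_1$ is incident to two $0$-edges on $S$, whence $d(S,v_1)=r^{(1)}_{v_1}-r^{(0)}_{v_1}=(\ell-1)-\ell_1=\ell-\ell_1-1$. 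In every case $\|S\|=1+\ell_1+(\ell-\ell_1-1)=\ell$.

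The promoted-length identity and the edge-type bookkeeping are routine; the delicate case — the one that actually requires the delay mechanism — is $\vartheta=0$ with $\ell_1<\ell-1$, where the shortcut is genuinely shorter than the shortest uniform alternating path and the lone delay $r^{(1)}_{v_1}-r^{(0)}_{v_1}$ at $v_1$ has to make up the whole deficit; verifying that $r^{(0)}_{v_1}$ and $r^{(1)}_{v_1}$ really equal $\ell_1$ and $\ell-1$ there is the main point to be careful about, and it is exactly what forces the use of well-formedness (the \textit{I.H.}) together with \Cref{lem:uniform}.
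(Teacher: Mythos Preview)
Your proof is correct and takes a genuinely different route from the paper. The paper argues by \emph{transforming} the shortcut into a shortest uniform alternating path: it finds the nodes $u_1,\dots,u_t$ with two adjacent $0$-edges on $S$ (ordered by distance from $u$) and, in $t$ phases, replaces the tail $P[u_j,f_i]$ by a shortest uniform $1$-path of $u_j$, arguing at each phase that the promoted length is preserved and the result is still a path. After phase $t$ the path is alternating, hence its (promoted) length is $\ell$. Your argument instead peels off the first edge and does induction on the shortcut length: with $S'=S[v_1,f_i]$ and $\ell_1=R(v_1)$ you reduce to the identity $\|S\|=1+\ell_1+d(S,v_1)$ and a three-case computation of $d(S,v_1)=\ell-\ell_1-1$. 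Your approach is shorter and more local; the paper's approach is more constructive (it actually exhibits the alternating path realizing $\ell$), which is in the spirit of several later arguments in Section~III, but that extra information is not needed for this lemma.

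One small remark: in your last case you write $d(S,v_1)=r^{(1)}_{v_1}-r^{(0)}_{v_1}$ and invoke well-formedness. The delay in Definition~(Delay) is stated directly in terms of the reachabilities $r_0,r_1$ of $v_1$, not the algorithm's variables $r^{(0)}_{v_1},r^{(1)}_{v_1}$; since you have already established that $v_1$ is $\ell_1$-$0$-reachable and $(\ell-1)$-$1$-reachable, you can conclude $d(S,v_1)=(\ell-1)-\ell_1$ without appealing to \textit{I.H.} at that point. This is only a notational simplification, not a gap.
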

	\begin{proof}
		Fix an arbitrary $\ell$-reachable node $u$. 
		Without loss of generality, let $u$ be $\ell$-$0$-reachable. 
		Let $S$ be an arbitrary shortcut of $u$.
		Further, let $\ell'$ be the promoted length of $S$, i.e., $\|S\| = \ell'$. 
		Then, the goal is to show that $\ell=\ell'$. 
		Let us first assume that $S$ is an alternating path, and hence, $S$ is a shortest uniform $0$-path of $u$. 
		Therefore, since the promoted length of an alternating path is equal to the length of the path, it holds that $\ell' = \ell$, concluding the proof. 
		Now let us assume that $S$ is not an alternating path. 
		Let $u_1, \dots , u_t$ be the nodes in $S$ that have two adjacent $0$-edges on $S$ such that for all $j$, $u_j$ is closer to $u$ on $S$ than $u_{j-1}$. 
		For every node $u_j$ that has two adjacent $0$-edges in $S$, there are integers $o_j$ and $z_j<o_j$ such that $u_j$ is $z_j$-$0$-reachable and $o_j$-$1$-reachable. 
		For every node $u_j$, if it has two adjacent $0$-edges in $S$, let $d_j := o_j -z_j$, and otherwise let $d_j := 0$.
		To prove that $\ell=\ell'$, it is enough to show that there is a shortest uniform $0$-path of length $\ell'$ of $u$. 
		To do so, we construct one in $t$ phases.
		
		Let $P$ initially be $S$, which is of promoted length $\ell'$.
		Then, in $t$ phases we gradually transform $P$ to a shortest uniform $0$-path of  $u$ while the promoted length of $P$ remains the same.
		To do so, in each phase, we update $P$ in a way that the number of nodes with two adjacent $0$-edges on $P$ reduces by $1$. 
		In every phase $j$, $1\leq j \leq t$, we change $P$ by replacing $P[u_j, f_i]$ with a shortest uniform $1$-path of length $|P[u_j, f_i]|+d_j$ of $u_j$. 
		Note that every node in $P[u, u_j)$ has a shortcut containing $u_j$ such that $u_j$'s adjacent edge in $P[u, u_j]$ is a $0$-edge. 
		Therefore, based on \Cref{lem:reach2}, for every node in $P[u, u_j)$, there is an integer $\ell''>o_j$ such that the node is $\ell''$-reachable. 
		Moreover, it is easy to see that for every node in $P[u_j, f_i]$, there is an integer $\ell''\leq o_j$ such that the node is $\ell''$-reachable.
		Therefore, $P[u, u_j)$ and $P[u_j, f_i]$ have no common node, and consequently, $P$ is still a path after its change in phase $j$. 
		Furthermore, since $u_j$ does not have two adjacent $0$-edges in $P$ anymore, the promoted length of $P$ remains the same.  
		
		In the last phase, we replace $P[u_t, f_i]$ with a shortest uniform $1$-path of $u_t$.
		Moreover, every node is the next node's successor in path $P[u, u_t]$.
		Therefore, $P$ is a shortest uniform alternating path of $u$ whose promoted length remained $\ell'$ throughout the $t$ phases as we argued. 
		Since $P$ is now an alternating path, its length is the same as its promoted length, i.e., $\ell'$. 
		Therefore, since a shortest uniform alternating path of $v$ must be of length $\ell$, we can conclude that $\ell = \ell'$. 
		
		\end{proof}
	
	\begin{lemma}\label{lem:discards}
		Assuming that \textit{I.H.} holds, let any node $v$ assigns a flow of an arbitrary edge $e$ to be sent in round $t\leq r+1$. 
		Then, $v$ does not assign any flow of $e$ to be sent in any round except round $t$. 
	\end{lemma}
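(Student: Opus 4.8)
The plan is to determine, for a fixed edge $e=\{u,w\}$ over which a flow is generated, the exact round in which any node puts a flow of $e$ into one of its output buffers, and to show that this round is unique for each node. Let $r_u$ and $r_w$ be the rounds in which $u$ and $w$ send their tokens over $e$, let $\ell_x:=\min\{r^{(0)}_x,r^{(1)}_x\}$ be the shortest reachability of a node $x$, and set $T_e:=r_u+r_w$. I would first record two structural facts about the endpoints of a flow-generating edge. If $e$ is a matched edge, a shortest uniform $1$-path ending at $e$ at one endpoint equals a shortest uniform $0$-path of the other endpoint followed by $e$, so $r^{(1)}_u=r_w$ and $r^{(1)}_w=r_u$; since the flow-generation condition forbids $w$ from being a predecessor of $u$, a shortest uniform alternating path of $u$ cannot end at $e$, forcing $r^{(0)}_u<r^{(1)}_u$ and symmetrically $r^{(0)}_w<r^{(1)}_w$, and together these give $r_u=r_w$ and $\ell_u=\ell_w$. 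If $e$ is an unmatched edge, the analogous construction through $e$ gives $r^{(0)}_u\le r_w$ and $r^{(0)}_w\le r_u$ (while $r^{(1)}_u=r_u-1$ and $r^{(1)}_w=r_w-1$). Using these facts together with the delay rule I would then derive the identity $\rho_u+\ell_u=\rho_w+\ell_w=T_e$, where $\rho_x$ is the round in which $x$ first forwards the half-flow of $e$ it receives over $e$; the same computation also yields $\rho_x\ge r_x$, so each endpoint has learned of the flow generation in time to forward on schedule.

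Next I would prove a parity invariant for the first $r+1$ rounds of $\calE$: every flow is received over a $0$-edge only in odd rounds and over a $1$-edge only in even rounds. This goes by induction on the round, using \Cref{lem:reach} to exclude predecessor cycles (so a node never forwards a flow back along the edge it arrived on) together with the fact that the set $E_v$ of a node's predecessor edges consists either entirely of $0$-edges (precisely when $r^{(0)}_v<r^{(1)}_v$) or of a single $1$-edge. The consequence I need is that no node ever receives flows over a $0$-edge and over a $1$-edge in the same round; hence when a node $v$ receives flows in a round $t$, the extra delay it adds is $r^{(1)}_v-r^{(0)}_v$ exactly when the incoming edges and the edges of $E_v$ are all $0$-edges and is $0$ otherwise, and this coincides with the delay $d(P,v)$ along any flow path $P$ through $v$, since $v$ has two adjacent $0$-edges on $P$ iff its forwarding edge on $P$ lies in an all-$0$ set $E_v$ and its incoming edge on $P$ is a $0$-edge.

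The core is the propagation-timing claim: whenever a node $v$ receives a flow of $e$, it assigns that flow to the output buffers for round $T_e-\ell_v+1$ (and $u$, $w$ themselves assign their generated half of $e$ to round $\rho_u=T_e-\ell_u$, resp.\ $\rho_w=T_e-\ell_w$, and receive no other flow of $e$). I would prove this by induction along predecessor chains, i.e.\ in order of increasing reachability. A flow of $e$ reaches $v$ only along a shortcut emanating from $u$ or from $w$; along such a shortcut $S$ of $u$ containing $v$, the half-flow reaches $v$ exactly $\|S[u,v]\|$ rounds after $u$ forwards it, since by the previous paragraph each interior node's forwarding delay equals its delay along $S$. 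By \Cref{lem:promoted}, $\|S\|=\ell_u$ and $\|S[v,f_i]\|=\ell_v$, and since an endpoint of a subpath contributes no delay to it, promoted lengths are additive up to the junction term, giving $\|S[u,v]\|=\ell_u-\ell_v-d(S,v)$. Thus $v$ receives the flow in round $\rho_u+\ell_u-\ell_v-d(S,v)$ and, after adding its own delay $d(S,v)$, assigns it to round $\rho_u+\ell_u-\ell_v+1=T_e-\ell_v+1$; by the identity of the first paragraph this is the same value whether the flow arrived from the $u$-side or the $w$-side. Granting the claim, \Cref{lem:discards} is immediate: every flow of $e$ handled by $v$ is assigned to the single round $T_e-\ell_v+1$ (or to $\rho_u$, resp.\ $\rho_w$, if $v$ is an endpoint of $e$).

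I expect the main obstacle to lie in the endpoint bookkeeping. Beyond the delay accounting, one must rule out that $u$ lies on a shortcut of $w$ or vice versa: otherwise $u$ would also receive the $w$-side half of the flow of $e$, necessarily in the round $\rho_u+1$ (one later than the round $\rho_u$ of its own generated half), and since the two halves would then sit in distinct output-buffer rounds — so that no unit is ever assembled and nothing is discarded at $u$ — the node $u$ would assign a flow of $e$ to two different rounds, contradicting the lemma. When $\ell_u=\ell_w$, which by the first paragraph always holds when $e$ is matched, this follows from \Cref{lem:reach}. When $e$ is unmatched and the endpoints have different reachabilities, one of them, say $u$, is strictly more reachable, and one must exclude $u$ from every shortcut of $w$; analysing the shortcut edge incident to $u$, the unmatched case is dispatched using $r^{(1)}_u=r_u-1$ together with the cross-$e$ bounds $r^{(0)}_u\le r_w$ and $r^{(0)}_w\le r_u$, whereas the matched case is the delicate one — if it is not closed by a purely structural argument (e.g.\ via \Cref{lem:reach2}), I would instead argue that the $w$-side half of the flow of $e$ is discarded at the stem of the relevant reachable minimal odd cycle before it could reach $u$. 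Making this last case airtight, and making the promoted-length additivity across shortcut junctions fully rigorous, is where the real work of the proof lies.
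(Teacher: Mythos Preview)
Your approach is essentially the paper's: both compute that the round in which $v$ assigns a flow of $e$ to its output buffer equals $r_u+r_w-\ell_v$, independent of which endpoint's shortcut carried the flow. The paper encodes the endpoint identity as $r_x=\|S_x\|+d(\langle y,x\rangle\circ S_x,\,x)+1$ (\Cref{lem:token-time}) and then manipulates promoted lengths directly; you repackage the same identity as $\rho_x+\ell_x=T_e$.

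Two points, however, deserve correction. First, the worry in your fourth paragraph is unfounded. If $u$ lies on a shortcut $S$ of $w$, then by your own third-paragraph computation the $w$-side half is assigned by $u$ to round $\rho_w+\|S[w,u]\|+d(S,u)=\rho_w+\ell_w-\ell_u=T_e-\ell_u=\rho_u$, i.e.\ the \emph{same} round as $u$'s own half, not $\rho_u+1$. So no separate endpoint case analysis is needed, and the ``delicate'' matched subcase you flag does not arise. (Your off-by-one in the receive time---it is $\rho_u+\|S[u,v]\|-1$, hence the assigned round is $T_e-\ell_v$, not $T_e-\ell_v+1$---is what produces the spurious discrepancy.) Second, you omit the paper's opening step: one must first check that $r_u,r_w\le r$, so that \textit{I.H.} applies to both endpoints and all the reachability values you use are actually set; the paper does this explicitly. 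Finally, the structural claims in your first paragraph for matched $e$ (e.g.\ $r^{(1)}_u=r_w$, hence $r_u=r_w$) presuppose that a shortest uniform $0$-path of $w$ avoids $u$, which is not immediate; the paper sidesteps this entirely via \Cref{lem:token-time}, and you do not actually need these claims either.
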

	\begin{proof}
		Let $u$ and $w$ be the two endpoints of $e$. 
		Let $r_u$ be the round in which $u$ sends token to $w$, and $r_w$ be the round in which $w$ sends token to $u$. 
		We first show that both $r_u$ and $r_w$ are at most $r$. 
		Since $v$ receives a flow of $e$, the flow must have traversed a shortcut of $u$ or $w$ and reached $v$ (note that as a special case when $v$ is one of $e$'s endpoints, the shortcut has length $0$). 
		Without loss of generality, let us assume that a flow of $e$ has traversed a shortcut $S$ of $u$ and reached $v$. 
		Hence, the flow is assigned to be sent by $w$ after round $r_u$.
		Therefore, node $v$ also assigns the flow to be sent after round $r_u$.
		That is $r_u < t$ and hence $r_u\leq r$.
		Let us show that it also holds that $r_w\leq r$. 
		To study the non-trivial case, let us assume that 	
		\begin{equation}\label{eq:s1}
			r_u<r_w\text{ .}
		\end{equation}	
		Moreover, let us assume that $w$ is $z'$-$\vartheta$-reachable and $z$-$(1-\vartheta)$-reachable for some integers $\vartheta\in \set{0, 1}, z$ and $z' < z$.
		A flow is generated over edge $e$.
		Hence, node $u$ is not a predecessor of $w$, and therefore $w$ can only receive a token from $u$ after round $z'$, i.e., 
		\begin{equation}\label{eq:s2}
			z' < r_u\text{ .}
		\end{equation}
		Therefore, since $r_u\leq r$, it holds that $z' < r$. 
		Hence, considering \textit{I.H.}, node $w$ has set its variable $r_w^{(\vartheta)}$ to $z'$ and must send token over its $(1-\vartheta)$-edges in round $z'+1$. 
		However, based on (\ref{eq:s1}) and (\ref{eq:s2}), $z' + 1\neq r_w$. 
		Node $w$ does not therefore send token to $u$ over $e$ in round $z' + 1$. 
		This concludes that $e$ is a $\vartheta$-edge (and hence $\vartheta = 0$).
		Therefore, node $w$ must send a token to $u$ in round $z+1$, i.e., 
		\begin{equation}\label{eq:s3}
			r_w = z + 1\text{ .}
		\end{equation}
		Moreover, node $w$ must delay flow of $e$ by $z - z'$ rounds. 
		This means that $t > r_u + z - z'$. 
		Considering (\ref{eq:s2}), it results in having $t> z + 1$. 
		Together with (\ref{eq:s3}), it concludes that $r_w < t$ and hence $r_w \leq r$. 
		We just proved that both $r_u$ and $r_w$ are at most $r$, i.e., 
		\begin{equation}\label{eq:s4}
			r_u \leq r \text{ \ \ and \ \ } r_w \leq r\text{ .}
		\end{equation}
		
		Let $\calS_u$ be the set of all the shortcuts of $u$ containing $v$, and $\calS_w$ be the set of all the shortcuts of $w$ containing $v$.
		A flow of $e$ that reaches $v$ should either traverse a shortcut in $\calS_w$ or $\calS_u$.
		Hence, in both cases we show that the flow is assigned by $v$ to be sent in the same round, i.e., $t$. 
		Without loss of generality, let the flow that is
                assigned to be sent by $v$ in round $t$ has traversed a shortcut $S\in \calS_w$ to reach $v$.
		Let $P$ be the concatenation of $\langle w, u \rangle$ and $S$. 
		Then, considering the time that it takes for the flow to traverse $S$ and be sent by $v$ in round $t$, it holds that
		\begin{equation}\label{eq:s5}
			t = r_u + d(P, w) + \|S[w,v]\| + d(S, v) + 1\text{ .}
		\end{equation}
	Note that as a special case when $v = w$, it holds that $\|S[w,v]\| = 0$ and $d(S, v) = 0$, and the equation (\ref{eq:s5}) properly shows the correct calculation of the round in which the flow is assigned to be sent by $v$.

		First let us show that every flow that traverses a shortcut in $\calS_w$ and reaches $v$ is assigned to be sent by $v$ in round $t$. 
		To that end, let us fix an arbitrary shortcut $S_w \in \calS_w$ and an arbitrary flow that traverses $S_w$.
		Let $P_w$ be the concatenation of $\langle u, w \rangle$ and $S_w$.
		Let us assume that this flow is assigned to be sent by $v$ in round $t_w:= r_u + d(P_w, w) + \|S_w\| + d(S_w, v) + 1$ and show that $t = t_w$:
		\begin{align}
			t &\overset{\mathrm{(\ref{eq:s5})}}{=} r_u + d(P, w) + \|S[w,v]\| + d(S, v) + 1  \nonumber \\
			  &= r_u + d(P_w, w) + \|S[w,v]\| + d(S, v) + 1  \nonumber \\
			  &= r_u + d(P_w, w) + \|S\| - \|S[v, f_i]\| + 1  \nonumber \\
			  &= r_u + d(P_w, w) + \|S_w\| - \|S_w[v, f_i]\| + 1  \nonumber \\
			  &= r_u + d(P_w, w) + \|S_w\| + d(S_w, v) + 1  \nonumber \\
			  &= t_w \nonumber
		\end{align} 
		The second equality above comes from the fact that a node is either connected to all its predecessors by $0$-edges or by a $1$-edge. 
		The forth equality above is because the promoted length of all shortcuts of a node have the same length, due to \Cref{lem:promoted}.
		
		Second let us show that every flow that traverses a shortcut in $\calS_u$ and reaches $v$ is assigned to be sent by $v$ in round $t$. 
		To that end, let us fix an arbitrary shortcut $S_u \in \calS_u$ and an arbitrary flow that traverses $S_u$.
		Let $P_u$ be the concatenation of $\langle w, u \rangle$ and $S_u$.
		Let us assume that this flow is assigned to be sent by $v$ in round $t_u:= r_w + d(P_u, u) + \|S_u[u,v]\| + d(S_u, v) + 1$ and show that $t = t_u$.
		Let $e$ be a $\varphi$-edge for some $\varphi\in \set{0,1}$.
		Based on equation (\ref{eq:s5}) and \textit{I.H.}, node $u$ must be $(r_u - 1)$-$(1-\varphi)$-reachable. 
		Moreover, the length of a shortest uniform $(1-\varphi)$-path of $u$ is of length $\|S_u\| + d(P_u, u)$. 
		Therefore, it holds that 
		\begin{equation}\label{eq:s6}
			r_u = \|S_u\| + d(P_u, u) + 1 \text{ .}
		\end{equation}
		Considering $w$'s shortcut $S$ and the path $P:= S\circ \langle u, w \rangle$, it similarly holds that 	
		\begin{equation}\label{eq:s7}
			r_w = \|S\| + d(P, w) + 1 \text{ .}
		\end{equation}
		Now let us show that $t = t_u$ as follows. 
		\begin{align}
		t &\overset{\mathrm{(\ref{eq:s5})}}{=} r_u + d(P, w) + \|S[w,v]\| + d(S, v) + 1  \nonumber \\
		  & = r_u + d(P, w) + \|S\| - \|S[v, f_i]\| + 1  \nonumber \\
		  & \overset{\mathrm{(\ref{eq:s7})}}{=} r_u + r_w - \|S[v, f_i]\| \nonumber \\
		  & \overset{\mathrm{(\ref{eq:s6})}}{=} \|S_u\| + d(P_u, u) + 1 + r_w - \|S[v, f_i]\| \nonumber \\
		  & = \|S_u\| + d(P_u, u) + 1 + r_w - \|S_u[v, f_i]\| \nonumber \\
		  & = r_w + d(P_u, u) + \|S_u[u,v]\| + d(S_u, v) + 1  \nonumber \\
		  &= t_u \nonumber
		\end{align}
	\end{proof}

	Now let us at the end of this section provide the proof of \Cref{lem:step3-odd} below:
	
	\begin{proof}[Proof of \Cref{lem:step3-odd}]
		Considering an $r$-$\vartheta$-reachable node $v$, we show that $r^{(\vartheta)}_v = r$ after $r$ rounds of $\calE$.
		Since $v$ is $r$-$\vartheta$-reachable, it has a shortest uniform alternating path of length at most $r$. 
		Let us first consider the case when $v$ is $r$-reachable, i.e., the shortest uniform alternating path of $v$ is of length $r$. 
		Since $\calD(r-1)$ is the well-formed $(r-1)$-radius free node clustering, $v$ does not join any cluster in the first $r-1$ rounds of $\calE$. 
		Moreover, based on \Cref{lem:proper-clustering}, $\calD(r)$ is the $r$-radius free node clustering. 
		Therefore, $v$ must join its cluster in round $r$ of $\calE$. 
		Hence, $r$ is the first round in which $v$ receives a token. 
		Note that the length of any $0$-path is odd, and the length of any $1$-path is even. 
		Therefore, $r$ is even if $\vartheta = 1$, and it is odd otherwise. 
		Thus, based on the DFNC algorithm, node $v$ sets $r_v^{(\vartheta)}$ to $r$ in round $r$ of $\calE$. 
		
		For the remainder of the proof, let us consider the case when $v$ is $r'$-reachable for $r'<r$, and hence, $v$ joins its cluster before round $r$. 
		That is, $v$ is $r$-$\vartheta$-reachable and $r'$-$(1-\vartheta)$-reachable. 
		Then, to show that $r^{(\vartheta)}_v=r$, we need to show that $r$ is the first incomplete round for $v$. 
		Since the first $r-1$ rounds of $\calE$ provides the well-formed $(r-1)$-radius free node clustering of $\calG$ and $v$ is $r$-$\vartheta$-reachable, $v$ cannot have an incomplete round before round $r$. 
		Therefore, it is enough to show that $v$ has an incomplete round in the first $r$ rounds of $\calE$. 
		
		Let $P$ be an arbitrary shortest uniform $\vartheta$-path of $v$. 
		Then, based on \Cref{lem:good-edge}, let $e := \set{u,w}$ be the edge on $P$ where all the nodes in $P[u, v]$ are labeled $P^+_v$, node $w$ is labeled $P^-_v$ and $|P[w, f_i]| \geq r'$. 
		Let us first show that a flow is generated over edge $e$.
		To this end, we need to show that $u$ and $w$ send tokens to each other and none of them is the other one's predecessor.  
		
		Let us first show that $u$ and $w$ are not each other's predecessor. 
		For the sake of contradiction, let us assume otherwise. 
		Let us first consider the case when $u$ is a predecessor of $w$. 
		Let $S_u$ be a marked shortcut of $u$ that contains $v$. 
		$S_u$ does not contain $w$, and hence the concatenation of $S_u$ and $\langle w, u \rangle$ is a marked shortcut of $w$ that contains $v$.
		Therefore, $w$ must have been labeled $P^+_v$, which is contradictory. 
		Now let us consider the case when $w$ is a predecessor of $u$.
		Then, $u$'s neighbor on $S_u$ must be $w$.
		Hence, $S_u[w, f_i]$ is a marked shortcut of $w$ that contains $v$.
		Therefore, $w$ must have been labeled $P^+_v$, which is again contradictory. 
		
		Next we show that $u$ and $w$ send tokens to each other. 
		Let $\set{u,w}$ be a $\varphi$-edge for an integer $\varphi\in \set{0,1}$. 
		First, let us show that $w$ sends token to $u$.
		Since $P[w, f_i]$ is a uniform $(1-\varphi)$-path of length less than $r$ of $w$, node $w$ is $\ell_w$-$(1-\varphi)$-reachable for some $\ell_w<r$. 
		Therefore, since $\calD(r-1)$ is the well-formed $r$-radius free node clustering, $r_w^{(1-\varphi)}<r$. 
		Hence, $w$ sends token to $u$ at latest in round $|P[w, f_i]| + 1\leq r$.
		Now, let us show that $u$ sends token to $w$. 
		Every node in $P[v, u]$ is labeled $P^+_v$ and hence has a shortcut containing $v$. 
		Therefore, based on \Cref{lem:reach}, for every node in $P(v, u]$, there is an integer $\ell>r'$ such that the node is $\ell$-reachable. 
		Let $P'$ be a shortest uniform $(1-\vartheta)$-path of $v$, which is of length $r'$.
		Then, every node in $P'(v, f_i]$ $\ell'$-reachable for some $\ell'<r'$. 
		Therefore, $P[v, u]$ and $P'$ have no common node except $v$.
		Then, the concatenation of $P[v, u]$ and $P'$ is a uniform $(1-\vartheta)$-path of $u$.
		Note that this concatenated path is of length less than $r$ since based on \Cref{lem:good-edge}, $|P[w, f_i]|\geq r'$. 
		Therefore, since $\calD(r-1)$ is the well-formed $r$-radius free node clustering, $r_u^{(1-\varphi)} <r$.
		Hence, $u$ sends token to $w$ in the first $r$ rounds of $\calE$.

		We show that $r$ is the first incomplete round for $v$ in which $v$ receives a flow of $e$ but not the whole flow. 
		Let $S^+$ be a marked shortcut of $u$ containing $v$. 
		Since every node has a marked shortcut and $w$ has no marked shortcut containing $v$, let $S^-$ be a marked shortcut of $w$ that does not contain $v$. 
		Since a flow of $e$ is sent along $S^-$ and this shortcut does not contain $v$, node $v$ does not receive the whole flow of $e$. 
		Hence, to show that $v$ receives a proper fraction of $e$ in round $r$, it is enough to show the following two facts; (1) None of the nodes in $S^+[u, v]$ discards the whole flow of $e$, and hence, $v$ receives a flow of $e$, (2) Node $v$ receives a flow of $e$ in the first $r$ rounds of $\calE$. 
		
		For the sake of contradiction, let $z\in S^+[u, v]$ be the node that discards the whole flow of $e$.
		Therefore, all the flows of $e$ are received by $z$, and hence, $S^-$ must contain $z$.
		Then, the concatenation of $S^-[w, z]$ and $S^+[z, f_i]$ is a marked shortcut of $w$ that contains $v$, which contradicts $w$ being labeled $P^-_v$. 
		As a result, there is no node in $S^+[u, v]$ that discards the whole flow of $e$. 
		Thus, since $v$ receives a flow of $e$ over $S^+[u, v]$. 
		
		Now let $u$ be $\ell_u$-$\varphi$-reachable and $\ell_u'$-$(1-\varphi)$-reachable (recall that $e$ was considered to be a $\varphi$-edge for a $\varphi \in \set{0,1}$). 
		Every node in $P[v, u]$ is labeled $P^+_v$ and thus has a shortcut containing $v$. 
		Therefore, based on \Cref{lem:reach}, for every node in $P(v, u]$, there is an integer $\ell_1>r'$ such that the node is $\ell_1$-reachable. 
		Moreover, it is easy to see that for every node in $P'$, there is an integer $\ell_2\leq r'$ such that the node is $\ell_2$-reachable.
		Therefore, the concatenation of $P'$ and $P[v, u]$ is a path, and in particular a uniform $(1-\varphi)$-path of length $r'+|P[v, u]|$ of $u$. 
		Hence, 
		\begin{equation}\label{eq:lu}
			\ell_u' \leq r'+|P[v, u]|\ \text{.}
		\end{equation}
		Let $P^+_v := \langle w, u \rangle \circ S^+$. 
		If $S^+$ is a $(1-\varphi)$-path of $u$, $u$ is $\ell_u'$-reachable, and therefore, $\|S^+\| = \ell_u'$ due to \Cref{lem:promoted}. 
		Otherwise, $u$ is $\ell_u$-reachable and $\|S^+\| = \ell_u = \ell_u' - d(P^+_v, u)$. 
		Since in case $S^+$ is a $(1-\varphi)$-path, it holds that $d(P^+_v, u) = 0$, we can overall conclude that
		\begin{align}\label{eq:s+}
			\|S^+\| &= \ell_u' - d(P^+_v, u) \nonumber \\
				   &\overset{(\ref{eq:lu})}{\leq} r' + |P[v, u]| - d(P^+_v, u) \ \text{.}
		\end{align}
		Moreover, letting $r_w$ be the round in which $w$ sends token to $u$, 
		\begin{equation}\label{eq:rw}
			r_w \leq |P[w, f_i]|+1\ \text{.}
		\end{equation}
		Note that the flow that traverse $S^+$ reaches $v$ in round $t:= r_w + d(P^+_v, u) + \|S^+[u,v]\|$. 
		Let us show that $t\leq r$ in the following: 
		\begin{align}\label{eq:sent}
			t &:= r_w + d(P^+_v, u) + \|S^+[u,v]\| \nonumber \\
			  &\overset{(\ref{eq:rw})}{\leq} |P[w, f_i]| + 1 + d(P^+_v, u) + \|S^+[u,v]\| \nonumber \\
			  &= |P[w, f_i]| + 1 + d(P^+_v, u) + \|S^+\| - \|S^+[v, f_i]\| \nonumber \\
			  &\overset{(\ref{eq:s+})}{\leq} |P[w, f_i]| + 1 + d(P^+_v, u) + r' + |P[v, u]| - d(P^+_v, u) - \|S^+[v, f_i]\| \nonumber \\
			  & = |P[w, f_i]| + 1 + r' + |P[v, u]| - \|S^+[v, f_i]\| \nonumber \\
			  &= |P[w, f_i]| + 1 + r' + |P[v, u]| - r' \nonumber \\
			  &= |P[w, f_i]| + 1 + |P[v, u]| \nonumber \\
			  &= |P| \nonumber \\
			  &= r \nonumber
		\end{align}
              \end{proof}

\subsection{Section III}
	In this section, we show that after $r$ rounds of the DFNC algorithm execution, every node that detects $r$-$\vartheta$-reachability is actually an $r$-$\vartheta$-reachable node as stated in the following Lemma:
	\begin{lemma}\label{lem:step4}
		Assuming that \textit{I.H.} holds, for arbitrary node $v$ and integer $\vartheta \in \set{0,1}$, node $v$ is $r$-$\vartheta$-reachable if $r^{(\vartheta)}_v = r$ in execution $\calE$. 
	\end{lemma}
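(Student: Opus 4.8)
The plan is to run the argument of \Cref{lem:step3-odd} in reverse. The DFNC algorithm sets $r^{(\vartheta)}_v$ to the value $r$ in exactly one of two situations: round $r$ is the first round in which $v$ receives a token, or round $r$ is the first incomplete round for $v$ (and then, by the note accompanying the reachability-detection rule, $v$ has already joined its cluster in some round $r_v<r$). The first situation is easy: by \textit{I.H.}\ and \Cref{lem:proper-clustering} we have $\calD(r)=\calC(r)$ and $v\notin\calD(r-1)$, hence $v\in C_i(r)\setminus C_i(r-1)$ for some $i$, so $v$ is $r$-reachable and has no uniform alternating path of length $<r$; since such a path of length $r$ is a $0$-path iff $r$ is odd and a $1$-path iff $r$ is even, and the algorithm assigns $r$ to $r^{(0)}_v$ when $r$ is odd and to $r^{(1)}_v$ when $r$ is even, $v$ is $r$-$\vartheta$-reachable for the unique $\vartheta$ with $r^{(\vartheta)}_v=r$.

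For the incomplete-round case, $v$ forwards, in round $r+1$, a nonzero fraction of a flow with some key $e=\set{u,w}$; by the definition of flow generation this forces a flow to have been generated over $e$, so both endpoints of $e$ lie in a common cluster $C_i$, neither is the other's predecessor, and both send their token over $e$, say in rounds $r_u$ and $r_w$, where $e$ is a $\varphi$-edge. Also $v$ does not receive the whole unit flow of $e$, else it would discard it. By \Cref{lem:discards}, $r+1$ is the only round in which $v$ forwards a flow of $e$, and $r_u,r_w\le r$. Assume w.l.o.g.\ that the flow reaching $v$ travelled along a shortcut $S$ of $u$ that contains $v$ (the other case is symmetric); writing $P:=\langle w,u\rangle\circ S$, the arrival-time identity behind equation~(\ref{eq:s5}) reads $r=r_w+d(P,u)+\|S[u,v]\|+d(S,v)$. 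Since $w$ sends a token over the $\varphi$-edge $e$ in round $r_w\le r$, \textit{I.H.}\ gives $r^{(1-\varphi)}_w=r_w-1<r$ and hence a shortest uniform $(1-\varphi)$-path $Q$ of $w$ of length $r_w-1$. I would then take the desired uniform $\vartheta$-path of $v$ to be the concatenation of $Q$, the edge $e$, and the alternating path obtained from $S[u,v]$ by the promoted-length surgery of \Cref{lem:promoted} — replacing each maximal sub-shortcut that incurred a forwarding delay (including the delay at $u$ caused by $e$, which occurs exactly when $e$ and $u$'s predecessor-edges are all $0$-edges) by the matching longer shortest uniform $1$-path. Each delay is thereby converted into the same amount of extra length, so the path has length $(r_w-1)+1+\|S[u,v]\|+d(P,u)+d(S,v)=r$; it is alternating at both junctions (at $u$ precisely because $w$ cannot be $u$'s predecessor, which rules out the one bad parity combination), it is uniform, and a parity check shows its edge at $v$ has the type dictated by the parity of $r$, so it is a $\vartheta$-path. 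Finally $v$ is not $t$-$\vartheta$-reachable for any $t<r$, since by \textit{I.H.}\ together with \Cref{lem:step3-odd} — which for every $t\le r$ turns $t$-$\vartheta$-reachability into $r^{(\vartheta)}_v=t$ — this would contradict $r^{(\vartheta)}_v=r$. Hence the constructed path is a shortest uniform $\vartheta$-path of $v$ of length exactly $r$, i.e.\ $v$ is $r$-$\vartheta$-reachable.

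The main obstacle is the construction in the incomplete-round case: matching the round in which $v$ forwards the flow of $e$ — an accumulation of forwarding delays along a shortcut — with the length of an honest \emph{alternating} uniform path of $v$ ending with the correct edge type. Two points need care: (a) after applying the surgery of \Cref{lem:promoted} to $S[u,v]$ and gluing it to $Q\circ e$, one must check that the result is a \emph{simple} path, which will require the reachability-monotonicity facts of \Cref{lem:reach} and \Cref{lem:reach2} and, most likely, an analogue of the marking/labeling device of \Cref{lem:good-edge} showing that $Q$ and $S[u,v]$ lie on opposite sides of $e$ and therefore cannot collide; and (b) the parity and edge-type bookkeeping around $u$, $w$ and $v$ that ensures the product is a $\vartheta$-path and not a $(1-\vartheta)$-path. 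The cleanest route is to reuse, essentially verbatim, the surgery and the reachability accounting already developed in the proofs of \Cref{lem:promoted}, \Cref{lem:reach2} and \Cref{lem:step3-odd}.
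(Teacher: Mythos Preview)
Your treatment of the first case (round $r$ is the first token-receipt round) is correct and matches the paper. The problem is the second case, and it is a genuine gap, not just a matter of care in bookkeeping.

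You propose to build the $\vartheta$-path of $v$ as $Q\circ\langle w,u\rangle\circ(\text{surgery on }S[u,v])$, where $Q$ is an arbitrary shortest uniform $(1-\varphi)$-path of $w$ and the ``surgery'' is the promoted-length unfolding from \Cref{lem:promoted}. This does not work, and the obstruction is precisely the simplicity issue you flag in (a) but do not resolve. First, $Q$ may well contain $v$ (nothing prevents a shortest $(1-\varphi)$-path of $w$ from passing through $v$), in which case the concatenation is immediately not a path. Second, even if you manage to avoid $v$, the surgery of \Cref{lem:promoted} replaces pieces of $S[u,v]$ by shortest uniform $1$-paths of intermediate nodes; those replacement paths can and do collide with $Q$. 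The tools you cite --- \Cref{lem:reach}, \Cref{lem:reach2}, and an ``analogue'' of the marking device --- only give you reachability monotonicity along a single shortcut; they do not separate a path coming from $w$ from a path coming from $u$.

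The paper's proof is substantially heavier here, and for good reason. It first uses the fact that $v$ receives a \emph{proper} fraction of the flow of $e$ (so no node of large reachability is common to all shortcuts of $u$ and $w$) together with \Cref{lem:disjoint-shortcuts} to construct shortcuts $T_u,T_w$ that have no common $\ell$-reachable node for $\ell\ge r'$ and such that exactly one of them contains $v$. It then runs an iterative procedure (Algorithm~3) that simultaneously straightens $T_u$ and $T_w$ into alternating paths while \emph{maintaining} the disjointness of $P_u[f,u]$ and $P_w[v,w]$; each iteration requires the separator lemma (\Cref{lem:sep}) and the tenacity bounds (\Cref{lem:tenacity1}, \Cref{lem:tenacity2}) to argue that the needed shortest $1$-path of the current ``bad'' node can be chosen disjoint from the other side. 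The length accounting finally comes out via $tn(e)=r+r'$ from \Cref{lem:tenacity1}, not via the single arrival-time identity you use. In short: your outline gets the arithmetic right but is missing the entire disjointness machinery, which is the actual content of this direction.
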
 
	A node $v$ sets its variable $r^{(\vartheta)}_v$ to an integer $r$ when round $r$ is either the first round in which $v$ receives a token or the first incomplete round for $v$. 
	In both cases we need to show that there actually exists a uniform alternating path of length $r$ for $v$. 
	We will see that the challenging case is when $v$ sets its variable $r^{(\vartheta)}_v$ to $r$ because of having round $r$ as its first incomplete round. 
	Let us first present the outline of proving the claim in this challenging case.
	If node $v$ sets its variable $r^{(\vartheta)}_v$ to $r$ because of having round $r$ as its first incomplete round, then there must clearly exist some edge $\set{u, w}$ such that a proper fraction of the flow of $\set{u, w}$ reaches $v$ in round $r$. 
	In this case, we show that there is actually a uniform alternating path of length $r$ for $v$ that contains $\set{u, w}$. 
	Let us assume that $v$ is $r'$-reachable. 
	We first show that there are two shortcuts of $u$ and $w$ such that they have no common $\ell$-reachable node for $\ell\geq r'$ and only one of them contains $v$ (see \Cref{fig:detection}).
	We show this claim by presenting \Cref{alg:shortcuts-construction} that actually constructs the shortcuts (stated in \Cref{lem:disjoint-shortcuts}). 
	Then, we transform each of the shortcuts to an alternating path. 
	To do so, let $P_u$ be the concatenation of the shortcut of $u$ and $\langle u, w\rangle$, and let $P_w$ be the concatenation of the shortcut of $w$ and $\langle u, w\rangle$.
	Then, let $x_1, \dots , x_8$ be the nodes having two adjacent $0$-edges on $P_u$ or $P_w$ in \Cref{fig:detection}.
	Note that $u$ and $w$ might also have two adjacent $0$ edges on the paths, e.g., $x_8$.
	Let nodes $x_1, \dots , x_8$ be in ascending order by their $1$-reachabilities. 
	Then, we present \Cref{alg:disc} to transform the shortcuts to alternating paths in phases. 
	In phase $i$, if $x_i\in P_w$, the algorithm replace $P_w[f, x_i]$ with a uniform $1$-path of $x_i$ that is disjoint from $P_u[f, u]$ and contains $v$.	
	Otherwise, the algorithm replace $P_u[f, x_i]$ with a uniform $1$-path of $x_i$ that is disjoint from $P_w[v, w]$.
	Doing that, $x_i$ does not anymore have two adjacent $0$-edges on the paths, and hence we reduce the number of nodes with two adjacent $0$-edges on the paths by $1$ in each phase. 
	Therefore, eventually $P_u$ and $P_w$ become alternating paths, and the concatenation of $P_u[f, u]$, $\langle u, w\rangle$ and $P_w[f, w]$ becomes a uniform alternating path of length $r$ of $v$. 
	We present the described procedure to change the shortcuts to the corresponding alternating paths in \Cref{alg:disc}.

	\begin{figure}[t] 
		\centering	
	\begin{tikzpicture}[scale=0.70]

		\coordinate (u) at (-.5,0);
		\shade[shading=ball, ball color=black] (u) circle (.08);
		\node at (u) [below left = 0.5mm of u] {\footnotesize{$x_8 = u$}};
		
		\coordinate (w) at (.5,0);
		\shade[shading=ball, ball color=black] (w) circle (.08);
		\node at (w) [below right = 0.5mm of w] {\footnotesize{$w$}};
		
		\draw (w) -- (u); 
		
		\coordinate (v) at (2,5);
		\shade[shading=ball, ball color=black] (v) circle (.08);
		\node at (v) [left = 0.5mm of v] {\footnotesize{$v$}};
		
		\coordinate (f) at (-2,6);
		\shade[shading=ball, ball color=black] (f) circle (.08);
		\node at (f) [above left = 0.5mm of f] {\footnotesize{$f_i$}};
		
		\draw[color=black] (w) to [bend right=20] (v);
		\draw[color=black] (v) to [bend right=20] (f);
		\draw[color=black] (u) to [bend left=20] (f);

		\coordinate (j) at (-1.25,1.3);
		\shade[shading=ball, ball color=black] (j) circle (.08);
		\node at (j) [left = 0.5mm of j] {\footnotesize{$x_6$}};
		
		\coordinate (y1) at (-1.6,2);
		\shade[shading=ball, ball color=black] (y1) circle (.08);
		\node at (y1) [left = 0.5mm of y1] {\footnotesize{$x_5$}};
		
		\coordinate (y2) at (-1.9,3.2);
		\shade[shading=ball, ball color=black] (y2) circle (.08);
		\node at (y2) [left = 0.5mm of y2] {\footnotesize{$x_3$}};
		
		
		\coordinate (yq) at (-2.07,4.5);
		\shade[shading=ball, ball color=black] (yq) circle (.08);
		\node at (yq) [left = 0.5mm of yq] {\footnotesize{$x_1$}};
		
		\coordinate (z1) at (1.18,1);
		\shade[shading=ball, ball color=black] (z1) circle (.08);
		\node at (z1) [left = 0.5mm of z1] {\footnotesize{$x_7$}};
		
		\coordinate (z2) at (1.62,2);
		\shade[shading=ball, ball color=black] (z2) circle (.08);
		\node at (z2) [left = 0.5mm of z2] {\footnotesize{$x_4$}};
		
		
		\coordinate (zq) at (2, 3.5);
		\shade[shading=ball, ball color=black] (zq) circle (.08);
		\node at (zq) [left = 0.5mm of zq] {\footnotesize{$x_2$}};
		
		\draw[color=red, ultra thick] (-1.9,5.9) to [bend right=20] (-.4,0.1);
		\draw [color=red, ultra thick] (-.4,0.1) -- (.4,0.1);
		\node at (y2) [right = 0.5mm of y2] {\color{red}$\boldmath{P_u}$\color{black}};
		
		\draw[color=blue, ultra thick] (-1.9,6.1) to [bend left=20] (2.1,5.1);
		\draw[color=blue, ultra thick] (2.1,5.1) to [bend left=20] (.6,-.1);
		\draw [color=blue, ultra thick] (.6,-.1) -- (-.4,-0.1);
		\node at (z2) [right = 0.5mm of z2] {\color{blue}$\boldmath{P_w}$\color{black}};

	\end{tikzpicture}	
		\caption{$P_u$ and $P_w$ are initially the concatenation of the shortcuts of $u$ and $w$ with $\langle u, w \rangle$. Nodes $x_1, \dots , x_8$ are the nodes with two adjacent $0$-edges on $P_u$ and $P_w$ in ascending order by their $1$-reachabilities. }
		\label{fig:detection}
	\end{figure}
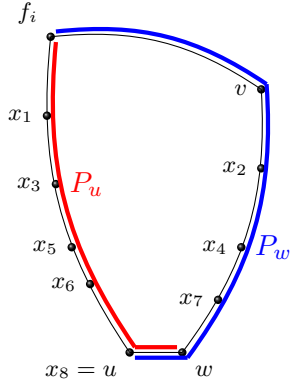

	Let us now start providing a few definitions and a few observations through a set of helper lemmas in the sequel, and finally at the end of this section we present the proof of \Cref{lem:step4}.
	For ease of discussion throughout this section, for every node $v$ in a cluster of the free node clustering, let $R(v)$ denote the reachability of $v$, i.e., $R(v):= l$ if $v$ is $l$-reachable.  
	
	\begin{lemma}\label{lem:disjoint-shortcuts}
          Assuming that \textit{I.H.} holds, let $v$ be
          an arbitrary $r'$-reachable node, and let $\set{u, w}$ be an
          arbitrary edge for which there exists a shortcut
          of $u$ and a shortcut of $w$ such that only one of them
          contains $v$.  Moreover, assume that there does not exist a
          node $x$ such that $x$ is $l$-reachable for $l\geq r'$ and
          such that all shortcuts of $u$ and $w$ contain $x$.  Then,
          there exists a shortcut of $u$ and a shortcut of $w$ with no
          common $l$-reachable node for any $l\geq r'$ such that only
          one of them contains $v$.
	\end{lemma}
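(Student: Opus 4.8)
The plan is to prove the statement constructively, via an explicit rerouting procedure (this is \Cref{alg:shortcuts-construction}). Two quick reductions come first. If $u$ and $w$ lie in different clusters the claim is vacuous, since every shortcut of $u$ ends at $u$'s cluster center and every shortcut of $w$ at $w$'s, so no node is common to a shortcut of $u$ and one of $w$; hence assume $u,w$ lie in a common cluster centered at $f_i$. The hypothesis gives a shortcut of $u$ and a shortcut of $w$ exactly one of which contains $v$, and since both the hypotheses and the conclusion are symmetric under swapping $u$ and $w$, we may assume there is a shortcut of $u$ through $v$ and a shortcut of $w$ avoiding $v$. Call a node $z$ \emph{high} if it is $l$-reachable for some $l\geq r'$; note $v$ is high, as $R(v)=r'$. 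By \Cref{lem:reach} (all reachabilities here are at most $r$), following predecessor edges strictly decreases reachability, so the predecessor relation is acyclic with every maximal predecessor chain reaching $f_i$, and the reachabilities strictly decrease as one moves along a shortcut from its source towards $f_i$; in particular the high nodes of a fixed shortcut form a chain and those closer to the source have strictly larger reachability. Observe that the second hypothesis says exactly that no high node lies on every shortcut of $u$ and on every shortcut of $w$.

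The procedure maintains a pair $(S_u,S_w)$, $S_u$ a shortcut of $u$ through $v$ and $S_w$ a shortcut of $w$ avoiding $v$, initialized as above. As long as $S_u$ and $S_w$ share a high node, let $x$ be one of maximum reachability. Then $S_u[u,x)$ is disjoint from $S_w$ and $S_w[w,x)$ from $S_u$: a node common to $S_u[u,x)$ and $S_w$ lies closer to $u$ than $x$ on $S_u$, so has reachability exceeding $R(x)$ and is high, contradicting the choice of $x$. Since $x\neq v$ (because $v\notin S_w$) and both $x$ and $v$ lie on the shortcut $S_u$, strict monotonicity gives $R(x)>R(v)=r'$, so $x$ lies on the portion $S_u[u,v)$; hence any reroute of $S_u$ that deletes $x$ changes only $S_u[u,v]$ and keeps $v$ automatically. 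By the second hypothesis $x$ is not on all shortcuts of $u$ or not on all shortcuts of $w$; accordingly we reroute $S_u$ or $S_w$, fixing a shortcut of the relevant node that avoids $x$ and splicing it into the current shortcut at a branch point, obtaining a new shortcut that avoids $x$ and --- one argues --- introduces no \emph{new} high node that is shared with the other shortcut and has reachability at least $R(x)$. The finite, strictly decreasing list of reachabilities of the shared high nodes then drops lexicographically, so after finitely many steps $S_u\cap S_w$ contains no high node while still $v\in S_u$ and $v\notin S_w$, which is the lemma. (As a sanity check, if $u=v$ a shared high node would have to equal $v$, which is impossible since $v\notin S_w$, so the initialization already succeeds.)

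The technical core is the splicing step and its invariants, and I expect it to be the main obstacle. The first difficulty is that a reroute around $x$ cannot in general be done locally at the predecessor of $x$ on the current shortcut --- that predecessor may have $x$ as its only predecessor --- so the splice point must be pushed back to a genuine branch point; the chain structure of the cut nodes of the rerouted endpoint (the dominator-chain property in the predecessor DAG with sink $f_i$) is what lets one place this branch point above all currently-shared high nodes of reachability at least $R(x)$, so that the new branch cannot recreate any of them. The second difficulty is the asymmetric role of $v$: when $S_w$ is the shortcut being rerouted, a shortcut of $w$ avoiding $x$ might pass through $v$, and one must argue --- from the second hypothesis and the invariant that the current $S_w$ already avoids $v$ (so $v$ is not a cut node of $w$) --- that a shortcut of $w$ avoiding both $x$ and $v$, or at least a splice not reaching $v$, exists. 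Establishing that these choices always exist, are mutually compatible, and genuinely decrease the potential is the content of \Cref{alg:shortcuts-construction} and its correctness proof.
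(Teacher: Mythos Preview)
Your approach differs from the paper's, and the rerouting step has a genuine gap. You fix the convention $v\in S_u$, $v\notin S_w$ once and for all and try to reroute one of them around the highest shared high node. The inference ``$x\in S_u[u,v)$, hence any reroute of $S_u$ that deletes $x$ changes only $S_u[u,v]$ and keeps $v$'' is unjustified: a shortcut of $u$ avoiding $x$ need not pass through $v$, and nothing forces an $x$-avoiding branch to rejoin $S_u$ between $x$ and $v$. Worse, it can happen that every shortcut of $u$ through $v$ uses the conflict node and every shortcut of $w$ avoiding $v$ uses it too, so neither side can be rerouted while preserving your invariant --- yet the lemma still holds, witnessed by a pair with the roles reversed. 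Concretely: give $u$ sole predecessor $a$; give $a$ predecessors $x$ and $b$ with $x$'s sole predecessor $v$ and $R(b)<r'$; give $w$ predecessors $x$ and $c$ with $c$'s sole predecessor $a$. Both hypotheses of the lemma are met, and the unique valid pair is $S_u=\langle u,a,b,\dots\rangle$, $S_w=\langle w,x,v,\dots\rangle$, with $v$ on the $w$-side. Your procedure starts from $S_u=\langle u,a,x,v,\dots\rangle$ and $S_w=\langle w,c,a,b,\dots\rangle$, which share the high node $a$, and can neither reroute $S_u$ around $a$ (all $u$-shortcuts contain $a$) nor reroute $S_w$ around $a$ without acquiring $v$. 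The escape you sketch --- that since $v$ is not a cut node of $w$ one can find a $w$-shortcut avoiding both the conflict node and $v$ --- fails here: no shortcut of $w$ avoids both $a$ and $v$.

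The paper's \Cref{alg:shortcuts-construction} is not this rerouting scheme. It grows disjoint \emph{prefixes} $P_u,P_w$ along predecessor edges and labels each current endpoint by whether it still admits a shortcut through $v$ (label $v^{(1)}$) or avoiding $v$ (label $v^{(0)}$); the invariant $\mathcal{A}$ requires only that the two endpoints carry opposite labels, with no commitment as to which side will ultimately contain $v$. When extending $P_u$ by a correctly-labeled predecessor collides with $P_w$, Step~3 truncates $P_w$ at the collision point (which now carries $t_u$'s label) and restarts $P_u$ along a fresh shortcut of $u$, in effect swapping the roles; Step~4 invokes the non-cut hypothesis to route around the collision when no such fresh shortcut exists. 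This label-tracking with the freedom to swap which side is destined to contain $v$ is the missing idea in your proposal.
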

	\begin{proof}
		Let us prove this lemma by constructing the desired shortcuts of $u$ and $w$. 
		We present an algorithm to actually construct the shortcuts (the pseudocode is given in \Cref{alg:shortcuts-construction}).
		Before we discuss the algorithm in detail, let us present the outline of the algorithm in the following. 
		Let us initially define paths $P_u:=\langle u \rangle$ and $P_w:=\langle w \rangle$.
		We update $P_u$ and $P_w$ in phases such that $P_u$ always remains a consecutive subpath of a shortcut of $u$ starting at $u$ and $P_w$ always remains a consecutive subpath of a shortcut of $w$ starting at $w$. 
		Let $t_u$ and $t_w$ be the variables that respectively maintain the last nodes of $P_u$ and $P_w$. 
		Variables $t_u$ and $t_w$ are initially set to $u$ and $w$ respectively, and they get updated after updating the paths in each phase. 
		Initially $P_u$ and $P_w$ are clearly disjoint. 
		Moreover, since at least one of $u$ or $w$ has a shortcut containing $v$, 	it initially holds that $\max\set{R(t_u), R(t_w)}>r'$ due to \Cref{lem:reach}. 
		In every phase, $P_u$ and $P_w$ are updated such that $\max\set{R(t_u), R(t_w)}$ decreases while $P_w$ and $P_u$ remain disjoint. 
		We will show that the phases are run in a way such that eventually $\max\set{R(t_u), R(t_w)} = r'$ and exactly one of paths $P_u$ or $P_w$ contains $v$. 
		Then, we construct the desired shortcuts $S_u$ and $S_w$ as follows. 
		$S_u$ is the concatenation of $P_u$ and a shortcut of $t_u$, and $S_w$ is the concatenation of $P_w$ and a shortcut of $t_w$.
		It is easy to see that $S_u$ and $S_w$ are shortcuts of $u$ and $w$ respectively. 
		Paths $S_u$ and $S_w$ have no common $l$-reachable node for any $l\geq r'$ since $P_u$ and $P_w$ are disjoint and also every node of the shortcuts of $t_u$ and $t_w$, except $t_u$ and $t_w$, is $l$-reachable for some $l<r'$. 
		Hence, further considering that exactly one of $P_u$ or $P_w$ contains $v$, it holds that exactly one of $S_u$ or $S_w$ contains $v$.
		In the rest of the proof, we will show how to update $P_u$ and $P_w$ in phases so that eventually after some phase $\max\set{R(t_u), R(t_w)} = r'$ and exactly one of paths $P_u$ or $P_w$ contains $v$.  
		  
		Let us label all nodes in $V_{\calG}$ as follows. 
		Every node that has a shortcut containing $v$ is labeled $v^{(1)}$, and every node that has a shortcut not containing $v$ is labeled $v^{(0)}$. 
		Note that a node can have both labels $v^{(1)}$ and $v^{(0)}$. 
		At the beginning of the execution, $t_u$ has label $v^{(\sigma)}$ and $t_w$ has label $v^{(1-\sigma)}$ for some integer $\sigma \in \set{0,1}$ since there exist a shortcut of $u$ and a shortcut of $w$ such that only one of them contains $v$.
		Therefore, to eventually have the paths $P_u$ and $P_w$ as desired, it is enough to decrease $\max\set{R(t_u), R(t_w)}$ in every phase while maintaining the following invariant:

\vspace{.2cm}
\begin{addmargin}[2em]{2em}
			\textit{Invariant $\mathcal{A}$: There exists an integer $\sigma\in\set{0,1}$ s.t. $t_u$ is labeled $v^{(\sigma)}$, $t_w$ is labeled $v^{(1-\sigma)}$, and $P_u \cap P_w = \emptyset$.}  
\end{addmargin}
\vspace{.2cm}

\noindent
	That is because at the beginning of the path construction execution, $\max\set{R(t_u), R(t_w)}>r'$ and invariant $\calA$ holds. 
	Moreover, based on \Cref{lem:reach}, the only node that is $l$-reachable for $l\leq r'$ and has a shortcut containing $v$ is node $v$ itself. 
	Therefore, one can observe that if one can always update the paths while maintaining invariant $\calA$, eventually $\max\set{R(t_u), R(t_w)} = r'$ after some updating phase.
		
		A single phase of updating $P_u$ and $P_w$ performs the following four steps. 
		However, it is crucial to note that a successful update in the second or third steps concludes the phase and avoids executing the next steps.  
		Without loss of generality, let us assume that at the beginning of the phase, it holds that $R(t_u) \leq R(t_w)$. 
		By symmetry, we can similarly state everything in the sequel for the case of having $R(t_w) \leq R(t_u)$.
		
		\begin{itemize}
			\item \textit{Step 1:} If $R(t_u) = R(t_w)$, $P_w \leftarrow P_w \circ \langle t_w, z\rangle$, where $z$ is a predecessor of $t_w$ that has label $v^{(1-\sigma)}$. 
			\item \textit{Step 2:} If $t_u$ has a predecessor $t\not\in P_w$ that is labeled $v^{(\sigma)}$, then $P_u \leftarrow P_u \circ \langle t_u, t\rangle$.
			\item \textit{Step 3:} Else, let $t\in P_w$ be a predecessor of $t_u$ that is labeled $v^{(\sigma)}$. If there is a shortcut $S$ of $u$ that is disjoint from $P_w[w, t]$, let $t'\in S$ be the closest node to $u$ on $S$ with $R(t')<R(t_u)$, and then $P_u \leftarrow S[u,t']$ and $P_w \leftarrow P_w[w, t]$. 
			\item \textit{Step 4:} Else, let $S'$ be any shortcut of $u$ or $w$ that does not contain $t$. Let $t_1\in P_w[w, t]\cap S'$ be the closets node to $t$ on $P_w$. Moreover, let $t_2$ be the closest node to $t_1$ in $S'[t_1, f]$ such that $R(t_2)<R(t_u)$. 
			Then,  $P_u \leftarrow P_u \circ \langle t_u, t\rangle$ and $P_w \leftarrow P_w[w, t_1] \circ S'[t_1, t_2]$.  
		\end{itemize}	
		
	Now let us show by induction that every phase successfully updates paths $P_u$ and $P_w$, where $\max\set{R(t_u), R(t_w)}$ decreases while invariant $\calA$ is maintained. 			
	To do so, consider an arbitrary phase of the path construction such that at the beginning of the phase, invariant $\calA$ holds and $\max\set{R(t_u), R(t_w)}>r'$. 
	Let $\hat{t}$ be $\max\set{R(t_u), R(t_w)}$ at the beginning of the phase. 
	We will show that after this updating phase, $\max\set{R(t_u), R(t_w)} < \hat{t}$ while invariant $\calA$ is maintained. 
	
	Consider an arbitrary node $s$, where $R(s)>R(v)$. 
	If $s$ has label $v^{(1)}$, then its neighbor on its shortcut that contains $v$ is labeled $v^{(1)}$(note that $s\neq v$ since $R(s)>R(v)$). 
	Hence, $s$ has a predecessor labeled $v^{(1)}$. 
	If $s$ has label $v^{(0)}$, then its neighbor on its shortcut that does not contain $v$ is labeled $v^{(0)}$, and hence, $s$ has a predecessor labeled $v^{(0)}$.
	Therefore, we can say that every node $s$ that is labeled $v^{(x)}$, for any $x\in \set{0,1}$, has a predecessor labeled $v^{(x)}$ if $R(s)>R(v)$.
	
	Now let us consider the first step.
	In case $R(t_w) = R(t_u)$, it updates $P_w$ such that $R(t_w)$ becomes less than $R(t_u)$.
	Let us assume that at the beginning of the phase, $R(t_u) = R(t_w)$. 
	Every node in $P_u$ is $l$-reachable for some $l\geq R(t_u)$. 
	However, every predecessor of $t_w$ is $l'$-reachable for some $l'<R(t_w)$. 
	Therefore, no predecessor of $t_w$ is in $P_u$. 
	Moreover, since $R(t_w)>R(v)$ and $t_w$ is labeled $v^{(1-\sigma)}$, $t_w$ has a predecessor labeled $v^{(1-\sigma)}$. 
	Hence, $P_w$ can be successfully updated such that $R(t_w)$ decreases while invariant $\calA$ remains true.  
	
	After \textit{Step 1}, we are sure that $R(t_w) \neq R(t_u)$ and hence $R(t_w) < R(t_u)$. 
	Now let us show that at least one of the three steps, from \textit{Step 2} to \textit{Step 4}, is successful in updating the paths. 
		If the second step is successful to update $P_u$, after the update, $R(t_u) = R(t) < \hat{t}$ while $t_w$ is still labeled $v^{(1-\sigma)}$ as $P_w$ is not changed.
		Therefore, since $t\not\in P_w$, invariant $\calA$ remains true and $\max\set{R(t_u), R(t_w)}$ decreases. 
		Now let us assume that the second step is not successful to update $P_u$. 
		Since $R(t_u)>R(v)$ and $t_u$ is labeled $v^{(\sigma)}$, node $t_u$ must have a predecessor $t$ that is labeled $v^{(\sigma)}$. 
		Since the second step is not successful, $t\in P_w$. 
		Node $t$ is the predecessor of $t_u$, and hence, $R(t)< R(t_u)\leq \hat{t}$.
		Moreover, due to the choice of $t'$, $R(t')<R(t_u)\leq \hat{t}$. 
		Therefore, if the third step is successful, by updating the paths, $t_u$ and $t_w$ are updated by setting $t_u\leftarrow t'$ and $t_w \leftarrow t$. 
		Hence, $R(t_u) = R(t') <\hat{t}$ and $R(t_w) = R(t) < \hat{t}$ while $P_u\cap P_w = \emptyset$. 
		
		Now let us assume that the second and third steps are not successful in updating the paths. 
		Let us show that the fourth step is then guaranteed to be successful in updating the paths. 
		Since $R(t)>R(v)$, based on the lemma's assumption, $t$ cannot be a common node of all the shortcuts of $u$ and $w$. 
		Therefore, there must be a shortcut $S'$ of $u$ or $w$ that does not contain $t$. 
		If $S'$ is a shortcut of $w$, it is easy to see that it has a common node with $P_w[w, t]$, e.g. $w$. 
		Otherwise, if $S'$ is a shortcut of $u$, then $S'$ must have a common node with $P_w[w, t]$ since the second step was not successful. 
		Hence, there exists the closest node $t_1\in P_w[w, t]\cap S'$ to $t$ on $P_w$. 
		Thus, $P_w[t_1, t]\cap S'[t_1, t_2] = \set{t_1}$. 
		Now let us show that $S'[t_1, t_2]\cap P_u = \emptyset$. 
		For the sake of contradiction, let us assume that $t'' \in S'[t_1, t_2]\cap P_u$ is the closest node to $t_1$ on $S'$. 
		Then, the concatenation $P_u[u, t'']$ and $S'[t'', t_2]$ is a shortcut of $u$ that is disjoint from $P_w[w, t]$, which contradicts the fact that the third step was not successful. 
		Therefore, $P_u$ and $P_w$ remain disjoint after the update while $\max\set{R(t_u), R(t_w)}$ decreases. 
		Note that since in \textit{Step 3} and \textit{Step 4}, we respectively choose $t'$ and $t_2$ as the closest nodes to $u$ and $t_1$ with reachability smaller than $R(t_u)$, we definitely have a phase in which $\max\set{R(t_u), R(t_w)} = r'$ and $v$ is the endpoint of one of the paths. 
		
              \end{proof}

\begin{algorithm}[htp]
  		\SetAlgoLined\DontPrintSemicolon
  		\SetKwFunction{algo}{Shortcut-Construction}\SetKwFunction{proc}{Update}
  		\SetKwProg{myalg}{}{}{}  
  		\myalg{\algo{$G, u, w, r'$}}{
			$t_w \leftarrow w$\\
			$t_u \leftarrow u$\\
			\While{$\max\set{R(t_u), R(t_w)} \neq r'$}{
				\texttt{Update}$(P_w, P_u, t_w, t_u)$
			}			
			$T_u \leftarrow$ a shortcut of $t_u$\\
			$T_w \leftarrow$ a shortcut of $t_w$\\
			$S_u \leftarrow P_u \circ T_u$\\
			$S_w \leftarrow P_w \circ T_w$\\
			
  			\nl \KwRet $S_u, S_w$\;}{}
  			\;
  		\setcounter{AlgoLine}{0}
  		\SetKwProg{myproc}{}{}{}  
  		\myproc{\proc{$P_w, P_u, t_w, t_u$}}{
				\textit{Step 1:}\\
  				\If{$R(t_u) = R(t_w)$}{
  					$P_w \leftarrow P_w \circ \langle t_w, z\rangle$ for a predecessor $z$ of $t_w$ that is labeled $v^{(1-\sigma)}$ \;
					$t_w \leftarrow z$ \;
  				}
				\textit{Step 2:}\\
				\If{$t_u$ has a predecessor $t\not\in P_w$ labeled $v^{(\sigma)}$}{
  					$P_u \leftarrow P_u \circ \langle t_u, t\rangle$ \;
					$t_u \leftarrow t$ \;
					\Return \;
  				}
				\textit{Step 3:}\\
				let $t\in P_w$ be a predecessor of $t_u$ that is labeled $v^{(\sigma)}$ \\
  				\If{$\exists$ shortcut $S$ of $u$ s.t. $S \cap P_w[w, t] = \emptyset$}{
  					 let $t'\in S$ be the closest node to $u$ on $S$ with $R(t')<R(t_u)$\;
					  $P_u \leftarrow S[u,t']$ \;
					  $t_u \leftarrow t'$\;
					  $P_w \leftarrow P_w[w, t]$\;
					  $t_w \leftarrow t$\;
					  \Return \;
  				}
				\textit{Step 4:}\\
				\Else{
					let $S'$ be any shortcut of $u$ or $w$ that does not contain $t$\;
					let $t_1\in P_w[w, t]\cap S'$ be the closets node to $t$ on $P_w$\;
					let $t_2$ be the closest node to $t_1$ in $S'[t_1, f]$ such that $R(t_2)<R(t_u)$\;
					$P_u \leftarrow P_u \circ \langle t_u, t\rangle$\;
					$t_u \leftarrow t$\;
					$P_w \leftarrow P_w[w, t_1] \circ S'[t_1, t_2]$\;
					$t_w \leftarrow t_2$\;
				}
		}
		\caption{The Shortcuts Construction Algorithm.}
		\label{alg:shortcuts-construction}
	 \end{algorithm} 
	 
	 
	In the following, we borrow a few notations and definitions from \cite{vazirani13}. 
	Let us henceforth consider an implicit direction on shortest $0$-paths and $1$-paths of the nodes from the cluster centers towards the nodes. 
	Let us consider a shortest uniform $\vartheta$-path $P$ of any node $u$ and a shortest uniform $\varphi$-path $Q$ of any node $w$ for $\vartheta, \varphi \in \set{0,1}$ to define the following notions. 
	
	\begin{definition}[Common Edge]
		An edge $e$ on both paths $P$ and $Q$ is called a \textit{common edge} of $P$ and $Q$.
	\end{definition}
	
	\begin{definition}[Forward/Backward Edge]
		Considering the implicit direction on $P$ and $Q$, if both $P$ and $Q$ traverse $e$ in the same direction, $e$ is called a \textit{forward edge} and otherwise a \textit{backward edge}.
	\end{definition}
	
	\begin{definition}[Separator]
		If $P$ and $Q$ have a common edge and the induced graph by paths $P$ and $Q$ gets disconnected by removing edge $e$, then $e$ is called a \textit{separator}.
	\end{definition}	
	Note that a separator edge of $P$ and $Q$ must be a forward edge. 
	\begin{definition}[Tenacity]
		For a $r'$-$\vartheta$-reachable and
                $r$-$(1-\vartheta)$-reachable node $v$, the \textit{tenacity} of $v$ is defined to be $r+r'$, denoted by $tn(v)$. 
		For a $\varphi$-edge $e=\set{u, w}$ that $u$ is
                $\ell_1$-$(1-\varphi)$-reachable and
                $\ell_2$-$(1-\varphi)$-reachable, the tenacity of $e$ is defined to be $\ell_1 + \ell_2 + 1$, denoted by $tn(e)$. 
	\end{definition}

	\begin{lemma}\label{lem:sep}
		Assuming that \textit{I.H.} holds, let $P$ be a shortest uniform $\psi$-path of any node $w$, and $Q$ be a shortest uniform $\varphi$-path of any node $u$ for $\psi, \varphi \in \set{0,1}$.
		Let us assume that there exists a node $v\in P$ of tenacity greater than $|P|+|Q|$ such that $v\not\in Q$. 
		If $u\neq w$ or $\varphi \neq \psi$, the closest common edge of $Q$ and $P[v , w]$ to the cluster center on $Q$ is a separator. 
	\end{lemma}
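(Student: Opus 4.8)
The plan is to argue by contradiction from a single quantitative consequence of the tenacity hypothesis. Since $Q$ and $P[v,w]$ share an edge and are both uniform, they lie in one cluster, say the one centered at the free node $f_i$, so $P$ and $Q$ both start at $f_i$ and every concatenation of sub-paths of $P$ and $Q$ is again uniform. Let $\chi\in\set{0,1}$ be the type of the edge of $P$ at $v$ on the $f_i$-side; then $P[f_i,v]$ is a uniform $\chi$-path of $v$, so $v$ is $r_\chi$-$\chi$-reachable with $r_\chi\le |P[f_i,v]|$, and $tn(v)>|P|+|Q|$ then forces every uniform $(1-\chi)$-path of $v$ to have length at least $tn(v)-r_\chi>|P|+|Q|-|P[f_i,v]|=|P[v,w]|+|Q|$. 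Throughout I will repeatedly produce a uniform $(1-\chi)$-path of $v$ of length at most $|P[v,w]|+|Q|-1$, contradicting this bound; each such path reaches $v$ through $v$'s $P$-edge towards $w$, which has type $1-\chi$ by alternation of $P$ at $v$.

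Let $e_0=\set{a_0,b_0}$ be the common edge of $Q$ and $P[v,w]$ closest to $f_i$ along $Q$, with $Q$ oriented $a_0\to b_0$. The first step is to show $e_0$ is a forward edge, i.e.\ $P$ also traverses it $a_0\to b_0$. If not, then on $P$ the segment reads $v,\dots,b_0,a_0,\dots,w$, hence $b_0\neq w$, and the concatenation of $Q[f_i,b_0]$ with the reverse of $P[v,b_0]$ is an alternating walk from $f_i$ to $v$: at the splice vertex $b_0$ it enters along $e_0$ and leaves along $b_0$'s $P$-predecessor edge, which by alternation of $P$ at $b_0$ has the opposite type to $e_0$; its length is at most $|Q[f_i,b_0]|+|P[v,b_0]|\le |Q|+(|P[v,w]|-1)$, which (once it is shown to contain a genuine path — see below) contradicts the lower bound. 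So $e_0$ is forward, and $Q=Q[f_i,a_0]\circ e_0\circ Q[b_0,u]$, $P[v,w]=P[v,a_0]\circ e_0\circ P[b_0,w]$. In $Q\cup P[v,w]$, after deleting $e_0$ the pieces $Q[f_i,a_0]$ and $P[v,a_0]$ lie on the $a_0$-side and $Q[b_0,u]$, $P[b_0,w]$ on the $b_0$-side; the two same-side pairs are automatically vertex-disjoint (sub-paths of the simple paths $Q$ resp.\ $P$ on opposite sides of $e_0$), and likewise $Q[f_i,a_0]\cap Q[b_0,u]=P[v,a_0]\cap P[b_0,w]=\emptyset$. Hence $e_0$ fails to be a separator of $Q$ and $P[v,w]$ only if one of the \emph{cross} pairs $Q[f_i,a_0]\cap P[b_0,w]$ or $Q[b_0,u]\cap P[v,a_0]$ is non-empty, so it suffices to exclude both.

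If $x\in Q[f_i,a_0]\cap P[b_0,w]$, the concatenation of $Q[f_i,x]$ with the reverse of $P[v,x]$ reaches $v$ through $v$'s $P$-edge towards $w$ and has length at most $|Q[f_i,a_0]|+|P[v,w]|\le (|Q|-1)+|P[v,w]|$ (the $-1$ since $Q[f_i,a_0]$ misses at least $e_0$); symmetrically $x\in Q[b_0,u]\cap P[v,a_0]$ gives such a walk of length at most $|Q|+|P[v,a_0]|\le |Q|+(|P[v,w]|-1)$. In both cases the length is at most $|P[v,w]|+|Q|-1$, the desired contradiction. The one genuine gap — and what I expect to be the main obstacle — is that these spliced objects are alternating \emph{walks} that need not be simple and need not alternate at the splice vertex $x$; this is precisely the failure of BFS-honesty in general graphs discussed in \Cref{sec:challenges}. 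I would close it by taking $x$ to be the vertex of the intersection closest to $f_i$ along $Q$ and by using that $e_0$ is the \emph{first} common edge of $Q$ and $P[v,w]$ (so $Q[f_i,a_0)$ contains no edge of $P[v,w]$), which localises the overlaps enough to pass, via the forward/backward-edge bookkeeping borrowed from \cite{vazirani13}, to a genuine uniform $(1-\chi)$-path of $v$ of no greater length. The hypothesis $u\neq w$ or $\varphi\neq\psi$ enters here to exclude the degenerate coincidence $P=Q$ and the boundary case $b_0=u=w$, which would otherwise spoil these splices.
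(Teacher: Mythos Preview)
Your argument establishes that $e_0$ separates $Q$ from $P[v,w]$, but the lemma asks for more: by the paper's definition, a separator must disconnect the graph induced by the \emph{full} paths $P$ and $Q$. After removing $e_0=\{a_0,b_0\}$ (forward, with $P[v,w]=P[v,a_0]\circ e_0\circ P[b_0,w]$), the $a_0$-side contains not just $P[v,a_0]$ but all of $P[f_i,a_0]=P[f_i,v]\circ P[v,a_0]$. So beyond the two cross pairs you list, one must also exclude $P[f_i,v]\cap Q[b_0,u]=\emptyset$. This is not a technicality: every downstream use of the lemma in the paper (e.g.\ concluding that $P_w[f,s_1]\circ S[s_1,s]$ is a path) relies on the full separator property.

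Your contradiction engine cannot reach this missing case. All your splices produce a walk that enters $v$ along its $P$-edge towards $w$, yielding a $(1-\chi)$-path whose shortness contradicts $tn(v)$. But if $y\in P[f_i,v]\cap Q[b_0,u]$, the natural splice $Q[f_i,y]\circ P[y,v]$ reaches $v$ from the $f_i$-side of $P$, i.e.\ through a $\chi$-edge, so it only bounds the $\chi$-reachability of $v$ and gives no tenacity contradiction. The paper handles exactly this case with a different tool: it takes the closest such common edge $e'$ on $P$, argues (via yet another tenacity splice and the inequality $|P[f_i,a_0]|\le|Q[f_i,a_0]|$ derived from the minimality of $P$) that $e'$ is forward, and then concatenates $P[f_i,y]\circ Q[y,u]$ to obtain a uniform $\varphi$-path of $u$ strictly shorter than $Q$, contradicting the minimality of $Q$ rather than the tenacity of $v$. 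A similar minimality-of-$Q$ argument is needed again for the sub-case $P[v,a_0]\cap Q[b_0,u]$ (your second cross pair), where your proposed walk $Q[f_i,x]\circ P[x,v]^{-1}$ need not be simple and your suggested fix (take $x$ closest to $f_i$ on $Q$) does not by itself control intersections with $P[f_i,v]$. In short, the tenacity hypothesis alone is not enough; the shortest-path property of $Q$ must be invoked, and that is where the bulk of the work lies.
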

	\begin{proof}
		Let $f$ be the cluster center of $u$ and $w$, and let $P[f, v]$ be a $(1-\vartheta)$-path of $v$ for any $\vartheta \in \set{0,1}$.
		Let $e:=\set{x', x}$ be the closest common edge of $Q$ and $P[v , w]$ to $f$ on $Q$ such that $x'$ is closer to $f$ than $x$ on $Q$.
		
		Let us first show that $e$ is a forward edge of $P$ and $Q$. 
		For the sake of contradiction, let us assume that $e$ is a backward edge of $P$ and $Q$. 
		Let $L$ be the concatenation of $Q[f , x']$ and $P[v , x']$. 
		Since $e$ is a backward edge, $L$ is an alternating walk. 
		Due to the choice of $e$, paths $Q[f , x']$ and $P[v , x]$ are disjoint and hence $L$ is a path.
		Moreover, since $P$ is alternating and $P[f, v]$ is a $(1-\vartheta)$-path, $v$'s adjacent edge on $L$ is a $\vartheta$-edge.
		Hence, $L$ is a uniform $\vartheta$-path of $v$.  
		Consequently, the shortest uniform $\vartheta$-path of $v$ is of length at most $|Q[f , x']| + |P[v , x']|$ while the shortest uniform $(1-\vartheta)$-path of $v$ is of length at most $|P[f , v]|$. 
		Hence, tenacity of $v$ should be at most $|P[f , v]| + |Q[f , x']| + |P[v , x']|$. 
		Since $|P[f , v]| + |P[v , x']| \leq |P|$ and $|Q[f , x']| < |Q|$, tenacity of $v$ should therefore be at most $|P|+|Q|$, which contradicts the assumed tenacity of $v$. 
		It concludes that $e$ is a forward edge of $P$ and $Q$.
		
		To show that $e$ is a separator of $P$ and $Q$, it is enough to show that $Q[f , x']\cap P[x , w] = \emptyset$ and $P[f , x']\cap Q[x , u] = \emptyset$. 
		The former equality holds as a direct implication of the choice of $e$. 
		Let us then show that $P[f , x']\cap Q[x , u] = \emptyset$.
		Let us first show that $P[f , v]\cap Q[x , u] = \emptyset$.
		For the sake of contradiction, let us assume that $P[f , v]$ and $Q[x , u]$ have a common edge. 
		Let $e' = \set{y', y}$ be the closest such edge to $f$ on $P$ such that $y'$ is closer to $f$ than $y$ on $P$. 
		Let also $e''= \set{z', z}$ be the closest common edge of $Q[x' , y]$ and $P[v , x]$ to $v$ on $P$ such that $z'$ is closer to $v$ than $z$ on $P$. 
		Note that if $Q[x , y]$ and $P[v , x']$ are disjoint, then $e$ and $e''$ are the same, i.e., $z' = x'$ and $z  = x$. 
		If $e \neq e''$, then by applying a similar argument to that in the second paragraph of this proof, we can show that $e''$ is also a forward edge. 
		
		Now let us show that $e'$ is also a forward edge. 
		For the sake of contradiction let us assume that $e'$ is a backward edge. 
		To show a contradiction, let us prove that $tn(v) < |P| + |Q|$.
		Since $e'$ is assumed to be a backward edge, path $Y$, the concatenation of $P[v , z']$, $Q[z' , y']$ and $P[f , y']$ is a uniform $\vartheta$-path of $v$.
		Then, to show that $tn(v) < |P| + |Q|$, it is enough to show that $|Y| \leq |Q|$.	
		Note that $|P[v , z']| + |P[f , y']| < |P[f , x']|$ and $|Q[z' , y']| < |Q[x' , u]|$. 
		Hence, to show that $|Y| \leq |Q|$, it suffices to prove that $|P[f , x']| \leq |Q[f , x']|$. 
		Since $Q[f , x']$ and $P[x' , w]$ are disjoint and $e = \set{x', x}$ is a forward edge of $P$ and $Q$, the concatenation of $Q[f , x']$ and $P[x' , w]$ is a uniform $\psi$-path of $w$. 
		Moreover, since $P$ is a shortest uniform $\psi$-path of $w$, $|Q[f , x']| + |P[x' , w]| \geq |P|$. 
		Therefore, $|P[f , x']| \leq |Q[f , x']|$. 
		This concludes that $e'$ is a forward edge. 
		
		Considering the choice of $e'$ and the fact that $e'$ is a forward edge, the concatenation of $P[f , y]$ and $Q[y , u]$, must be a uniform $\varphi$-path of $u$. 
		We already proved that $|P[f , x']| \leq |Q[f , x']|$.
		Therefore, since $|P[f , y]| < |P[f , x']|$, it holds that $|P[f , y]| \leq |Q[f , x']|$. 
		Moreover, $|Q[y , u]| < |Q[x' , u]|$. 
		Hence, $|P[f , y]| + |Q[y , u]| < |Q[f , x']| + |Q[x' , u]| = |Q|$. 
		This contradicts the assumption that $Q$ is a shortest uniform $\varphi$-path of $u$. 
		Hence, $Q[x , u]$ and $P[f , v]$ have no common edge.
		
		It is left to show that $Q[x , u]$ and $P[v , x']$ have no common edge. 
		For the sake of contradiction, let us assume that edge $\hat{e} = \set{s, s'}$ is the closest common edge of $Q[x , u]$ and $P[v , x']$ to $v$ on $P$, where $s'$ is closer to $v$ than $s$ on $P$. 
		With a similar argument to that in the second paragraph of this proof, one can show that $\hat{e}$ is a forward edge. 
		Therefore, path $Z$, the concatenation of $P[f , s]$ and $Q[s , u]$ is uniform $\varphi$-path of $u$.
		We already proved that $|P[f , x']| \leq |Q[f , x']|$. 
		Therefore, since $|P[f , s]| < |P[f , x']|$, it holds that $|P[f , s]| < |Q[f , x']|$.  
		Moreover, $|Q[s , u]| < |Q[x' , u]|$. 
		Therefore, $|P[f , s]| + |Q[s , u]| < |Q[f , x']| + |Q[x' , u]| = |Q|$.
		Therefore, $Z$ is of length less than $|Q|$ and a uniform $\varphi$-path of $u$. 
		This contradicts the assumption on the length of $Q$ as the shortest $\varphi$-path of $u$. 
		This leads to having $Q[x, u]$ and $P[f , x']$ with no common edge, which concludes the proof. 	
		
	\end{proof}

	\begin{lemma}\label{lem:token-time}
		Assuming that \textit{I.H.} holds, let an arbitrary node $w$ send a token to its neighbor $u$ in round $r_w\leq r$ of execution $\calE$. 
		Letting $S_w$ be a shortcut of $w$, it holds that $r_w = \|S_w \circ \langle w, u \rangle \|$.   
	\end{lemma}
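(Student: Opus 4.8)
The plan is to turn the token-forwarding rule into an arithmetic identity for $r_w$, and then to evaluate $\|S_w \circ \langle w,u\rangle\|$ by means of \Cref{lem:promoted}, splitting on the type of the edge along which a shortest uniform alternating path of $w$ leaves $w$.

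First I fix $\varphi\in\set{0,1}$ so that $\set{w,u}$ is a $\varphi$-edge. The DFNC token rule says that $w$ transmits $cid_w$ over its matched edge in round $t$ exactly when $r^{(0)}_w=t-1$, and over all its unmatched edges in round $t$ exactly when $r^{(1)}_w=t-1$; since $w$ transmits over the $\varphi$-edge $\set{w,u}$ in round $r_w$, this gives $r^{(1-\varphi)}_w=r_w-1$. (If $w$ is a free node the statement is immediate: then $r_w=1$, $S_w=\langle w\rangle$, and $\|S_w\circ\langle w,u\rangle\|=1$.) As $r_w\le r$ we have $r^{(1-\varphi)}_w=r_w-1<r$, so by \textit{I.H.}\ the node $w$ really is $(r_w-1)$-$(1-\varphi)$-reachable; in particular the reachability $R(w)$ of $w$ (the length of its shortest uniform alternating path) satisfies $R(w)\le r_w-1\le r$, and \Cref{lem:promoted} yields $\|S_w\|=R(w)$ for every shortcut $S_w$ of $w$.

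Now I evaluate the promoted length of the extension. In $S_w$ the node $w$ is an endpoint, so it has a single incident edge there and $d(S_w,w)=0$; appending $\set{w,u}$ increases the length by exactly $1$ and can change the delay only at $w$ (and, if $u$ happens to recur on $S_w$, one checks the recurring node's delay is unaffected). Hence $\|S_w\circ\langle w,u\rangle\|=R(w)+1+d\big(S_w\circ\langle w,u\rangle,\,w\big)$, and this last delay is nonzero only if both $\set{w,u}$ and the first edge $\set{w,\mathrm{pred}(w)}$ of $S_w$ are $0$-edges. Since $\mathrm{pred}(w)$ lies on a shortest uniform alternating path of $w$, the first edge of $S_w$ has type $1-\varphi$ iff $w$'s shortest uniform alternating path is a $(1-\varphi)$-path. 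In that case $R(w)=r^{(1-\varphi)}_w=r_w-1$ and the delay is $0$, so $\|S_w\circ\langle w,u\rangle\|=(r_w-1)+1=r_w$. Otherwise $w$'s shortest uniform alternating path is a $\varphi$-path, so $R(w)=r^{(\varphi)}_w$ (again a genuine path length by \textit{I.H.}, since $r^{(\varphi)}_w\le r_w-1<r$) and the first edge of $S_w$ is a $\varphi$-edge; when $\varphi=0$ both incident edges of $w$ on the extension are $0$-edges, so the delay equals $r^{(1)}_w-r^{(0)}_w=(r_w-1)-R(w)$ and $\|S_w\circ\langle w,u\rangle\|=R(w)+1+\big((r_w-1)-R(w)\big)=r_w$.

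The one remaining possibility is $\varphi=1$ together with $w$'s shortest uniform alternating path being a $1$-path; then $\set{w,u}$ is $w$'s unique matched edge and so is the first edge of every shortcut of $w$, which forces $u=\mathrm{pred}(w)$. I would dispose of this at the outset: in every place where the lemma is invoked a flow is generated over $\set{w,u}$, so by definition neither endpoint is the other's predecessor and the case does not arise; equivalently one simply reads the lemma with the (harmless, always-satisfied) restriction that $u$ is not a predecessor of $w$. Isolating this degenerate case is the only delicate point — there the ``concatenation'' $S_w\circ\langle w,u\rangle$ genuinely revisits a node and the length/delay bookkeeping no longer collapses to $r_w$; everything else is a mechanical consequence of the token rule, \textit{I.H.}, and \Cref{lem:promoted}.
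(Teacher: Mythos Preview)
Your argument is correct and follows the same approach as the paper: use \textit{I.H.}\ together with the token rule to pin down $r_w$, invoke \Cref{lem:promoted} to get $\|S_w\|=R(w)$, and then do a two-case split according to whether the edge $\{w,u\}$ has the same type as the first edge of $S_w$. The paper parametrizes by $\vartheta$ (the type of $w$'s shortest uniform alternating path) and splits on the type of $\{w,u\}$; you parametrize by $\varphi$ (the type of $\{w,u\}$) and split on the type of $w$'s shortest path---these are the same two cases viewed from opposite ends.

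You are in fact more careful than the paper in two places. First, you treat the free-node case explicitly; the paper's proof silently assumes $\ell'<\ell$, which fails for a free node. Second, the paper's case (b) simply asserts ``(i.e., $\vartheta=0$)'' when both edges at $w$ are $\vartheta$-edges, without justification; you correctly identify that the only way both can be $1$-edges is if $u$ is $w$'s unique matched neighbor and hence its predecessor, and you note that every invocation of the lemma in the paper is on an edge over which a flow is generated, where by definition neither endpoint is the other's predecessor. That disposal is exactly right and is the honest way to close the gap.
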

	\begin{proof}
		Let $w$ be $\ell'$-$\vartheta$-reachable and $\ell$-$(1-\vartheta)$-reachable for some integers $\vartheta\in \set{0,1}$ and $\ell' < \ell$. 
		Based on \Cref{lem:promoted}, it holds that 		
		\begin{equation}\label{eq:t1}
			\| S_w \| = \ell' \ \text{.}
		\end{equation}
		
		Let us consider the following two possibilities separately:
		\begin{enumerate}
			\item[(a)] $\set{w, u}$ is a $(1-\vartheta)$-edge.\\
				Due to \textit{I.H.}, $w$ sends the token to $u$ in round $\ell' + 1$. 
				Moreover, since $w$'s adjacent edge on $S_w$ is a $\vartheta$-edge and $\set{w, u}$ is a $(1-\vartheta)$-edge, it holds that $d(S_w \circ \langle w, u \rangle, w) = 0$. 
				Hence, 
				\begin{align}
					r_w &\ = \ell' + 1 \nonumber \\
				   		   &\overset{(\ref{eq:t1})}{=} \|S_w\| + 1 \nonumber \\
						   & \ = \|S_w \circ \langle w, u \rangle \| - d(S_w \circ \langle w, u \rangle, w) - 1 + 1  \nonumber \\
						   & \ = \|S_w \circ \langle w, u \rangle \| \ \text{.} \nonumber
				\end{align}

			\item[(b)] $\set{w, u}$ is a $\vartheta$-edge. \\
				Due to \textit{I.H.}, $w$ sends the token to $u$ in round $\ell + 1$. 
				However, since both $w$'s adjacent edge on $S_w$ and $\set{w, u}$ are $\vartheta$-edges (i.e., $\vartheta = 0$), it holds that $d(S_w \circ \langle w, u \rangle, w) = \ell - \ell'$.
				Hence, 
				\begin{align}
					r_w &\ = \ell + 1 \nonumber \\
					       &\ = \ell - \ell' + \ell' + 1 \nonumber \\
					       &\ = d(S_w \circ \langle w, u \rangle, w) + \ell' + 1 \nonumber \\
				   	      &\overset{(\ref{eq:t1})}{=} d(S_w \circ \langle w, u \rangle, w) + \|S_w\| + 1 \nonumber \\
					      &\ = \|S_w \circ \langle w, u \rangle \| \ \text{.} \nonumber
				\end{align}
		\end{enumerate}	
              \end{proof}

		\begin{lemma}\label{lem:tenacity1}
			Assuming that \textit{I.H.} holds, let $v$ be an arbitrary $r'$-reachable node for which round $r$ is the first incomplete round. 
			If $v$ receives a proper fraction of the flow of an edge $e$ in round $r$ of execution $\calE$, then $tn(e) \leq tn(v)$. 
		\end{lemma}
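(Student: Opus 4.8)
The plan is to compute $tn(e)$ in closed form in terms of the round $r$ and the reachability $r'$ of $v$, and then read off the inequality; the bulk of the work is a single non-negativity argument at the end.

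\textbf{Setup.} Write $e=\set{u,w}$, let $\varphi\in\set{0,1}$ be the type of $e$, and let $f_i$ be the cluster center of $u$, $w$ and $v$ (all three lie in the same cluster since the flow of $e$ reaches $v$). Because a flow is generated over $e$, neither endpoint is the other's predecessor and both send their token over $e$; let $r_u$ (resp.\ $r_w$) be the round in which $u$ sends over $e$ to $w$ (resp.\ $w$ to $u$). One checks $r_u\le r$ and $r_w\le r$ as in the proof of \Cref{lem:discards}, so \textit{I.H.}\ applies to $u$ and $w$: sending a token over a $\varphi$-edge in round $r_u$ forces $r^{(1-\varphi)}_u=r_u-1$, hence $u$ is $(r_u-1)$-$(1-\varphi)$-reachable, and likewise $w$ is $(r_w-1)$-$(1-\varphi)$-reachable. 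Therefore $tn(e)=(r_u-1)+(r_w-1)+1=r_u+r_w-1$. Moreover, since $v$ is $r'$-$\vartheta$-reachable (for the appropriate $\vartheta$) and round $r$ is the first incomplete round for $v$, we get $r^{(1-\vartheta)}_v=r$ and so $tn(v)=r+r'$.

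\textbf{Locating $v$.} The flow of $e$ that reaches $v$ was forwarded to $v$ along a chain of predecessor edges starting at $w$ or at $u$; without loss of generality it is a shortcut $S$ of $w$ containing $v$ (the other case is symmetric, and ``proper fraction'' guarantees $v$ forwards rather than discards this flow). Then the suffix $S[v,f_i]$ is itself a shortcut of $v$, so by \Cref{lem:promoted} its promoted length is $r'$; splitting the promoted length of $S$ at $v$ gives $\|S\|=\|S[w,v]\|+\|S[v,f_i]\|+d(S,v)=\|S[w,v]\|+r'+d(S,v)$. Following the timing analysis in the proof of \Cref{lem:discards}, $w$ receives the flow of $e$ over $e$ in round $r_u$, delays it by $d(P,w)$ where $P:=S\circ\langle w,u\rangle$, and the flow then travels along $S$, reaching $v$ in round $r=r_u+d(P,w)+\|S[w,v]\|$. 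On the other hand, \Cref{lem:token-time} applied to $w$'s token over $e$ gives $r_w=\|S\circ\langle w,u\rangle\|=\|S\|+d(P,w)+1$.

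\textbf{Combining.} Putting the pieces together,
\begin{equation*}
tn(e)=r_u+r_w-1=\bigl(r_u+d(P,w)+\|S[w,v]\|\bigr)+r'+d(S,v)=r+r'+d(S,v)=tn(v)+d(S,v),
\end{equation*}
so it remains to show $d(S,v)=0$; since delays are non-negative this amounts to showing that $v$ does not delay the flow of $e$ it receives along $S$. I expect this to be the main obstacle. If $v$ did delay that flow, then its predecessor edges are $0$-edges, whence $r'=r^{(0)}_v<r^{(1)}_v$, and $v$ would forward a fraction of the flow of $e$ only in round $r+(r^{(1)}_v-r^{(0)}_v)+1>r+1$; but $r$ being the first incomplete round for $v$ means $v$ forwards some flow already in round $r+1$. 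I would resolve this by noting that prior to its first incomplete round $v$ has not yet fixed $r^{(1-\vartheta)}_v$ and so cannot invoke the delay rule — hence it forwards the flow of $e$ in round $r+1$ and $d(S,v)=0$ — where the delicate point to pin down rigorously is that round $r$ cannot be the first incomplete round for $v$ while $v$ simultaneously delays the flow of $e$. With $d(S,v)=0$ the display gives $tn(e)=tn(v)$, which in particular yields $tn(e)\le tn(v)$, as claimed.
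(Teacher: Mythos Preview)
Your approach is the paper's: compute $tn(e)=r_u+r_w-1$, express it via \Cref{lem:token-time} and \Cref{lem:promoted} as $r+r'+d(S,v)$, and compare with $tn(v)$. The chain of equalities in your ``Combining'' display is correct. Two places need repair.

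First, you assert $tn(v)=r+r'$ from $r^{(1-\vartheta)}_v=r$; but the latter is an \emph{algorithmic} assignment, not a reachability statement --- concluding that $v$ is $r$-$(1-\vartheta)$-reachable is exactly \Cref{lem:step4}, whose proof \emph{uses} the present lemma, so this is circular. What you can prove here (and all you need) is the inequality $tn(v)\ge r+r'$: if $v$ had a shortest uniform $(1-\vartheta)$-path of length $r''<r$, then by \textit{I.H.}\ the variable $r^{(1-\vartheta)}_v=r''$ would already be set, forcing an incomplete round before $r$. Combined with $tn(e)=r+r'$ this gives the claim.

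Second, your argument for $d(S,v)=0$ addresses the \emph{algorithmic} delay (``$v$ cannot invoke the delay rule''), but the $d(S,v)$ appearing in your decomposition $\|S\|=\|S[w,v]\|+r'+d(S,v)$ is the \emph{graph-theoretic} quantity from the definition of promoted length. What must be shown is that $v$ does not have two adjacent $0$-edges on $S$. If $v$'s predecessor edge is a $1$-edge this is immediate. Otherwise $v$ is $r'$-$0$-reachable; let $v_1$ be $v$'s neighbour on $S$ towards $w$. Since $S$ is a shortcut, $v$ is a predecessor of $v_1$, so $v$ sent a token to $v_1$ over $\{v,v_1\}$. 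If $\{v,v_1\}$ were a $0$-edge, that token was sent in round $r^{(1)}_v+1$; hence $r^{(1)}_v$ was already set when $v_1$ joined its cluster, and since the first-token round $r'$ is odd (setting $r^{(0)}_v$), the variable $r^{(1)}_v$ can only have been set by the first incomplete round --- which would then be before $r$, a contradiction. Thus $\{v,v_1\}$ is a $1$-edge and $d(S,v)=0$. This is precisely the paper's argument, and it makes rigorous the ``delicate point'' you flagged.
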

	\begin{proof}
		To prove the lemma, we show that tenacity of $v$ is at least $r+r'$ and tenacity of $e$ is $r+r'$. 
		Let $f$ be the cluster center of node $v$. 
		
		Let us first show that tenacity of $v$ is at least $r+r'$. 
		For the sake of contradiction, let us assume otherwise. 
		Node $v$ is $r'$-reachable. 
		Let $\vartheta \in \set{0,1}$ be the integer such that $v$ is $r'$-$(1-\vartheta)$-reachable.
		Therefore, $v$ is $r''$-$\vartheta$-reachable for some $r''<r$. 
		Hence, $r''$ must be an incomplete round for $v$, which contradicts the assumption on $r$ as the first incomplete round $v$. 
		Hence, 
		\begin{equation}\label{eq:r}
			tn(v) \geq r + r'\ \text{.}
		\end{equation}	
		Next, we show that tenacity of $e$ is $r+r'$. 
		Let $u$ and $w$ be the two endpoints of $e$ such that $w$ has a shortcut $S$ along which a proper fraction of the flow of $e$ is sent and received by $v$ in round $r$. 
		Let $r_u$ be the round in which $u$ sends a token to $w$, and $r_w$ be the round in which $w$ sends a token to $u$. 
		Let $e$ be a $\psi$-edge for any $\psi \in \set{0,1}$. 
		Therefore, since the clustering is the well-formed $(r-1)$-radius free node clustering of $\calG$, $u$ and $w$ are respectively $(r_u - 1)$-$(1-\psi)$-reachable and $(r_w - 1)$-$(1-\psi)$-reachable. 
		Therefore, 
		\begin{equation}\label{eq:r0}
			tn(e) =  r_u + r_w - 1\ \text{.}
		\end{equation}		
		Let us now show that $d(S, v) = 0$. 
		Let $v_0$ and $v_1$ be $v$'s neighbors on $S$ such that $v_0$ is the predecessor of $v$. 
		If $\vartheta = 0$, then $v$ is connected to $v_0$ by a $1$-edge. 
		Then, $v$ must be connected to $v_1$ by a $0$-edge as it can only have one adjacent $1$-edge. 
		Therefore, it holds that $d(S, v) = 0$.
		Now let us consider the possibility of $\vartheta = 1$. 
		Hence, $v$ is connected to $v_0$ by a $0$-edge.
		Now we show that $v$ is connected to $v_1$ by a $1$-edge. 
		For the sake of contradiction, let us assume otherwise. 
		Then, since $v_1$ sends a flow of $e$ to $v$ in round $r$, node $v$ must have send a token to $v_1$ (over a $0$-edge) in some round $r'' < r$. 
		Therefore, due to \textit{I.H.}, $v$ must be a $(r'' - 1)$-$1$-reachable.
		Hence, round $r'' - 1$ must be an incomplete round for $v$, which contradicts round $r$ being the first incomplete round of $v$. 
		Therefore, we can conclude that all cases, it holds that 
		\begin{equation}\label{eq:r1}
			d(S, v) = 0 \ \text{.}
		\end{equation}
Now let us conclude the proof as follows:				
		\begin{align}	
			tn(e) &\ \ \overset{(\ref{eq:r0})}{=} \ r_u + r_w - 1 \ \nonumber \\
			&\overset{(\ref{lem:token-time})}{=} r_u + \|\langle u, w \rangle \circ S_w \| - 1 \ \nonumber \\
			&\ \ =\ \ \ r_u + \|\langle u, w \rangle \circ S_w[w, v] \| + d(S, w) + \|S[v, f]\| - 1 \ \nonumber \\
			&\ \overset{(\ref{eq:r1})}{=}\  r_u + \|\langle u, w \rangle \circ S_w[w, v] \| - 1 + \|S[v, f]\| \ \nonumber \\
			&\ \ = \ \ \ r_u + \|\langle u, w \rangle \circ S_w[w, v] \| - 1 + r' \ \nonumber \\
			&\ \ = \ \ \ r + r'  \label{eq:r2}
		\end{align}
		
		\Cref{eq:r} and \Cref{eq:r2} conclude the proof. 
		
              \end{proof}

	\begin{lemma}\label{lem:tenacity2}
		 Assuming that \textit{I.H.} holds, let $e=\set{u,w}$ be an arbitrary edge over which a flow is generated, where $u$ sends a token to $w$ in round $r_u$, and $w$ sends a token to $u$ in round $r_w$ in execution $\calE$. 
		 If $\max\set{r_u, r_w}<r$, tenacity of every node that has two adjacent free edges on the concatenation of $e$ and a shortcut of $u$ or $w$ is less than $tn(e)$. 
	\end{lemma}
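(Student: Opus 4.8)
The plan is to compute $tn(e)$ explicitly, exploit the promoted-length bookkeeping already set up for shortcuts, and then compare tenacities node by node along the relevant path.

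First I would pin down $tn(e)$. Since a flow is generated over $e=\set{u,w}$, both endpoints send tokens to each other and neither is the other's predecessor; let $e$ be a $\psi$-edge for some $\psi\in\set{0,1}$, and let $f_i$ be the common cluster center of $u$ and $w$. Because $\max\set{r_u,r_w}<r$, the \textit{I.H.}\ says that after $\max\set{r_u,r_w}$ rounds the clustering is the well-formed free node clustering of that radius, so the token-sending rule of the DFNC algorithm forces $r^{(1-\psi)}_u=r_u-1$ and $r^{(1-\psi)}_w=r_w-1$; hence $tn(e)=(r_u-1)+(r_w-1)+1=r_u+r_w-1$, exactly as in the proof of \Cref{lem:tenacity1}. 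By the symmetry of the hypotheses in $u$ and $w$ it suffices to treat the concatenation $P$ of $e$ with a shortcut $S$ of $u$, glued at $u$, so that $P$ runs from $w$ through $u$ and then along $S$ to $f_i$. Applying \Cref{lem:token-time} to $u$ sending a token to $w$ gives $\|P\|=r_u$.

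Next I would fix a node $x$ that has two adjacent $0$-edges on $P$ and reduce the claim to one reachability inequality. Since $w$ is an endpoint of $P$ we have $x\neq w$, so $P[x,f_i]$ is a shortcut of $x$; moreover $x$ lies on the shortcut $S$ of $u$, so $R(x)\le R(u)\le r_u-1<r$ (with $R(x)<R(u)$ for $x\neq u$ by \Cref{lem:reach}). Thus \Cref{lem:promoted} gives $\|P[x,f_i]\|=R(x)$, and because $x$'s edge on $P[x,f_i]$ is a $0$-edge (and is incident to a predecessor of $x$), $x$'s shortest uniform alternating path is a $0$-path, i.e.\ $R(x)=r^{(0)}_x$ is odd. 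Splitting $P$ at $x$ and using $d(P,x)=r^{(1)}_x-r^{(0)}_x$ (as $x$ has two adjacent $0$-edges on $P$),
\begin{equation*}
  r_u=\|P\|=\|P[w,x]\|+\|P[x,f_i]\|+d(P,x)=\|P[w,x]\|+r^{(1)}_x .
\end{equation*}
Hence $tn(x)=r^{(0)}_x+r^{(1)}_x=R(x)+r_u-\|P[w,x]\|$, so proving $tn(x)<tn(e)$ is equivalent to proving the single inequality $R(x)<r_w-1+\|P[w,x]\|$.

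It then remains to bound $R(x)$ from above, and this is the step I expect to be the main obstacle. The idea is that $x$ has a uniform $0$-path which leaves $S$ at $u$, crosses $e$ to $w$, and follows a shortcut $S_w$ of $w$ to $f_i$: read from $f_i$, such a walk has length at most $|S_w|+1+|S[u,x]|\le(r_w-1)+1+|S[u,x]|$, and converting its two shortcut segments into genuine alternating segments of the same promoted length, as in the proof of \Cref{lem:promoted}, should give $R(x)\le r_w-1+\|P[w,x]\|$ with the inequality strict. Making this precise requires two things: (i) checking that the constructed object is a genuine path, which I would do from the disjointness consequences of \Cref{lem:reach} and \Cref{lem:reach2} (nodes of $S$ strictly past $x$, and nodes of $S_w$ other than $w$, have reachability well below $R(x)$, and common edges of the constructed segments are separators by \Cref{lem:sep}), truncating at the first repeated vertex if necessary; and (ii) getting strictness. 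The case $x=u$, which occurs only when $e$ is a $0$-edge (so $R(u)=r^{(0)}_u$ and $r^{(1)}_u=r_u-1$), is the cleanest instance of (ii): extending a shortest uniform $1$-path of $w$ by the $0$-edge $e$ yields a uniform $0$-path of $u$ of length $r_w$ (or, if $u$ already lies on that path, a strictly shorter uniform path of $u$ by truncating at $u$), and equality $r^{(0)}_u=r_w$ would make $w$ a predecessor of $u$, contradicting that a flow is generated over $e$; hence $r^{(0)}_u<r_w$ and $tn(u)=r^{(0)}_u+(r_u-1)<r_w+r_u-1=tn(e)$. Carrying this argument out for general $x$ --- tracking where the constructed $0$-path of $x$ first meets $S$ and $S_w$ and accounting for delays --- is where the preceding structural lemmas all come together, and is the delicate part of the proof.
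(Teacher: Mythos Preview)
Your reduction is correct and clean: $tn(e)=r_u+r_w-1$, $\|P\|=r_u$ via \Cref{lem:token-time}, and splitting at $x$ gives $tn(x)=R(x)+r_u-\|P[w,x]\|$, so the claim becomes $R(x)<r_w-1+\|P[w,x]\|$. But you are about to make the last step much harder than necessary. No explicit alternating path of $x$ through $e$ and a shortcut of $w$ needs to be built, and \Cref{lem:sep} plays no role here.

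The key observation is one you already have buried in your $x=u$ case, but via an unnecessarily fragile path construction: since $w$ is not a predecessor of $u$, the first round in which $u$ receives any token (namely $R(u)$) must be strictly before the round $r_w$ in which $w$ sends its token to $u$; for if $r_w\le R(u)$ then $u$ receives $w$'s token no later than its first token, forcing $r_w=R(u)$ and making $w$ a predecessor of $u$. Hence $R(u)<r_w$. Since $x$ lies on a shortcut of $u$, \Cref{lem:reach} gives $R(x)\le R(u)$ (with equality only when $x=u$), and $\|P[w,x]\|\ge 1$, so $R(x)<r_w\le r_w-1+\|P[w,x]\|$, finishing your reduction immediately.

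The paper does not even pass through promoted lengths: it bounds $\ell_0$ and $\ell_1$ separately. For $x=u$ (which forces $e$ to be a $0$-edge) it reads off $\ell_1=r_u-1$ from the token rule and $\ell_0=R(u)<r_w$ from the non-predecessor argument above. For $x\ne u$ it uses \Cref{lem:reach2} to get $\ell_1<R(u)\le r_u-1$ and \Cref{lem:reach} to get $\ell_0=R(x)<R(u)<r_w$; summing gives $\ell_0+\ell_1<r_u+r_w-1=tn(e)$. Your promoted-length bookkeeping is a perfectly valid (and arguably more uniform) repackaging of the same inequalities, but the path-construction programme you outline for general $x$---invoking \Cref{lem:sep}, tracking first intersections, and truncating at repeated vertices---is overkill for a lemma whose content is the single line $R(u)<r_w$.
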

	\begin{proof}
		Let $e$ be a $\psi$-edge for any $\psi \in \set{0,1}$. 
		The nodes are provided with the well-formed $(r-1)$-radius free node clustering of $\calG$ and $\max\set{r_u, r_w}<r$. 
		Moreover, $u$ and $w$ send tokens over the adjacent $\psi$-edge $e$ in rounds $r_u$ and $r_w$ respectively. 
		Therefore, $u$ is $(r_u-1)$-$(1-\psi)$-reachable, and $w$ is $(r_w-1)$-$(1-\psi)$-reachable. 
		Hence, $tn(e) = r_u + r_w -1$. 
		Without loss of generality, consider path $S$, the concatenation of $e$ and an arbitrary shortcut of $u$ (the argument for $w$ is symmetric). 
		Let $x$ be an arbitrary node with two adjacent $0$-edges on $S$.
		Let $x$ be $\ell_0$-$0$-reachable and $\ell_1$-$1$-reachable. 
		For simplicity, let us separately study the cases where $u = x$ and $u \neq x$. 
		First consider the case where $u = x$. 
		It is easy to see that $\ell_1 + 1 = r_u$. 
		Moreover, since $w$ is not a predecessor of $u$, it must hold that $\ell_0$, that is the first round of receiving any token for $w$ is less than $r_w$. 
		Therefore, $\ell_0 + \ell_1 < r_u + r_w - 1 = tn(e)$. 
		
		Now let us consider the case when $u \neq x$. 
		Then, $\ell_1 < R(u) < r_u$. 
		Moreover, $\ell_0 < R(u) < r_w$ since $w$ is not a predecessor $u$ and hence $u$ must not receive a token from $w$ in the first round of receiving tokens. 
		Therefore, $\ell_0 + \ell_1 < r_u + r_w - 1 = tn(e)$. 	
		
              \end{proof}

	Here we present the proof of \Cref{lem:step4} as the final step of this section:
	
	\begin{proof}[Proof of \Cref{lem:step4}]
		We show that an arbitrary node is $r$-$\vartheta$-reachable if $r^{(\vartheta)}_v = r$ after $r$ rounds of $\calE$.
		Let us assume that $v$ is in the cluster centered at $f$. 
		There are two ways that $r^{(\vartheta)}_v$ can be set to $r$; either $r$ is the first round in which $v$ receives a token or it is the first incomplete round for $v$. 
		Let us first consider the former case, and let $v'$ be a neighbor of $v$ that sends token $f$ to $v$ in round $r$. 
		Then, $r^{(1-\vartheta)}_{v'} = r-1$, and since $\calD(r-1)$ is the well-formed $(r-1)$-radius free node clustering of $\calG$, $v'$ is $(r-1)$-$(1-\vartheta)$-reachable. 
		Let $P_{v'}$ be a shortest uniform $(1-\vartheta)$-path of length $r-1$ of $v'$. 
		Then, $v\not\in P_{v'}$ since otherwise $v$ should have received a token before round $r$, which is contradictory. 
		Therefore, path $P_v$, the concatenation of $P_{v'}$ and $\langle v', v\rangle$, is a uniform $\vartheta$-path of length $r$ of $v$. 
		Path $P_v$ is a shortest uniform $\vartheta$-path of $v$, since otherwise, $r^{(\vartheta)}_v$ would have been set to an integer smaller than $r$.
		Hence, $v$ is $r$-$\vartheta$-reachable. 
		
		For the rest of the proof, we consider the latter case where $r$ is the first incomplete round for $v$.
		Let $v$ be $r'$-reachable for some integer $r'<r$. 
		Let $e$ be an edge such that $v$ receives an incomplete flow of edge $e$ in round $r$. 
		Since $v$ receives a flow of edge $e$, all the shortcuts of the two endpoints of $e$ have no $l$-reachable common node for any $l\geq r'$. 
		That is because otherwise based on \Cref{lem:discards}, the common node of all the shortcuts of the two endpoints of $e$ that has maximum reachability discards the whole flow of $e$, and $v$ never receives a flow of $e$, which contradicts the fact that $v$ receives a flow of $e$.
		Therefore, based on \Cref{lem:disjoint-shortcuts}, there exist a shortcut $T_u$ of $u$ and a shortcut $T_w$ of $w$ with no common $l$-reachable node for any $l\geq r'$ such that exactly one of them contains $v$. 
		Without loss of generality, let us assume that $T_w$ contains $v$. 
		Let $e$ be a $\varphi$-edge for an integer $\varphi \in \set{0,1}$. 
		Then, we run \textit{the path construction algorithm}, whose pseudocode is given by \Cref{alg:disc}, to construct a shortest uniform $\vartheta$-path of length $r$ of $v$.
		
		\begin{algorithm}[htp]
  		\SetAlgoLined\DontPrintSemicolon
  		\SetKwFunction{algo}{Path-Discovery}\SetKwFunction{proc}{Resolve}
  		\SetKwProg{myalg}{}{}{}  
  		\myalg{\algo{$G, T_u, T_w, v, \vartheta$}}{
  			$P_v \leftarrow$ a shortest $(1-\vartheta)$-path of $v$; \\
			$P_u \leftarrow \langle w, u \rangle \circ T_u$; \\
			$P_w \leftarrow \langle u, w \rangle \circ T_w[w, v] \circ P_v$; \\
			Let $U$ be the set of nodes with two adjacent $0$-edges on $P_u$\\
			Let $W$ be the set of nodes with two adjacent $0$-edges on $P_w$\\
			Let $T \leftarrow U\cup W$\\ 
			\While{$T\neq \emptyset$}{
				Let $s\in T$ be the node with the smallest $1$-reachability\;
				\texttt{Resolve}$(P_w, P_u, s)$\;
				$T \leftarrow T\setminus \set{s}$\;
			}

  			\nl \KwRet $P_u[u, f_i] \circ \langle u, w \rangle \circ P_w[w, v]$\;}{}
  			\;
  		\setcounter{AlgoLine}{0}
  		\SetKwProg{myproc}{}{}{}  
  		\myproc{\proc{$P_u, P_w, s$}}{
  			\If{$s\in P_u$}{
  				\If{$s$ has a shortest uniform $1$-path $P'$ disjoint from $P_w[v , w]$}{
	 				$P_u \leftarrow P' \circ P_u[s , u]$\; 
         			}\;
  			}
  			\ElseIf{$s\in P_w$}{
  				\If{$s$ has a shortest uniform $1$-path $P'$ containing $v$ and disjoint from $P_u[f , u]$}{
	 				$P_w \leftarrow P' \circ P_w[s , w]$\; 
         			}
  			}
 
		}
		\caption{The Path Construction Algorithm.}
		\label{alg:disc}
	 \end{algorithm} 
		
		The algorithm gradually constructs a shortest uniform $(1-\varphi)$-path of $u$, i.e., $P_u[f, u]$, and a shortest uniform $(1-\varphi)$-path of $w$, i.e., $P_w[f, w]$, such that $P_w[f, w]$ contains $v$, $P_u[f, u]$ does not contain $v$, $P_u[f, u]$ and $P_w[v, w]$ have no common node, and $|P_u[f, u]|+|P_w[f, w]| = r+r'-1$.
		Then, the concatenation of $P_u[f, u]$, $\langle u, w \rangle$ and $P_w[v, w]$ is a uniform $\vartheta$-path of length $r$ of $v$. 
		Every iteration of the while-loop in the algorithm is called \textit{successful} if the procedure \texttt{Resolve} successfully updates one of the paths $P_w$ or $P_u$. 
		Then, to prove that the algorithm returns a shortest uniform $\vartheta$-path of length $r$ of $v$, we show that all iterations of the while-loop are successful, and hence the concatenation of $P_u[f, u]$, $\langle u,w \rangle$, and $P_w[w, v]$ is a uniform $\vartheta$-path of length $r$ of $v$. 
		Initially, $P_u[f, u]$ and $P_w[v, w]$ are disjoint. 
		If one iteration of the while is successful in updating one of the paths, $P_u[f, u]$ and $P_w[v, w]$ remain disjoint. 
		Considering an arbitrary iteration $j$ of the while
                loop, we show that if all previous iterations were successful in updating $P_u[f, u]$ and $P_w[v, w]$, iteration $j$ is also successful in updating $P_u[f, u]$ or $P_w[v, w]$. 
		Now let us consider the following two cases separately when procedure \texttt{Resolve$(T_w, T_u, s)$} is called. 
		\begin{enumerate}
			\item[(I)] $s\in P_u$
			
				Here we show that there exists some shortest uniform $1$-path of $s$ that is disjoint from $P_w[v , w]$.
				
				Let us first show that there is a shortest uniform $1$-path of $s$ that does not contain $v$. 
				To do so we search and find such a path in a procedure explained as follows. 
				First observe that $s$ must have a shortest uniform $1$-path. 
				Let $Z$ be an arbitrary shortest uniform $1$-path of $s$. 
				If $Z$ does not contain $v$, we are done with the search and $Z$ is one of such paths. 
				Let us thus assume that $Z$ contains $v$. 
				We will first show that the adjacent edge of $v$ on $Z[v , s]$ is a $\vartheta$-edge. 
				We will then show that the concatenation of $Z[v , s]$ and $P_u[f , s]$, that is an alternating walk, is of length less than $r$. 
				Thus, $Z[v , s]$ and $P_u[f , s]$ must have a common edge. 
				Then, we will argue how having such a common edge leads to the existence of a shortest uniform $1$-path of $s$ that does not contain $v$. 
				
				Let us show that the adjacent edge of $v$ on $Z[v , s]$ is a $\vartheta$-edge.
				Node $s$ sends a token to its neighbor in $P_u[s , w]$ in round $|Z| + 1$. 
				Moreover, considering \Cref{lem:reach2}, all the nodes in $P_u[s , u]$ excluding $s$ have reachability greater than $|Z|$, and hence they can only send tokens in rounds greater than $|Z|$. 
				Therefore, node $u$ that is in $P_u[s , u]$ sends a token to $w$ in a round greater than $|Z|$, i.e., $r_u > |Z|$. 
				We also know that $r_u \leq r$ since $T_w$ contains $v$ and $v$ receives a flow of $e$ along $T_w$ in round $r$.
				This overall concludes that $|Z| < r$. 
				Then, since $|Z[f , v]| < |Z|$, path $Z[f , v]$ is an alternating path of length less than $r$ of $v$. 
				Hence, the adjacent edge of $v$ on $Z[f , v]$ is not a $\vartheta$-edge, since otherwise $v$ would have a uniform $\vartheta$-path of length less than $r$ and hence have received an incomplete flow in a round before round $r$. 
				 
				Now let us show that the concatenation of $Z[v , s]$ and $P_u[f , s]$ is of length less than $r$.
				Path $P_u[f , s]$ is a shortest uniform $0$-path of $s$.
				Thus, $tn(s) = |P_u[f , s]| + |Z|$. 
				Based on \Cref{lem:tenacity2}, $tn(s) < tn(e)$, and based on \Cref{lem:tenacity1}, $tn(e)< tn(v)$. 
				Moreover, $tn(v) = r + r'$. 
				Therefore, $|P_u[f , s]| + |Z| < r + r'$. 
				Note that since $Z[f , v]$ is a uniform alternating path of $v$, $|Z[f , v]| \geq r'$.
				Hence, $|Z[v , s]| + |P_u[f , s]| < r$. 
				This concludes that $Z[v , s]$ and $P_u[f , s]$ must have a common edge as otherwise $v$ would have a uniform $\vartheta$-path of length less than $r$. 
				
				The last step is to show that this common edge leads to the existence of a shortest uniform $1$-path of $s$ that does not contain $v$. 
				Due to \Cref{lem:sep}, the closest common edge of $P_u[f , s]$ and $Z[v , s]$ to $f$ on $P_u$ is a separator of $Z$ and $P_u[f , s]$. 
				Let this edge be $\set{t_1, t_2}$.
				Then, the concatenation of $P_u[f , t_1]$ and $Z[t_1 , s]$ is a shortest uniform $1$-path of $s$ that does not contain $v$. 
				Let $\calS$ be the set of all shortest uniform $1$-paths of $s$ that do not contain $v$. 
				
				Let us now consider the following two cases separately: (recall that $T$ is the set of nodes that have two adjacent $0$-edges on current $P_w$ or $P_u$.)

				\begin{enumerate} 
					\item $P_w\cap T = \emptyset$ \\
						Here we show that there is some path in $\calS$ that is disjoint from $P_w[v , w]$. 
						We first show that there is a path in $\calS$ that does not have a common edge with $P_w[v , w]$.
						This implies that the path dose not contain any node in $P_w(v , w)$ since the path is an alternating path.
						Moreover, since the path is in $\calS$, it does not contain $v$. 
						Then, at the end we show that the path does not also contain $w$. 
						
						For the sake of contradiction, let us assume that there is no path in $\calS$ that has no common edge with path $P_w[v , w]$. 
						Considering $Q$ an alternating path starting at $f$, let the closest common edge of $Q$ and any path $Q'$ to $f$ on $Q$ be denoted by $\partial(Q \ on \ Q')$.  
						Let $e_1:=\set{s_1, s_1'}\in P_w[v , w]$ be the closest edge to $w$ such that $e_1$ is $\partial(S \ on \ P_w[v , w])$ for some $S\in \calS$. 
						Path $P_w[f , w]$ is a shortest uniform $(1-\varphi)$-path of $w$. 
						Moreover, $S$ is a shortest uniform $1$-path of $s$ with length less than the length of a shortest $(1-\varphi)$-path of $u$. 
						Hence, $|P_w[f , w]| + |S| < tn(e) < tn(v)$. 
						Then, based on \Cref{lem:sep}, $e_1$ is a separator. 
						Therefore, $L_1:= P_w[f , s_1] \circ S[s_1 , s]$ is a shortest uniform $1$-path of $s$ that contains $v$.
						Since $|L_1| + |P_u[f , u]| = tn(s) < tn(e) < tn(v) = r+r'$ and $|L_1[f , v]| \geq r'$, the concatenation of $L_1[v , s]$ and $P_u[f , s]$ is an alternating walk of length less than $r$ from $f$ to $v$ that contains a $\vartheta$-edge of $v$. 
						Therefore, this concatenation cannot be a path. 
						Hence, $P_u[f , s]$ and $L_1[v , s]$ must have a common edge. 
						Then, based on \Cref{lem:sep}, the closest common edge of $P_u[f , s]$ and $L_1[v , s]$ to $f$ on $P_u$ is a separator of $L_1$ and $P_u[f , s]$.
						Let $e_2 = \set{s_2, s_2'}$ be the edge. 
						Then, $L_2 := P_u[f , s_2] \circ L_1[s_2 , s]$ is a shortest uniform $1$-path of $s$ that does not contain $v$. 
						Note that $L_2[s_2 , s]$ is the same as $L_1[s_2 , s]$ and the same as $S[s_2 , s]$. 
						Therefore, since $e_1$ is a separator of $S$ and $P_w$, $L_2[s_2 , s]$ has no common edge with $P_w[f , s_1]$. 
						Moreover, $L_2[s_2 , s]$ has no common edge with $P_w[s_1 , w]$ as otherwise it contradicts the choice of $e_1$. 
						Therefore, $L_2$ is a shortest uniform $1$-path of $s$ that does not contain $v$ and has no common edge with $P_w[v , w]$.   
						
						It is left to show that $L_2$ does not also contain $w$. 
						Since $P_w\cap T = \emptyset$, it holds that $P_w[f , u]$ is alternating, and hence $w$ has an adjacent $1$-edge on $P_w$. 
						For the sake of contradiction, let us assume that $L_2$ contains $w$, and consequently it must contain the adjacent $1$-edge of $w$. 
						All the nodes in $L_2[f , s)$ have reachability of less than $R(s)$, and hence less than $R(u)$. 
						Therefore, $L_2[f , s)$ cannot contain $u$, and consequently cannot also contain $\set{u, w}$. 
						Therefore, the $1$-edge of $w$ is on $P_w[f , w]$, and hence $L_2$ must contain an edge on $P_w[v , w]$, which contradicts $L_2$ having no common edge with $P_w[v , w]$.  
						As a result, $L_2$ does not contain $w$.

					\item $P_w\cap T \neq \emptyset$ \\
						Let node $s'\in P_w \cap T$ be the closest node to $f$ on $P_w$. 
						Note that the reachability of every node in $P_w(s' , w]$ is greater than the $1$-reachability of $s'$. 
						Moreover, every node in any shortest uniform $1$-path of $s$ has reachability at most the $1$-reachability of $s$. 
						Therefore, since the $1$-reachability of $s$ is smaller than that of $s'$, any shortest uniform $1$-path of $s$ has no node in $P_w(s' , w]$.
						We need to show that there exists a path in $\calS$ that has no node in $P_w[v , s']$. 
						
						Let $s''$ be the matched neighbor of $s'$.  
						Then the concatenation of $P_w[f , s']$ and $\langle s', s'' \rangle$ is a shortest uniform $1$-path of $s''$. 
						Observe that $s''$ cannot be the same node as $s$ since otherwise $P_u[f , s]\circ \langle s, s' \rangle \circ P_w[v , s']$ would be a uniform $\vartheta$-path of length less than $r$ of $v$. 
						Then, we argue similarly to part (a) to show that there is a path in $\calS$ that does not have a common edge with $P_w[f , s'] \circ \langle s', s'' \rangle$. 
						This concludes that that there is a path in $\calS$ that is disjoint from $P_w[v , s']$ and hence $P_w[v , w]$.
\newpage
				\end{enumerate}
		
			\item[(II)] $s\in P_w$
			
				Here we show that there exists some shortest uniform $1$-path of $s$ that is disjoint from $P_u[f, u]$ and contains $v$.				
				Let us first show that there is a shortest uniform $1$-path of $s$ that contains $v$. 
				To do so we search and find such a path as follows. 
				First observe that $s$ must have a shortest uniform $1$-path. 
				Let $Z$ be an arbitrary shortest uniform $1$-path of $s$. 
				If $Z$ contains $v$, we are done with the search and $Z$ is one of such paths. 
				Let us thus assume that $Z$ does not contain $v$. 
				Since $Z$ is a shortest uniform $1$-path and $P_w[f , s]$ is a shortest uniform $0$-path of $s$, $tn(s) = |Z| + |P_w[f , s]|$. 
				Based on \Cref{lem:tenacity2}, $tn(s) < tn(e)$, and based on \Cref{lem:tenacity1}, $tn(e)< tn(v)$. 
				Moreover, $tn(v) = r + r'$. 
				Therefore, $|Z| + |P_w[f , s]| < r + r'$.
				Hence, since $|P_w[w , v]| = r'$, $|Z| + |P_w[v , s]| < r$.
				This concludes that the concatenation of $Z$ and $P_w[v , s]$ is an alternating walk of length less than $r$ from $f$ to $v$ that contains a $\vartheta$-edge of $v$. 
				Therefore, $Z$ and $P_w[v , s]$ must have a common edge since otherwise $v$ would have a uniform $\vartheta$-path of length less than $r$. 
				Let $g_1 = \set{h_1, h_1'}$ be $\partial(Z \ on \ P_w[v , s])$. 
				Then, based on \Cref{lem:sep}, $g_1$ is a separator of $Z$ and $P_w[f, s]$. 
				Hence, the concatenation of $P_w[f , h_1]$ and $Z[h_1 , s]$ is a shortest uniform $1$-path of $s$ that contains $v$. 
				Let $\calS'$ be the set of all shortest uniform $1$-paths of $s$ that contain $v$. 
				
				Let us now consider the following two cases separately: (recall that $T$ is the set of nodes that have two adjacent $0$-edges on current $P_w$ or $P_u$.)
				 
				\begin{enumerate} 
					\item[(a$'$)] $P_u\cap T = \emptyset$ \\
						Here we show that there is some path in $\calS'$ that is disjoint from $P_u[f , u]$. 
						We first show that there is a path in $\calS'$ that does not have a common edge with $P_u[f , u]$.
						This implies that the path dose not contain any node in $P_u[f , u)$ since the path is an alternating path.
						Then, we will show that the path does not also contain $u$. 
						
						For the sake of contradiction, let us assume that there is no path in $\calS'$ that has no common edge with path $P_u[f , u]$. 
						Let $g_2=\set{h_1, h_1'}\in P_u[f , u]$ be the closest edge to $u$ such that $g_2$ is $\partial(S' \ on \ P_u[f , u])$ for some $S'\in \calS'$. 
						Path $P_u[f , u]$ is a shortest uniform $(1-\varphi)$-path of $u$. 
						Moreover, $S'$ is a shortest uniform $1$-path of $s$ with length less than the length of a shortest $(1-\varphi)$-path of $w$. 
						Hence, $|P_u[f , u]| + |S'| < tn(e) < tn(v)$. 
						Then, based on \Cref{lem:sep}, $g_2$ is a separator. 
						Therefore, $L_1':= P_u[f , h_2] \circ S'[h_2 , s]$ is a shortest uniform $1$-path of $s$ that does not contain $v$. 
						Since $|L_1'| + |P_w[f , s]| = tn(s) < tn(e) < tn(v) = r + r'$ and $|P_w[f , v]| = r'$, the concatenation of $L_1'$ and $P_w[v , s]$ is an alternating walk of length less than $r$ from $f$ to $v$ that contains a $\vartheta$-edge of $v$. 
						Therefore, this concatenation cannot be a path. 
						Hence, $P_w[v , s]$ and $L_1'$ must have a common edge. 
						Then, based on \Cref{lem:sep}, the closest common edge of $P_w[v , s]$ and $L_1'$ to $f$ on $L_1'$ is a separator of $L_1'$ and $P_w[f , s]$.
						Let $g_3 = \set{h_3, h_3'}$ be the edge. 
						Then, $L_2' := P_w[f , h_3] \circ L_1'[h_3 , s]$ is a shortest uniform $1$-path of $s$ that contains $v$. 
						Note that $L_2'[h_3 , s]$ is the same as $L_1'[h_3 , s]$ and the same as $S'[h_3 , s]$. 
						Therefore, since $g_2$ is a separator of $S'$ and $P_u$, $L_2'[h_3 , s]$ has no common edge with $P_u[f , h_2]$. 
						Moreover, $L_2'[h_3 , s]$ has no common edge with $P_u[h_2 , u]$ as otherwise it contradicts the choice of $g_2$. 
						Therefore, $L_2'$ is a shortest uniform $1$-path of $s$ that contains $v$ and has no common edge with $P_u[f , u]$.   
						
						It is left to show that $L_2'$ does not also contain $u$. 
						Observe that since $P_u[f , w]$ is alternating, $u$ has an adjacent $1$-edge on $P_u$. 
						For the sake of contradiction, let us assume that $L_2'$ contains $u$, and consequently it must contain the adjacent $1$-edge of $u$. 
						All the nodes in $L_2'[f , s)$ have reachability of less than $R(s)$, and hence less than $R(w)$. 
						Therefore, $L_2'[f , s)$ cannot contain $w$, and consequently cannot also contain $\set{u, w}$. 
						Therefore, the $1$-edge of $u$ is on $P_u[f , u]$, and hence $L_2'$ must contain an edge on $P_u[f , u]$, which is contradictory.  
						As a result, $L_2'$ does not contain $u$.

					\item[(b$'$)] $P_u\cap T \neq \emptyset$ \\
						Let node $s'\in P_u \cap T$ be the closest node to $f$ on $P_u$. 
						Note that reachability of every node in $P_u(s' , u]$ is greater than $1$-reachability of $s'$. 
						Moreover, every node in any shortest uniform $1$-path of $s$ has reachability at most the $1$-reachability of $s$. 
						Therefore, since the $1$-reachability of $s$ is smaller than that of $s'$, any shortest uniform $1$-path of $s$ has no node in $P_u(s' , u]$.
						We need to show that there exists a path in $\calS'$ that has no node in $P_u[f , s']$. 
						
						Let $s''$ be the matched neighbor of $s'$.  
						Then the concatenation of $P_u[f , s']$ and $\langle s', s'' \rangle$ is a shortest uniform $1$-path of $s''$. 
						Observe that $s''$ cannot be the same node as $s$ since otherwise $P_u[f , s']\circ \langle s, s' \rangle \circ P_w[v , s]$ would be a uniform $\vartheta$-path of length less than $r$ of $v$. 
						Then, we argue similarly to part (a$'$) to show that there is a path in $\calS'$ that does not have a common edge with $P_u[f , s'] \circ \langle s', s'' \rangle$. 
						This concludes that that there is a path in $\calS'$ that is disjoint from $P_u[f , s']$.						 
				\end{enumerate}				 
		\end{enumerate}
\end{proof}

\subsection{Adaptation to the CONGEST model}\label{sec:congest-adapt}
	In this section, we employ randomness to adapt the DFNC algorithm to the \CONGEST model and prove \Cref{lem:path}. 
\begin{proof}[Proof of \Cref{lem:path}]

	Two types of messages are sent in the DFNC algorithm execution; the tokens and the flow messages. 
	Over every edge, at most one token is sent in each direction, which is the ID of some free node in the network. 
	Therefore, the dissemination of the tokens does not violate the congestion restriction of the \CONGEST model.  
	However, the described DFNC algorithm utilizes large flow messages. 
	First note that a flow message might contain a large number of flows (i.e., key-value pairs). 
	Moreover, since every time a node receives a flow, it divides its value in partitioning and forwarding the flow to its predecessors, the value of a flow might become a very small real number that needs a large number of bits to be represented.  
	In this section, we modify the DFNC algorithm to resolve this problem and only use $O(\log n)$-bit flow messages, and we show that this modification does not influence the desired effects of the DFNC algorithm. 
	
	We only modify the content of the flow messages, but all other rules and regulations regarding token dissemination and flow forwarding remain the same. 
	In the modified DFNC algorithm, the flow message that is sent by a node to its neighbor is only an integer in the ring of integers modulo $\gamma$, i.e., $\mathbb{Z}/\gamma \mathbb{Z}$, where $\gamma:=n^c$ for some large enough constant $c$.  
	That is, all flows of different edges that are sent over an edge in a single round are together aggregated and replaced with a single integer. 
	At the beginning of the execution, every node $v$ sends a distinct uniformly at random chosen integer in $\mathbb{Z}/\gamma \mathbb{Z}$, denoted by $\tau_{v, e}$, over each of its adjacent edges $e$. 
	Now let us consider a flow generation event over an arbitrary edge $e= \set{u,w}$, where $u$ and $w$ send tokens to each other in rounds $r_u$ and $r_w$ respectively.
	Without loss of generality, let $u<v$. 
	Then, we define the flow generation as node $w$ receiving flow $\tau_{u,e}$ in round $r_u$ and $u$ receiving flow $(- \tau_{u,e})\mod{\gamma}$ in round $r_w$. 
	
	Regarding flow forwarding, let us consider an arbitrary node $v$.
	We explain in the following how a node processes its incoming flows and how it sends the processed flows. 
	Node $v$ has an output buffer $O_v(t)$ for every round $t$ which is initially set to $0$. 
	Regarding processing the incoming flows, let us assume that $v$ receives flows in an arbitrary round $r$. 
	Let $z$ be the sum modulo $\gamma$ of all the received flows by $v$ in round $r$. 
	If $z\neq 0 \mod{\gamma}$, then it updates its output buffers as follows. 
	If the edges over which $v$ receives flows in round $r$ and the edges connecting $v$ to its predecessors are all $0$-edges, then $v$ sets $O_v(r+d+1)$ to $\big(O_v(r+d+1) + z\big) \mod{\gamma}$, where $d$ is the difference of $v$'s shortest uniform $0$-paths and its shortest uniform $1$-paths. 
	Otherwise, $v$ sets $O_v(r+1)$ to $\big(O_v(r+1) + z\big)\mod{\gamma}$.  
	Now regarding forwarding the received and processed flows, let us consider an arbitrary round $r'$ for $v$.
	If $O_v(r') \neq 0$, then $v$ forwards flows to its predecessors as follows.
	Let $p$ be the number of $v$'s predecessors. 
	If $p=1$, $v$ just forwards $O_v(r')$ to its only predecessor. 
	Otherwise if $p>1$, $v$ selects one of its predecessors uniformly at random which we call the \textit{poor predecessor} of $v$. 
	It independently chooses $p-1$ uniformly random integers $\alpha_1, \dots , \alpha_{p-1}$ from $\mathbb{Z}/\gamma \mathbb{Z}$ and considers $\alpha_p$ to be $(O_v(r')-\sum_{j=1}^{p-1}\alpha_j) \mod{\gamma}$. 
	It forwards $\alpha_p$ to its poor predecessor and forwards $\alpha_1, \dots , \alpha_{p-1}$ to the rest of its predecessors, one to each. 	
	
	Recall that $\calE$ is the DFNC algorithm execution on $\calG$ and $\calM$. 
	Let $\mathcal{R}$ be the execution of the modified DFNC algorithm on $\calG$ and $\calM$. 
	Considering this flow circulation, when the sum modulo $\gamma$ of some flows is not $0$ in $\mathcal{R}$, it corresponds to a flow with positive value of less than $1$ in $\calE$ and otherwise to a flow with value $1$ in $\calE$. 
	Fix an arbitrary edge $e = \set{u, w}$ over which a flow is generated.
	Let us define layers $L_1(e), L_2(e), \dots$ on $e$ as follows. 
	$L_1(e):= \set{u, w}$, and for all $t>1$, $L_t$ is the set of predecessors of the nodes in $L_{t-1}$. 
	Moreover, let $s(e)$ denote the smallest integer such that $|L_{s(e)}| = 1$. 
	Note that for all edges $e$, there exists such integer $s(e)$ since the cluster center of $e$'s endpoints is the single element of the last layer of $e$. 
	Note that the single node in layer $L_{s(e)}$ does not send any integer that is influenced by the flow of $e$ since the sum modulo $\gamma$ of the generated integers regarding the flow of $e$ is $0$ and it is all received and assigned to be sent by the node in a single round due to \Cref{lem:discards}. 
	Therefore, it is ``discarded'' by the node (actually cancelled out) when the sum is $0\mod{\gamma}$. 
	
	Now let us define $L(e):= \bigcup_{j<s(e)} L_j$.
	The nodes in $L(e)$ are the nodes for which the sum of the received flows of $e$ is positive but less than $1$ in $\calE$. 
	We show by induction on the layer numbers that for these nodes the sum modulo $\gamma$ of the received flows are not $0$ in $\mathcal{R}$. 
	In addition, we show that sum modulo $\gamma$ of the received flows of $e$ by these nodes are actually uniformly at random chosen numbers in $\mathbb{Z}/\gamma \mathbb{Z}$. 
	This actually shows that these nodes would receive a positive non-zero modulo sum even if the received flows of $e$ are aggregated with other flows of different edges.
	 
	Regarding the induction base, both $u$ and $w$ receive uniformly at random chosen numbers from $\mathbb{Z}/\gamma \mathbb{Z}$ as flows over $e$. 
	Now regarding the induction step, let us assume that for any positive integer $t<s(e)$, for every node $v\in \bigcup_{j=1}^{t-1} L_j$, the modulo sum of the received flows of $e$ is a uniformly at random chosen number from $\mathbb{Z}/\gamma \mathbb{Z}$.
	Then, we show that for every node $v'\in L_t$, the sum of the received flows of $e$ is also a uniformly at random chosen number from $\mathbb{Z}/\gamma \mathbb{Z}$.
	Let us fix an arbitrary node $v'\in L_t$. 
	Then, it is enough to show that the sum of the flows of $e$ that $v'$ receives from its neighbors in $L_{t-1}$ is a uniformly at random chosen number from $\mathbb{Z}/\gamma \mathbb{Z}$.
	As pictured below, let the edge from every node in $L_{t-1}$ to its poor predecessor (if any) be a dotted lines while the edges to the rest of its predecessors be solid lines. 
	
	 \begin{figure}[hbt!] \label{fig:layers}
		\centering	
	\begin{tikzpicture}
		\node at (-4, 0) {$L_t$};
		\shade[shading=ball, ball color=black] (-3,0) circle (.08);
		\shade[shading=ball, ball color=black] (-2,0) circle (.08);
		\shade[shading=ball, ball color=black] (-1,0) circle (.08);
		\shade[shading=ball, ball color=black] (0,0) circle (.08);
		\shade[shading=ball, ball color=black] (1,0) circle (.08);
		\shade[shading=ball, ball color=black] (2,0) circle (.08);
		\shade[shading=ball, ball color=black] (3,0) circle (.08);
		\node at (1, .3) {$v'$};
		
		\node at (-4, -1.5) {$L_{t-1}$};
		\shade[shading=ball, ball color=black] (-2.5,-1.5) circle (.08);
		\shade[shading=ball, ball color=black] (-1.5,-1.5) circle (.08);
		\shade[shading=ball, ball color=black] (-.5,-1.5) circle (.08);
		\shade[shading=ball, ball color=black] (.5,-1.5) circle (.08);
		\shade[shading=ball, ball color=black] (1.5,-1.5) circle (.08);
		\shade[shading=ball, ball color=black] (2.5,-1.5) circle (.08);
		
		\draw[color=red, >=stealth, ->, line width=.3mm] (-2.5,-1.45) - - (-3,-.05);		
		\draw[color=red, >=stealth, ->, line width=.3mm] (-2.5,-1.45) - - (1,-.05);
		\draw[color=blue, dashed, >=stealth, ->, line width=.3mm] (-1.5,-1.45) - - (-2,-.05);	
		\draw[color=red, >=stealth, ->, line width=.3mm] (-1.5,-1.45) - - (-3,-.05);	
		\draw[color=red, >=stealth, ->, line width=.3mm] (-1.5,-1.45) - - (0,-.05);
		\draw[color=red, >=stealth, ->, line width=.3mm] (-1.5,-1.45) - - (-1,-.05);	
		\draw[color=red, >=stealth, ->, line width=.3mm] (-0.5,-1.45) - - (-1,-.05);
		\draw[color=red, >=stealth, ->, line width=.3mm] (0.5,-1.45) - - (1,-.05);
		\draw[color=red, >=stealth, ->, line width=.3mm] (0.5,-1.45) - - (1,-.05);	
		\draw[color=blue, dashed, >=stealth, ->, line width=.3mm] (1.5,-1.45) - - (2,-.05);	
		\draw[color=red, >=stealth, ->, line width=.3mm] (1.5,-1.45) - - (1,-.05);	
		\draw[color=blue, dashed, >=stealth, ->, line width=.3mm] (2.5,-1.45) - - (0,-.05);
		\draw[color=red, >=stealth, ->, line width=.3mm] (2.5,-1.45) - - (3,-.05);
		\draw[color=blue, dashed, >=stealth, ->, line width=.3mm] (-0.5,-1.45) - - (1,-.05);
		\draw[color=blue, dashed, >=stealth, ->, line width=.3mm] (-2.5,-1.45) - - (-1,-.05);				
	\end{tikzpicture}	
		\caption{Layers $t-1$ and $t$ of edge $e$ in execution $\mathcal{R}$.}
		\label{fig:longer-path}
	\end{figure}
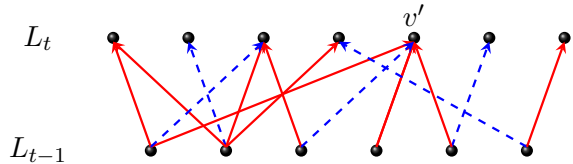
	
	Note that over each solid edge a uniformly at random chosen number from $\mathbb{Z}/\gamma \mathbb{Z}$ is sent to the nodes in layer $t$. 
	Hence, if node $v'$ has at least one incoming solid edge from layer $t-1$, then despite other received flows, the modulo sum of the received flows by $v'$ is a uniformly random number in $\mathbb{Z}/\gamma \mathbb{Z}$. 
	Now let us consider the case when all the incoming edges to $v'$ from layer $t-1$ are dotted edges. 
	Then, a node in $L_{t - 1}$ that has a dotted edge to $v'$ definitely has a solid edge to some other node in $L_t$. 
	Therefore, the sum of the flows of $e$ received by the nodes in $L_{t}\setminus{v'}$ from the nodes in $L_{t-1}$ is a uniformly random number in $\mathbb{Z}/\gamma \mathbb{Z}$. 
	Moreover, it is easy to see that the sum of all the received flows of $e$ by the nodes in $L_{t-1}$ is $0 \pmod{\gamma}$.
	Hence, the total sum of the received flows of $e$ by the nodes in $L_t$ is also $0 \mod{\gamma}$. 
	Therefore, since the sum of the received flows of $e$ by the nodes in $L_t\setminus \set{v'}$ is a uniformly random number in $\mathbb{Z}/\gamma \mathbb{Z}$, the sum of the received flows of $e$ by node $v'$ from the nodes in $L_{t-1}$ must also be a uniformly random number in $\mathbb{Z}/\gamma \mathbb{Z}$. 
	
	We showed that the modified DFNC algorithm that is represented in this proof maintains the desired effects of the flow circulation in the DFNC algorithm by just changing the flows content. 
	Considering \Cref{lem:DFNC-proof}, it thus concludes the proof of \Cref{lem:path}. 
	
      \end{proof}


\bibliographystyle{alphaabbr}
\bibliography{references}

\end{document}